\pgfplotsset{compat=1.17}
\tikzset{>=stealth}
\newcommand{\id}{\textbf{\textit{I}}}
\newcommand{\x}{\times}
\declaretheoremstyle[
  spaceabove=1em, spacebelow=1em,
  headfont=\bfseries,
  notefont=\bfseries, notebraces={(}{)},
  bodyfont=\itshape,
  postheadspace=0.5em,
  qed={{\scriptsize\ensuremath{\blacktriangleleft}}}
]{thmlike}
\declaretheoremstyle[
  spaceabove=1em, spacebelow=1em,
  headfont=\normalfont,
  notefont=\normalfont, notebraces={(}{)},
  bodyfont=\normalfont,
  postheadspace=1em,
  qed=$\blacksquare$
]{prooflike}
\declaretheoremstyle[
  spaceabove=1em, spacebelow=1em,
  headfont=\bfseries,
  notefont=\bfseries, notebraces={(}{)},
  bodyfont=\normalfont,
  postheadspace=0.5em,
  qed={{\scriptsize\ensuremath{\blacktriangleleft}}}
]{deflike}
\declaretheorem[style=thmlike, numberwithin=section]{theorem}
\declaretheorem[style=thmlike, sibling=theorem]{corollary}
\declaretheorem[style=deflike, sibling=theorem]{definition}
\declaretheorem[style=deflike, sibling=theorem]{lemma}
\declaretheorem[style=thmlike, sibling=theorem]{proposition}
\declaretheorem[style=deflike, sibling=theorem]{example}
\declaretheorem[style=deflike, sibling=theorem]{notation}
\declaretheorem[style=deflike, sibling=theorem]{remark}
\newcommand*\diff{\mathop{}\!\mathrm{d}}
\newcommand{\bi}[1]{{}^b\mathrm{#1}}
\newcommand{\ee}[1]{{}^E\mathrm{#1}}
\renewcommand{\vec}[1]{\bm{#1}}
 \def\@textbottom{\vskip \z@ \@plus 1pt}
 \let\@texttop\relax
\tikzset{>=latex} 
\tikzset{declare function={
  penrose(\x,\c)  = {\fpeval{2/pi*atan( (sqrt((1+tan(\x)^2)^2+4*\c*\c*tan(\x)^2)-1-tan(\x)^2) /(2*\c*tan(\x)^2) )}};
  penroseu(\x,\t) = {\fpeval{atan(\x+\t)/pi+atan(\x-\t)/pi}};
  penrosev(\x,\t) = {\fpeval{atan(\x+\t)/pi-atan(\x-\t)/pi}};
  kruskal(\x,\c)  = {\fpeval{asin( \c*sin(2*\x) )*2/pi}};
}}
\def\tick#1#2{\draw (#1) ++ (#2:0.04) --++ (#2-180:0.08)}
\begin{document}



\title{Hamiltonian facets of classical gauge theories on $E$-manifolds}

\author{Pau Mir}
\address{Pau Mir, Laboratory of Geometry and Dynamical Systems, Departament of Mathematics, Universitat Polit\`{e}cnica de Catalunya, Avinguda del Doctor Mara\~{n}on 44-50, 08028, Barcelona 
\it{e-mail: pau.mir.garcia@upc.edu}}
\thanks{Pau Mir is supported by La Caixa Inphinit grant. The authors are supported by the project PID2019-103849GB-I00 of MCIN/ AEI /10.13039/501100011033.}

\author{Eva Miranda}
\address{Eva Miranda, Laboratory of Geometry and Dynamical Systems, Departament of Mathematics-IMTech, Universitat Polit\`{e}cnica de Catalunya, Avinguda Doctor Mara\~{n}on 44-50, 08028, Barcelona \& CRM Centre de Recerca Matem\`{a}tica, Campus de Bellaterra Edifici C, 08193 Bellaterra, Barcelona
\it{e-mail: eva.miranda@upc.edu}}
\thanks{Eva Miranda is supported by the Catalan Institution for Research and Advanced Studies via an ICREA Academia Prize 2016 and ICREA Academia Prize 2021. Eva Miranda is also supported by the Spanish State Research Agency, through the Severo Ochoa and Mar\'{\i}a de Maeztu Program for Centers and Units of Excellence in R\&D (project CEX2020-001084-M).}

\author{Pablo Nicolás}
\address{Pablo Nicolás, Laboratory of Geometry and Dynamical Systems, Departament of Mathematics, Universitat Polit\`{e}cnica de Catalunya, Avinguda del Doctor Mara\~{n}on 44-50, 08028, Barcelona
\it{e-mail:pablo.nicolas@estudiantat.upc.edu}}
\thanks{Pablo Nicolás is supported by a collaboration grant of the Spanish Ministry of Education and by a Fundació Cellex grant within the CFIS programme.}

\begin{abstract} Manifolds with boundary, with corners, $b$-manifolds and foliations model configuration spaces for particles moving under constraints and can be described as $E$-manifolds. $E$-manifolds were introduced in \cite{NestTsyganEMan} and investigated in depth in \cite{MirandaScott}. In this article we explore their physical facets by extending gauge theories to the $E$-category. Singularities in the configuration space of a classical particle can be described in several new scenarios unveiling their Hamiltonian aspects on an $E$-symplectic manifold. Following the scheme inaugurated in \cite{WeinsteinUniversal}, we show the existence of a universal model for a particle interacting with an $E$-gauge field. In addition, we generalize the description of phase spaces in Yang-Mills theory as Poisson manifolds and their minimal coupling procedure, as shown in \cite{MontgomeryThesis}, for base manifolds endowed with an $E$-structure. In particular, the reduction at coadjoint orbits and the shifting trick are extended to this framework. We show that Wong's equations, which describe the interaction of a particle with a Yang-Mills field, become Hamiltonian in the $E$-setting. We formulate the electromagnetic gauge in a Minkowski space relating it to the proper time foliation and we see that our main theorem describes the minimal coupling in physical models such as the compactified black hole.
\end{abstract}

\maketitle

\section{Introduction}

Melrose revolutionized the calculus on manifolds with boundary with a proof of the celebrated Atiyah-Patodi-Singer theorem for manifolds with boundary. In \cite{MelroseAPS} the classical proof of Atiyah-Singer theorem was twisted with a new complex of forms, and the ideas introduced there by Melrose inaugurated the theory of $b$-calculus in the nineties.

Several applications of this discovery yielded a deep understanding of deformation quantization of symplectic manifolds with boundary. Nest and Tsygan, for instance, developed in \cite{NestTsyganBMan} their deformation quantization à la Fedosov, which is controlled by a second cohomology group like in the symplectic case. This cohomology group with coefficients in the formal series of the Planck constant $\hbar$ is nothing but the celebrated $b$-cohomology of Melrose.

This modern approach to manifolds with boundary or \emph{$b$-manifolds} allows us to visualize them as manifolds with ``augmented reality'', encoded in $b$-calculus and with topology deciphered by $b$-cohomology. 

In this new scenario, the standard cotangent bundle is replaced by the $b$-cotangent bundle, which encodes information about the boundary. This divertissement can be taken to higher levels of complexity and the tangent bundle can be rapidly replaced by other Lie algebroids. Nest and Tsygan explore this idea in \cite{NestTsyganEMan} by introducing $E$-manifolds and determine the deformation quantization of such manifolds in terms of $E$-cohomology. This cohomology is also at the core of classification theorems on $E$-manifolds and in \cite{MirandaScott} it is provided a Moser theorem where deformations of the geometry are measured by the second cohomology class of the $E$-complex.

Nevertheless, those advances on the geometry of $E$-manifolds have been not been encompassed with physical interpretations of this $E$-mirror. The purpose of this article is precisely to fill in this gap in the literature and give new physical applications to the progress in the investigation on $E$-manifolds.

In this paper we develop gauge theories on $E$-manifolds by broadening the principal bundles machinery to the $E$-category. Principal connections on a principal bundle associated to a Lie group are the tool that makes it possible to model \emph{minimal coupling}, the interaction between particles and gauge fields, which is the main object of study of quantum field theory. Techniques of symplectic geometry were already used by Sternberg in \cite{SternbergMinCoup} to write the equations of motion of a ``classical particle'' in the presence of a Yang-Mills field, a special type of gauge field. These techniques, which have had a prominent role in classical gauge theories, are extended to $E$-symplectic geometry.


The extension of gauge theories to $E$-manifolds opens the study of dynamics near infinity using \emph{compactification techniques} (see \cite{MirandaOms2} and \cite{MirandaOmsPeralta}), that enables to identify escape orbits at infinity as singular periodic orbits. Other applications to dynamics and PDEs can be found in \cite{Jaquette22} and \cite{JaquetteLessardTakayasu22}. By considering gauge theories over $E$-manifolds, similar techniques can be applied to situations where the forces inducing the dynamics arise from the interaction of particles with gauge fields. In this new scenario, instead of the McGehee change of coordinates used in \cite{MirandaOms2}, we introduce a compactification of a stationary black hole following the line inaugurated by Penrose~\cite{PenroseConformal} (see example \ref{ex:penroseblackhole}), which is not conformal. We also compute the Hamiltonian version of the equations of motion of a charged particle under the interaction with an electromagnetic field in a model which incorporates the action of gravity and electromagnetic fields in a relativistic setting. Although the physical momenta are not preserved due to coupling with the electric charges, trajectories at infinity always stay at infinity.


This procedure opens a new path for applying similar techniques as in \cite{MirandaOms2} and \cite{MirandaOmsPeralta} to prove existence of \emph{escape orbits} in the relativistic scenario. However, the purpose of this article is to set up the theory as a gauge theory over $E$-manifolds and investigate its Hamiltonian facets. Hence, applications of the model to the investigation of escape orbits mimicking \cite{MirandaOmsPeralta} will be considered in a future work.

Similar compactification methods have also found application in other very different contexts such as in the study of statistical quantities for $\operatorname{SU}(2)$-Yang-Mills theories. For instance, in \cite{YangMillsT2R2} compactifications are used to model two systems in a $(3+1)$ dimensional Minkowski spacetime. The models are shown to exhibit two phase transitions related to the breaking of center symmetries along the compactified directions.

$E$-manifolds include \emph{projective foliations} (regular and singular). Our framework is appropriate for them since many examples admit a description in terms of singularities (for instance the foliation described in example \ref{example:Minkowskispace}, known as \emph{proper time foliation}). Another instance is given in \cite{YMPropTimeFoliation}, where the $\operatorname{SO}(1, 3)$ symmetry is used to characterize singularities of the $E$-structures to find solutions for the Yang-Mills equations in a $\operatorname{SO}(1, 3)$ theory. This is an example where the coupling of symmetries and singularities of the theory provides a constraint on the theory strong enough to completely determine the outcome.



Classically, the minimal coupling was worked out for any gauge group over any manifold. For the particular case of the Lie group $\mathrm{U}(1)$ and the Minkowski space, this procedure yielded Lorentz equations for a charged particle in an electromagnetic field \cite{Frankel11}. In this paper, we extend this construction to a classical particle moving on a manifold with constraints (such as boundaries, corners or foliations) and formally incarnated as an $E$-manifold. In particular, we enlarge the construction of Sternberg to this new framework.

The minimal coupling procedure makes us revisit the Marsden-Weinstein reduction in this new scenario. The first main theorem is an extension of the shifting trick to consider reduction of coadjoint orbits for Hamiltonian actions on $E$-manifolds (Theorem \ref{thm:EShiftTrick}). This result allows us to visualize the classical theorem of Weinstein (confer Theorem \ref{thm:EWeinsteinUniversal}) as a reduction at a coadjoint orbit of the dual of the Lie algebra of the associated bundle.

In classical Yang-Mills theories, the fibres of the associated bundle $P \times_G \mathfrak{g}^* \simeq \mathrm{T}^*P/G$ are the dual Lie algebra $\mathfrak{g}^*$ and the elements of these vector spaces model the charges of the theory. For instance, in a $\operatorname{U}(1)$-Yang-Mills theory we have $\mathfrak{u}^*(1) \simeq \mathbf{R}$ and we recover the notion of electric charge. For more exotic Lie groups like $\operatorname{SU}(3)$, elements of $\mathfrak{su}^*(3)$ are called \emph{color charges} and model the interactions with strong nuclear forces. On the other hand, the sections of $P \times_G \mathfrak{g}^*$ are called \emph{matter fields} because they assign a charge (an element of $\mathfrak{g}^*$) to every point in the base manifold $M$.



For a general principal bundle $P$ with group $G$, the manifold $\mathrm{T}^*P/G$ has a Poisson structure. Indeed, as observed by Montgomery \cite{MontgomeryThesis}, its symplectic leaves are the spaces investigated by Weinstein in \cite{WeinsteinUniversal}. 
In this article we materialize it in the \emph{$E$-set-up} and the associated bundle $\ee{T}^*P/G$ becomes a vector bundle over the $E$-manifold. We show that the singular dynamics are constrained by the pullback $E$-structure in the reduction of the associated bundle $\ee{T}^*P/G$. The singularities in the base manifold are, in this sense, extended naturally to singularities in the \emph{enlarged} phase space. For the particular case of $b$-manifolds, this vector bundle was already completely described for the principal bundle $G\longrightarrow G/H$ (when $H$ is a closed Lie subgroup of $G$) in \cite{BraddellKiesenhoferMiranda}. 

Finally, we visualize the $E$-cotangent bundle as a universal space to represent Hamiltonian spaces by showing the existence of a universal model for the phase space of a particle interacting with an $E$-gauge field. This picture is completed using the minimal coupling as a way to exhibit these spaces as symplectic leaves of the ambient Poisson space, as done by Montgomery \cite{MontgomeryThesis}.

\subsection*{Organization of this paper}

In section \ref{sec:motivating} we expose a series of problems arising from physics which naturally fit into the framework of $E$-manifolds. We associate an $E$-structure to each of them, which encodes the presence of singular or constrained dynamics.

In section \ref{sec:intro} we rigorously define the notion of $E$-manifold and realize different examples from the literature, as $b$-manifolds, $c$-manifolds and regular foliations, as specific instances of $E$-manifolds. These examples also cover the motivating examples arising from physics introduced in section \ref{sec:motivating}. We also review some problems where gauge theory appears in manifolds with singularities.

In section \ref{sec:prelim} we study the algebroid structure of $E$-manifolds and contextualize it within the more general setting of foliation theory. We review basic concepts in Lie algebroids theory, such as Lie algebroid cohomology, their local description and their morphisms. We also introduce prolongations or pullbacks of Lie algebroids by submersions and recall the existence of products.

In section \ref{sec:symplgeom} we present the symplectic geometry of $E$-manifolds, which is inherited from the symplectic geometry of Lie algebroids. In subsection \ref{sec:canonical} we introduce the canonical symplectic form in the $E$-cotangent bundle, which generalizes the Liouville $b$-symplectic form of the $b$-cotangent bundle (see for instance~\cite{GuilleminMirandaPires}). We also show that, unlike for general Lie algebroids, cotangent lifts of $E$-diffeomorphisms are always defined and unique. Moreover, they preserve the canonical Liouville form and, consequently, lead to Hamiltonian group actions. We conclude the section with a review of the Marsden-Weinstein reduction introduced in \cite{MarreroReductionGut} and we prove a version of the shifting trick over $E$-manifolds (Theorem \ref{thm:EShiftTrick}).

We end up this article with section \ref{sec:GaugeTheory}, where we extend the fundamental results in gauge theories over the prolongation of $E$-structures to fibre bundles. We extend the classical theorem of Weinstein~\cite{WeinsteinUniversal} as well as the isomorphism with Sternberg's phase space to gauge theories over $E$-manifolds. The results in the framework of ``cotangent bundle reduction'' as coined in \cite{MarsdenCotangent} also extend to our setting through the $E$-shifting trick. We also show that the Poisson formulation of Wong's equations done by Montgomery in ~\cite{MontgomeryThesis} and the minimal coupling procedure extend to the $E$-category (Theorem \ref{thm:MinCoupMont}). We conclude the section presenting applications of the theorems to examples in physics. 




\section{Motivating examples from Physics} \label{sec:motivating}

In the this section we describe several motivating examples where there is a natural stratification associated to the physical problem. This natural stratification often leads to a Stefan foliation. A natural framework to describe these systems is the language of $E$-manifolds. Let us start giving some motivating examples:

\begin{example}[Space of geodesics on the Lorentz plane] 
Given a general pseudo-Riemannian manifold {$(M,g)$} and consider {$\mathcal{L}$} the space of all oriented non-parametrized geodesics. The space {$\mathcal{L}$} splits as $\mathcal{L}_{\pm}$, the space of \emph{space-like geodesics} ($g(\dot{\gamma},\dot\gamma)>0$) and \emph{time-like geodesics} ($g(\dot{\gamma},\dot\gamma)<0$), and $\mathcal{L}_0$, the space of \emph{light-like geodesics} so that $g(\dot\gamma,\dot\gamma)=0$. Khesin and Tabachnikov completely described the geometry of this problem in \cite{boristabachnikov}: the sets $\mathcal{L}_\pm$ are even dimensional and {symplectic}. $\mathcal{L}_0$ can be seen as the common boundary of $\mathcal{L}_\pm$ and has an induced \emph{contact} structure.


In dimension 2 (Lorentz plane) the set of light-time geodesics is one-dimensional and {$\mathcal{L}$} inherits a global Poisson structure. As we will see below in an explicit manner this structure is a $b$-Poisson structure.

Following \cite{boristabachnikov} we can easily describe the space of 
geodesics. For the metric $\diff s^2 = \diff x \diff y$ the light-like lines are the horizontal and vertical lines. The space-like lines have positive slope and the time-like have negative slopes so each space $\mathcal{L}_\pm$ has two
components. By taking coordinates on each quadrant we can get explicit expressions. Consider for instance space-like lines having the direction in
the first quadrant. Write the unit directing vector of a line as
$(\mathrm{e}^{-u}, \mathrm{e}^u)$, with $u \in \mathbf{R}$ then the perpendicular to the line
from the origin is $(\mathrm{e}^{-u}, -\mathrm{e}^u)$. These coordinates $(u, r)$ provide local charts of $\mathcal{L}_{+}$. In the same way coordinates can be chose in $\mathcal{L}_{-}$. By equating the slope of the line $(\mathrm{e}^{-u}, \mathrm{e}^u)$ to $(1,\varepsilon)$ we obtain the relation $\varepsilon=\mathrm{e}^{2u}$. Thus, in these (singular) coordinates the symplectic structure blows up and reads $\frac{1}{\varepsilon} \diff\varepsilon \wedge \diff u$ with $u=\frac{1}{2}\log(\varepsilon)$ (see remark 2.8 in \cite{boristabachnikov}). This structure with singularities is an example of $b$-symplectic manifold.
\end{example}

\begin{example}[Compactifying the restricted three-body problem] \label{ex:McGehee}
Consider the circular planar restricted three body problem where three bodies move attracted to each other, one of them being of negligible mass as described in \cite{delshams2019global}. As it was observed in \cite{kiesenhofermirandascott}, it is possible to associate a singular structure to this problem. Consider the symplectic form on $\mathrm{T}^{\ast} \mathbb{R}^2$ in polar coordinates,
\begin{equation*}
    \omega = \diff r \wedge \diff P_r + \diff \alpha \wedge \diff P_{\alpha},
\end{equation*}
and apply to it the non-canonical McGehee change of coordinates, given by $r = \frac2{x^2}$, without altering the momentum associated to $r$.

    
Recall that, after the change to polar coordinates, the Hamiltonian associated to the restricted circular three body problem is
\[H(r,\alpha, P_r, P_{\alpha}) = \frac{P_r^2}2 + \frac{P_{\alpha}^2}{2r^2} - U(r\cos\alpha, r\sin\alpha) .\]

In the new coordinates, the Hamiltonian has the expression:

\[H(x,\alpha, P_r, P_{\alpha}) = \frac{P_r^2}2 + \frac{x^4 P_{\alpha}}8 - U\left(\frac{2\cos\alpha}{x^2}, \frac{2\sin\alpha}{x^2}\right) .\]

Furthermore, if we consider the change $r = \frac{2}{x^2}$, then $\diff r = -\frac{4}{x^3} \diff x$, and this means that

\[\omega = -\frac{4}{x^3} \diff x\wedge \diff P_r + \diff \alpha\wedge \diff P_{\alpha}.\]

Thus, the non-canonical change of coordinates transforms the symplectic form into a symplectic form that blows-up close to the line at infinity. As proved in \cite{kiesenhofermirandascott} it is a $b^3$-symplectic form.
The resulting dynamical system is nevertheless well-defined, and provides information about the original problem.

Compactifying this system with the line at infinity has the added benefit of providing a description of the dynamics within the critical set $Z=\{x=0\}$.
The dynamics within $Z$ does not have a physical meaning, but its interplay with the dynamics close to them is a way to study the behaviour of the escape orbits in this context (see also \cite{MirandaOms}). 
\end{example}

Besides configuration spaces with natural singularities, there are other physical systems which can be described in terms of appropriate $E$-manifolds.

\begin{example}[A compactification following Penrose]

The metric of Schwarzschild arises as the most general solution to Einstein's equations of motion with spherical symmetry and in the vacuum. In spherical coordinates $(t, r, \theta, \varphi)$, the metric is written as,
\begin{equation*}
    g = - \bigg( 1 - \frac{2M}{r} \bigg) \diff t^2 + \bigg( 1 - \frac{2M}{r} \bigg)^{-1} \diff r^2 + r^2 \diff \Omega^2,
\end{equation*}
where we have written $\diff \Omega^2 = \diff \varphi^2 + \sin^2 \varphi \diff \theta^2$. The coordinate $r$ is only valid in the range $2M < r < + \infty$, and this parametrization only describes the exterior of spherically symmetric objects. Our considerations are concerned with the coordinates $t$ and $r$. We consider the metric $g_\perp$, where the term containing $\diff \Omega^2$ is dropped. For convenience, we consider the auxiliary function $h(r) = 1 - \frac{2M}{r}$.

    After performing the change of coordinates $v = t + r$ and $w = t - r$, the metric reads:
\begin{equation*}
    g = \frac{1}{4} \Big( \frac{1}{h} - h \Big) \diff v^2 - \frac{1}{2} \Big( \frac{1}{h} + h \Big) \diff v \diff w + \frac{1}{4} \Big( \frac{1}{h} - h \Big) \diff v^2.
\end{equation*}
Moreover, the condition $r \geqslant 0$ is equivalent to $v \geqslant w + 4M$.

The compactification of the configuration space is achieved by defining $\alpha = \arctan v$ y $\beta = \arctan w$. The range of both coordinates is $- \pi/2 \leqslant \alpha \leqslant \pi/2$ and $-\pi/2 \leqslant \beta \leqslant \pi/2$ and equality can be attained. The compactified space is a manifold with corners. The condition $v \geqslant w + 4M$ is equivalent to $\tan \alpha \geqslant \tan \beta + 4M$. 

The region spanned by the coordinates $\alpha, \beta$ will be denoted by $N$. In these new coordinates, the metric reads
\begin{equation*}
    g_\perp = g = \frac{1}{4} \Big( \frac{1}{h} - h \Big) \sec^4 \alpha \diff \alpha^2 - \frac{1}{2} \Big( \frac{1}{h} + h \Big) \sec^2 \alpha \sec^2 \beta \diff \alpha \diff \beta + \frac{1}{4} \Big( \frac{1}{h} - h \Big) \sec^4 \beta \diff \beta^2.
\end{equation*}

%

\begin{figure}
\centering
\begin{tikzpicture}[scale=3]
\def\ta{tan(90*1.0/(4+1))}
  \def\tb{tan(90*2.0/(4+1))}
  \coordinate (O) at ( 0, 0);
  \coordinate (S) at ( 0,-1);
  \coordinate (N) at ( 0, 1);
  \coordinate (E) at ( 1, 0);
  \coordinate (X) at ({penroseu(\tb,\tb)},{penrosev(\tb,\tb)});
  \coordinate (X0) at ({penroseu(\ta,-\tb)},{penrosev(\ta,-\tb)});
  \draw[->] (-0.1,0) -- (1.2,0) node[below right=-2] {$u$};
  \draw[->] (0,-1.1) -- (0,1.2) node[left=-1] {$v$};
  \node[above=1,above left=0,blue!50!black,align=center] at (O)
    {$r=0$};
  \node[left=6,above right=-2,blue!50!black,align=center] at (1,0.04)
    {$r=+\infty$};
  \node[above=6,below right=0,align=left] at (0.04,-1)
    {$t=-\infty$};
  \node[below=6,above right=0,align=left] at (0.04,1)
    {$t=+\infty$};
  \draw[blue!80!black!60,line width=0.4] (N) -- (S);
  \foreach \i [evaluate={\c=\i/(4+1); \ct=tan(90*\c);}] in {1,...,4}{
    \draw[blue!40!red!80!black,line width=0.4,samples=30,smooth,variable=\t,domain=0.001:1]
      plot(\t,{-penrose(\t*pi/2,\ct)})
      plot(\t,{ penrose(\t*pi/2,\ct)});
    \draw[blue!80!black!60,line width=0.4,samples=30,smooth,variable=\r,domain=-1:1]
      plot({penrose(\r*pi/2,\ct)},\r);
  }
  \draw[blue!50!black] (N) -- (E) -- (S) -- cycle;
  \tick{E}{90} node[right=4,below=-1] {$+\pi/2$};
  \tick{S}{ 0} node[left=-1] {$-\pi/2$};
  \tick{N}{ 0} node[left=-1] {$+\pi/2$};
\end{tikzpicture}
\caption{The classical Penrose diagram for radius between $0$ and $+\infty$. It represents the conformal compactification of the space-time.}
\label{fig:classicalPenrose}
\end{figure}
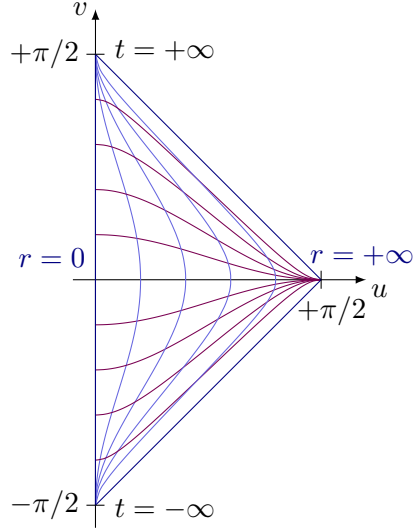

The secant function blows up quadratically at $\alpha, \beta = \pi/2$. This observation motivates us to consider  the set of vector fields which vanish quadratically at the boundary, described locally by values neighbouring $\alpha = \pi/2$. 
\end{example}

\begin{example}[The Minkowski space]
\label{example:Minkowskispace}
    Consider a Minkowski space $(M, g)$ of dimension $n + 1$ and signature $(-, +, \cdots, +)$. Consider an orthonormal basis $\{E_0, \ldots, E_n\}$ and extend it to a global chart in $M$, giving a global inertial frame of reference. Without loss of generality, assume that $E_0$ is time-like (in our convention, $g(E_0, E_0) = -1$). We can consider now the singular foliation of $\mathrm{T}M$ given by level sets of the kinetic energy, $\mathcal{F}_k = \{X \in \mathrm{T}M \mid g(X, X) = k \}$. As the norm of a geodesic is preserved, the geodesic spray is tangent to the leaves of the foliation $\mathcal{F}$. This implies that we can regard the natural phase space of the physical system $(M, g)$ as the tangent to the leaves $\mathrm{T}\mathcal{F}$, instead of $\mathrm{T}M$.
    
    The particular case of $M=\mathbf{R}^4$ with a Minkowski metric $g$ is the framework of special relativity. Given an orthonormal basis $\{E_0, E_1, E_2, E_3\}$ in which the matrix of $g$ is
    \begin{equation*}
        \begin{pmatrix} -1 & 0 & 0 & 0 \\ 0 & 1 & 0 & 0 \\ 0 & 0 & 1 & 0 \\ 0 & 0 & 0 & 1 \end{pmatrix},
    \end{equation*}
    one can construct a global coordinate system $(t, x_1, x_2, x_3)$ of $\mathbf{R}^4$ satisfying
    \begin{equation*}
        \frac{\partial}{\partial t}=E_0, \quad \frac{\partial}{\partial x_i}=E_i,\ i=1, 2, 3.
    \end{equation*}
    In these \emph{space-time coordinates}, the metric $g$ can be expressed as the $2$-form
    \begin{equation*}
        g=-(\diff t)^2 + (\diff x_1)^2 + (\diff x_2)^2 + (\diff x_3)^2.
    \end{equation*}
    The geodesics of the Minkowski space are straight lines with respect to the space-time coordinates.
    
    At any point $p\in\mathbf{R}^4$, a vector $X\in \mathrm{T}_p\mathbf{R}^4$ can be:
    \begin{enumerate}
    \item \emph{time-like}, if $g(X,X)<0$,
    \item \emph{null}, if $g(X,X)=0$,
    \item \emph{space-like}, if $g(X,X)>0$.
    \end{enumerate}
    The foliation $\mathcal{H}$ of $\mathrm{T}\mathbf{R}^4\cong \mathbf{R}^8$ given by the level sets $\mathcal{H}_k = \{X \in \mathrm{T}\mathbf{R}^4 \mid g(X, X) = k \}$ is singular. In detail, the leaves $\mathcal{H}_k$ are divided into three different types of hyper-surfaces:
    \begin{enumerate}
        \item \emph{time-like hyper-surfaces}, the two-sheet hyperboloids corresponding to $\mathcal{H}_k$ for any $k>0$.
        \item \emph{null hyper-surface}, the (singular) double cone $\mathcal{H}_0$.
        \item \emph{space-like hyper-surfaces}, the one-sheet hyperboloids corresponding to $\mathcal{H}_k$ for any $k<0$.
    \end{enumerate}
    The physical implications of the postulates of special relativity (no absolute time and no travel faster than the speed of light) are encoded in the foliation $\mathcal{H}$. Any curve $\alpha: I \to \mathbf{R}^4$ is called \emph{time-like} if $\Dot{\alpha}(t)$ is a time-like vector at $\alpha(t)$ for all $t$. Then, the world-line of any observer is a time-like curve (see Figure \ref{fig:lightcone} and the motion of an inertial (non-accelerating) observer follows a time-like geodesic. Light, on the other hand, moves on null geodesics, i.e., on the cone.
    \begin{figure}[ht!]
    \centering
    \begin{tikzpicture}[scale=0.8]
            \coordinate (A) at (0,0);
            \draw[thick] (A) +(135:5) -- +(-45:5);
            \draw[thick] (A) +(45:5) -- +(-135:5);
            \draw[dashed,->] (A) +(180:5) -- +(0:5);
            \draw[dashed,->] (A) +(270:5) -- +(90:5);
            \fill (A) circle (0.1);
            \draw[thick] let \p1=(A) in (\x1,{\y1+3.57cm}) ellipse (3.53 and 0.5);
            \draw[thick] let \p1=(A) in (\x1,{\y1-3.57cm}) ellipse (3.53 and 0.5);
            \draw[thick,variable=\t,domain=-3.7:3.7,samples=500,blue,->]
            plot ({-0.5*\t-\t*\t*0.1+\t*\t*\t*0.05},{\t});
            \node at (0,5.3) {Time};
            \node at (5.6,0) {Space};
            \node at (2.5,2) {$\mathcal{H}_0$};
            \node[blue] at (-1.5,2.5) {$\alpha(t)$};
            \node at (1.1,0.3) {Present};
            \node at (0.9,3.6) {Future};
            \node at (0.9,-3.6) {Past};
    \end{tikzpicture}
    \caption{The postulates of special relativity induce a foliation $\mathcal{H}$ in the space-time coordinates of the Minkowski space $(\mathbf{R}^4,g)$. The natural phase space of the physical system is $\mathrm{T}\mathcal{H}$, an $E$-manifold. The blue trajectory $\alpha(t)$ in the interior of the \emph{light cone} represents the world-line of an observer.}
    \label{fig:lightcone}
    \end{figure}
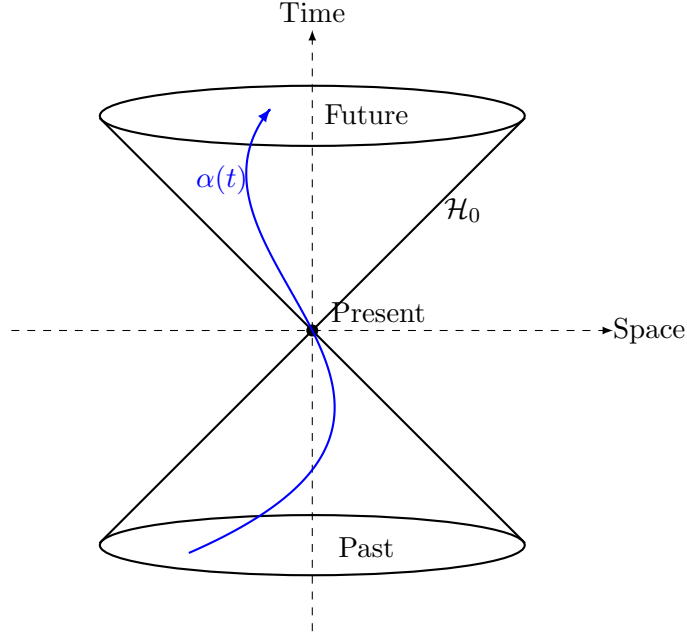
\end{example}

\begin{example}
    Spin Calogero-Moser systems are formulated on the cotangent bundle of a Lie group $G$, $\mathrm{T}^* G$. The natural phase space of a spin Calogero-Moser system is a symplectic leaf of the Poisson space $\mathrm{T}^* G/G$. The cotangent bundle can be trivialized as $\mathrm{T}^* G \simeq \mathfrak{g}^*$ using right-invariant vector fields. The Poisson space $\mathrm{T}^* G/G$ can be identified with $\mathfrak{g}^* \times_G G \simeq \mathfrak{g}^*$, and the natural projection $\pi\colon \mathfrak{g}^* \times G \longrightarrow \mathfrak{g}^*$ is a Poisson map, where the Poisson structure in $\Pi$ is the linear Poisson structure. Assume now that $\mathfrak{g}$ is semisimple, so that we have an identification $\mathfrak{g} \simeq \mathfrak{g}^*$ by means of the Killing form $\kappa$. Take now the adjoint representation of $\mathfrak{g}$ in $\mathfrak{g}^*$. The Hamiltonian function for the spin Calogero-Moser system is given by $H= \operatorname{tr}(x^2)$, where $x \in \mathfrak{g}^*$.
    
    We will now give an explicit computation of the Hamiltonian structure of spin Calogero-Moser systems for the Lie group $G = \mathfrak{su}^*(n)$. Under the identification with the Killing form, we can take elements $A \in \mathfrak{su}(n)$ as traceless and Hermitian matrices. We may trivialize $\mathrm{T}^*\mathfrak{su}^*(n) \simeq \mathfrak{su}^*(n) \times \mathfrak{su}^*(n)$, for which the moment map becomes $\mu(A, X) = [A, X]$. In parametrizing the space $\mathfrak{su}^*(n) \times \mathfrak{su}^*(n)$, we may assume that $A$ is always diagonal by conjugation with a matrix in $\mathrm{SU}(n)$. Assuming that $A = \operatorname{diag}(a_i)$ and $X = (x_{ij})$, fixing an image of the moment map $[A, X] = \mu = (\mu_{ij})$ amounts to the condition $x_{ij}(a_i - a_j) = \mu_{ij}$. Therefore, the Hamiltonian function is
    \begin{equation*}
        \operatorname{tr}(X^2) = \sum_{i,j}^n x_{ij} x_{ji} = \sum_{i = 1}^n x_{ii}^2 + \sum_{\substack{i,j = 1 \\i \neq j}}^n x_{ij}x_{ji} = \sum_{i = 1}^n x_{ii}^2 + \sum_{i,j = 1}^n \frac{\mu_{ij} \mu_{ji}}{(a_i - a_j)^2},
    \end{equation*}
    where, for convenience, we have taken $\mu_{ii} = 0$. This expression can be further simplified as $\mu_{ij} \mu_{ji} = \mu_{ij}^2 = \mu_{ji}^2$ from $\mu$ being Hermitian.
    The first term accounts for the kinetic energy of a system with $n$ different particles in a straight line, while the second term is an interaction potential dependent on the inverse square of the distance. Notice that this expression cannot be derived if $a_i = a_j$ for some $1 \leqslant i, j \leqslant n$; this remark is related to the Sjamaar-Lerman \emph{symplectic stratification}.
    
    
    For a survey on spin Calogero-Moser systems, consult \cite{ReshSCM}.
\end{example}

The previous examples show physical problems in which $E$-manifolds can be identified. The last couple of examples are specific realizations of more general phenomena.
\begin{itemize}
    \item The proper-time foliation introduced in example \ref{example:Minkowskispace} agrees with the Hamiltonian level sets $H^{-1}(k)$, for $k \in \mathbf{R}$. This is due to the fact that the Hamiltonian of the geodesic flow in a Riemannian manifold $(M, g)$ is the kinetic energy $K$. In general, the level set $H^{-1}(k)$ might not be a smooth manifold; however, we can define a regular foliation on a subset of $M$ by considering only the pre-images of regular values of $H$. This regular foliation has a naturally associated $E$-structure as we will see in example \ref{ex:RegFol}.
    \item The stratification of spin Calogero-Moser systems is a special case of the stratification of any Hamiltonian $G$-space (see, for instance, \cite{sjamaarlerman}). Assume that $(M, \omega)$ is a symplectic manifold and that $\rho\colon G \times M \longrightarrow M$ is a proper Hamiltonian action with moment map $\mu$. Sjamaar and Lerman proved in \cite{sjamaarlerman} that the reduced space $M_0$ is a stratified manifold with strata
    \begin{equation}
        M_0 = \bigsqcup_{H < G} (M_{(H)} \cap \mu^{-1}(0))/G.
    \end{equation}
    Here, the manifolds $M_{(H)}$ are the orbit types of the group action $\rho$, that is, the set of all points $p \in M$ with stabilizer $G_p$ conjugated to $H$.
\end{itemize}

\section{Preliminaries} \label{sec:intro}

We start recalling the basic definitions of $E$-manifolds. We refer the reader to \cite{NestTsyganEMan} and \cite{MirandaScott} for details.

\begin{definition} \label{def:EManDef}
    An \emph{$E$-manifold} is a pair $(M, E)$, where $M$ is a smooth manifold and $E \subseteq \operatorname{Vec}(M)$ is an involutive and locally finitely generated free $\mathcal{C}^\infty(M)$-submodule. We will call any such submodule an \emph{$E$-structure} on $M$.
\end{definition}

\begin{notation}
    In some cases, we will deal with several $E$-manifolds at the same time. To avoid any possible confusion we will commonly denote the $E$-structure of a manifold $M$ by $E_M$.
\end{notation}


\begin{example}[Manifolds with boundary, $b$-manifolds and $b^m$-manifolds]
\label{ex:bMan}
    One of the first generalizations of smooth manifolds are manifolds with boundary. Their differential calculus, called $b$-calculus, was developed by Melrose~\cite{MelroseAPS} with the aim of generalizing the Atiyah-Patodi-Singer theorem to manifolds with boundary. The differential structure of $b$-manifolds allows for a rich generalization of symplectic geometry to manifolds with boundary (see \cite{NestTsyganBMan}, for example). This realm is closer to the setting of classical symplectic geometry than to Poisson geometry; Guillemin, Miranda, and Pires proved that well-known results, like Moser's path method, also hold for manifolds with boundary~\cite{GuilleminMirandaPires}.
    
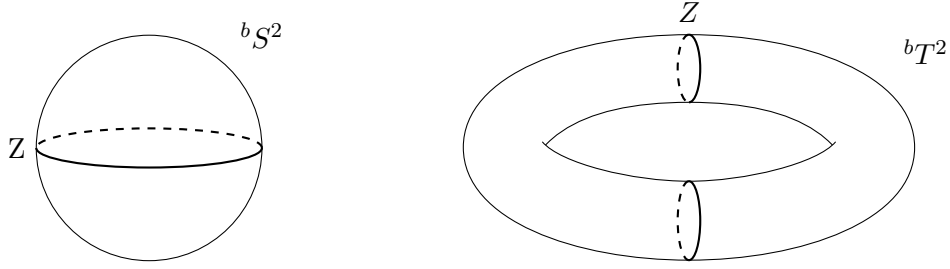
\begin{figure}[ht!]
\centering
\begin{tikzpicture}[scale=1.5]
    \def\phi{10};
    \draw (0, 0) circle (1) node[xshift=1.5cm,yshift=1.5cm] {$^bS^2$};
    \tikzset{plane/.estyle={cm={1, 0, 0, cos(90 + \phi), (0, 0)}}}
    \draw [thick,plane] (1,0) arc (0:180:1);
    \draw [thick,plane,dashed] (1,0) arc (360:180:1) node[left] {Z};
\end{tikzpicture}
\hspace{2cm}
\begin{tikzpicture}[scale=1.5]
    \draw[thick](1.5,.35) arc[start angle=90,end angle=-90,y radius=.35,x radius=.1];
    \draw[thick,dashed](1.5,.35) arc[start angle=90,end angle=-90,y radius=.35,x radius=-.1];
    \draw[thick](1.5,1.65) arc[start angle=90,end angle=-90,y radius=.3,x radius=.1] node[xshift=0cm,yshift=1.2cm] {$Z$};
    \draw[thick,dashed](1.5,1.65) arc[start angle=90,end angle=-90,y radius=.3,x radius=-.1];
    \draw (1.5,-.35) .. controls +(180:1) and +(-90:.65) .. ++(-2,1) .. controls +(90:.65) and +(180:1) .. ++(2,1) .. controls +(0:1) and +(90:.65) .. ++(2,-1) .. controls +(-90:.65) and +(0:1) .. ++(-2,-1);
    \draw (1.5,.35) .. controls +(180:.5) and +(-50:.25) .. ++(-1.3,.35) coordinate (b);
    \draw (1.5,.35) .. controls +(0:.5) and +(230:.25) .. ++(1.3,.35) coordinate (c) node[xshift=1.2cm,yshift=1.2cm] {$^bT^2$};
    \clip (1.5,.35) .. controls +(180:.5) and +(-50:.25) .. ++(-1.3,.35) -- ++(0,2) -| (c) .. controls +(230:.25) and +(0:.5) .. ++(-1.3,-.35);
    \draw (1.5,-.35) ++(0,-.6) .. controls +(180:1) and +(-90:.65) .. ++(-1.5,1) .. controls +(90:.65) and +(180:1) .. ++(1.5,1) .. controls +(0:1) and +(90:.65) .. ++(1.5,-1) .. controls +(-90:.65) and +(0:1) .. ++(-1.5,-1);
\end{tikzpicture}

\caption{The $b$-symplectic sphere $(^bS^2,Z=\{h=0\},\omega=\frac{dh}{h}\wedge d\theta)$ and the $b$-symplectic torus $(^bT^2,Z=\{\theta_1=0,\pi\},\omega=\frac{d\theta_1}{\sin(\theta_1)}\wedge d\theta_2)$ are examples of $2$-dimensional $b$-symplectic manifolds. They were studied by Olga Radko in \cite{Radko02} as compact oriented surfaces admitting a topologically stable Poisson structure.}
\label{fig:b-sphere}
\end{figure}

    The differential geometry of manifolds with boundary can be recovered from that of another similar structures, called $b$-manifolds, by a process of gluing~\cite{EvaSurgery}. A \emph{$b$-manifold} is a pair $(M, Z)$, where $M$ is a smooth manifold and $Z \subset M$ is an embedded hypersurface. Any $b$-manifold has a natural $E$-structure associated; the submodule of tangent vector fields to the submanifold $Z$. Such vectors are called \emph{$b$-vector fields}, and the set of all $b$-vector fields is denoted by $\bi{\operatorname{Vec}}(M)$. To see that $\bi{\operatorname{Vec}}(M)$ is an $E$-structure in $M$ take a point $p \in M$ and a coordinate chart $(U, \varphi)$ with coordinates $q_1, \ldots, q_n$ adapted to $Z$, that is, fulfilling $\varphi(U \cap Z) = \{q_1 = 0\}$. Under these assumptions, the local sections
    \begin{equation*}
        q_1 \frac{\partial}{\partial q_1}, \frac{\partial}{\partial q_2}, \ldots, \frac{\partial}{\partial q_n}
    \end{equation*}
    are generators of the module $\bi{\operatorname{Vec}}(U)$. This shows that $\bi{\operatorname{Vec}}(M)$ is locally free and involutive, proving that it is an $E$-structure.

    In the same way, having a smooth manifold $M$ and an embedded hypersurface $Z \subset M$ we may consider as elements of $E$ the vector fields which are tangent to $Z$ at order $m$; the set of all these fields is called the space of \emph{$b^m$-vector fields}. In a chart $(U, \varphi)$ with coordinates $\vec{q}$ adapted to $Z$, a set of local generators is
    \begin{equation}
        q_1^m \frac{\partial}{\partial q_1}, \frac{\partial}{\partial q_2}, \ldots, \frac{\partial}{\partial q_n}.
    \end{equation}
    This shows that $b^m$-vector fields give rise to different $E$-structures in $b$-manifolds.
\end{example}

\begin{example}[Manifolds with corners and $c$-manifolds] \label{ex:cMan}
    The previous example can be generalized to manifolds with intersections of higher order. We define a \emph{$c$-manifold} as a pair $(M, Z)$, where $M$ is a smooth manifold $M$ and $i\colon Z \longrightarrow M$ is an immersed hypersurface with self-transverse intersections (see Miranda and Scott~\cite{MirandaScott} for a more detailed description of the construction).
    
    By analogy with the idea that $Z$ accounts for the boundary of a manifold, self-transverse intersections can be understood as corners (hence the name $c$-manifolds). We consider the set of vector fields of $M$ which are tangent to $i(Z)$ and call them \emph{c-vector fields}. We can prove that the set of $c$-vector fields is an $E$-structure in $M$. For every point $p \in i(Z_k) \setminus i(Z_{k + 1})$ there exist coordinates $\vec{q}$ such that $i(Z) = \cup_{i \leqslant k} \{q_i = 0\}$. In these coordinates, we have the local generators
    \begin{equation*}
        q_1 \frac{\partial}{\partial q_1}, \ldots, q_k \frac{\partial}{\partial q_k}, \frac{\partial}{\partial q_{k + 1}}, \ldots, \frac{\partial}{\partial q_n}.
    \end{equation*}
    From this expression we trivially have that $c$-vector fields give rise to an $E$-structure in $M$.
\end{example}

\begin{example}[Spherical cotangent bundle compactification \cite{NestTsyganEMan}]
    Let $M$ be a smooth manifold and consider its cotangent bundle $\mathrm{T}^*M$. Over a local trivializing set $U \subset M$, we have that $\mathrm{T}^* M|_U \simeq U \times \mathbf{R}^n$. After a radial compactification of the fibre $\mathbf{R}^n$, we obtain a local description given by $U \times \mathbf{D}^n$. This process, which is a compactification of the cotangent bundle $\mathrm{T}^*M$ by the cosphere bundle $\mathrm{S}^*M$, gives as a result the closed ball bundle $\overline{\mathrm{B}}^* M$.
    
     $\overline{\mathrm{B}}^* M$  is a manifold with boundary $\partial (\overline{\mathrm{B}}^* M) = \mathrm{S}^* M$. Accordingly, we may consider the $E$-structure given by the set of vector fields tangent to the boundary (in this case, a $b$-manifold structure). One can show that the standard symplectic form on $\mathrm{T}^* M$ is naturally extended to an $E$-symplectic form over the compactification $\overline{\mathrm{B}}^* M$.
\end{example}

\begin{figure}[h!]
    \centering
    \includegraphics[scale=0.5]{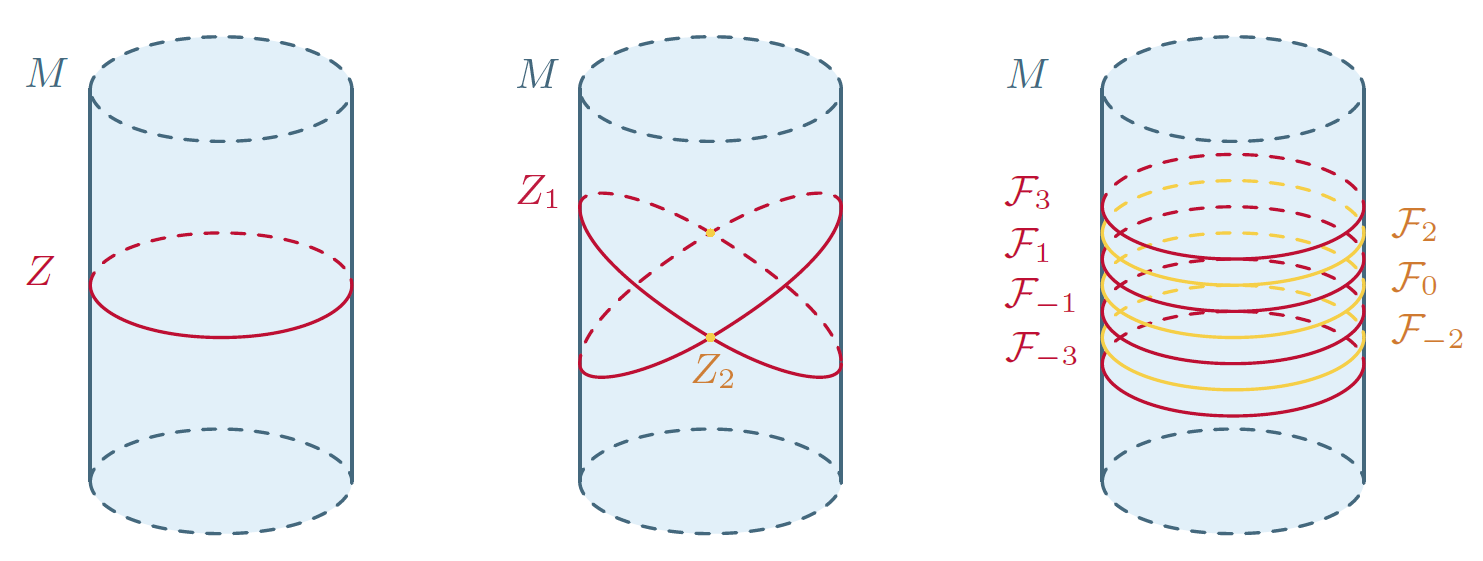}
    \label{fig:my_label}
    \caption{Different $E$-structures on the cylinder $M = \mathbf{R} \times \mathbf{S}^1$. On the left, a $b$-manifold structure taking an embedded circle $Z = \mathbf{S}^1$. On the center, a $c$-manifold structure with the strata $Z_1$, $Z_2$. Notice that the condition $Z_2 \subset \overline{Z}_1$ is satisfied. On the right, some leaves of a regular foliation.}
\end{figure}

\begin{example}
    Consider the free submodule $E \subset \operatorname{Vec}(\mathbf{R}^2)$ of vector fields generated by the fields
    \begin{equation}
        V = x \frac{\partial}{\partial x} + y \frac{\partial}{\partial y} \quad \text{and} \quad W = - y \frac{\partial}{\partial x} + x \frac{\partial}{\partial y}.
    \end{equation}
    As observed in \cite{MirandaScott} $E$ is an $E$-structure on $\mathbf{R}^2$ modelling an elliptic singularity at $(x, y) = 0$. More generally, singularities of elliptic type on an arbitrary smooth manifold $M$ are codimension 2 submanifolds with an $E$-structure on their normal bundle. These structures have been used in \cite{CavalGual} to study stable generalized complex structures.
\end{example}

\subsection{And what about singular foliations?}

Observe that the definition of $E$-manifold \ref{def:EManDef} is really similar to that of smooth singular foliation given by Androulidakis and Skandalis \cite{AndroulidakisSkandalis09}. It is natural to wonder to which extent both definitions agree and disagree. The following example shows that, in the case of regular foliations, both are the same.

\begin{example}[Regular foliations] \label{ex:RegFol}
    Consider a smooth manifold $M$ and a regular, smooth and involutive distribution $\mathcal{D}$ of rank $k$. By Frobenius' theorem, there exists a foliation $\mathcal{F}$ such that any element of $\mathcal{D}$ is tangent to a leaf of $\mathcal{F}$. The distribution $\mathcal{D}$ defines an $E$-structure by the involutivity condition. A choice of local coordinates $\vec{q}$ in an open set $U \subset M$ adapted to the foliation $\mathcal{F}$ gives a local basis
    \begin{equation*}
        \frac{\partial}{\partial q_1}, \ldots, \frac{\partial}{\partial q_p}. \qedhere
    \end{equation*}
    
    The Minkowski space (example \ref{example:Minkowskispace} in the list of motivating examples) provides a physical situation in which it is convenient to consider foliations. These foliations are often extended to consider singular foliations.
\end{example}

It is clear, therefore, that the differences between both concepts should appear in the presence of singularities. Both definitions only differ in the fact that singular foliations are locally finitely generated, while $E$-structures are locally free. The following example shows that a locally finitely generated submodule does not need to be locally free.

\begin{example}
    Let $M = \mathbf{R}^2$ and consider $\mathcal{F} \subset \operatorname{Vec}_\mathrm{c}(M)$ as the set of vector fields vanishing at $\vec{0}$. At a neighbourhood containing $\vec{0} \in U$ we have a set of generators $x \frac{\partial}{\partial x}, y \frac{\partial}{\partial x}, x \frac{\partial}{\partial y}, y \frac{\partial}{\partial y}$. On the other hand, in a neighbourhood $U_p$ with $U_p \cap \{\vec{0}\} = \varnothing$ we have generators $\frac{\partial}{\partial x}, \frac{\partial}{\partial y}$. This shows that the foliation $\mathcal{F}$, although locally finitely generated, is not free.
\end{example}

On the other hand, $E$-Manifolds are particular instances of singular foliations, which are also denominated in the literature as \emph{projective foliations}~\cite{GarmendiaThesis} or \emph{almost regular foliations}~\cite{DebordAlmostReg}. We shall give a more detailed characterization of their properties in section \ref{sec:prelim}.

\subsection{Gauge theories on $E$-manifolds}

We have seen that $E$-manifolds model degenerate or constrained physical systems. Gauge theories are ubiquitous in  theoretical physics and in particular are used to formulate electromagnetic, electroweak and strong nuclear forces.  The following examples show partial results in this direction.

\begin{example}[Geodesics in $b$-manifolds and $b^m$-manifolds]
    In \cite{MirandaOms} the authors generalize geodesic flows to $b$-manifolds and $b^m$-manifolds. The geodesic equations of motion can be framed under a gauge theory, where the principal bundle is taken to be the orthonormal bundle $\operatorname{O}(M)$. Any principal connection induces an affine connection in $\mathrm{T}M$, which gives the equations for the geodesic flow. We will generalize this construction to any $E$-manifold in section \ref{sec:GaugeTheory}, and show that geodesic flows are Hamiltonian. The Hamiltonian function, as in classical mechanics, is the kinetic energy.
\end{example}

\begin{example}[Minimal coupling in $b$-Lie groups]
    In \cite{BraddellKiesenhoferMiranda} the authors study the minimal coupling procedure of Montgomery for $b$-Lie groups. Such groups are pairs $(G, H)$, where $G$ is a Lie group and $H$ is a closed codimension one subgroup. This structure is an example of a $b$-manifold in the sense of example \ref{ex:bMan}; here, the natural projection $\pi\colon G \longrightarrow G/H$ is an $H$-principal bundle. The quotient group is induced a $b$-structure taking as singular hypersurface the class $Z_H = [1]$, which has codimension 1 given that $\dim(G/H) = 1$. Additionally, with this definition the map $\pi$ is a $b$-map. Some highlighted examples of $b$-Lie groups are the Galilean group or the Heisenberg group, where the subgroup $H$ is identified with the set of time-preserving transformations. The minimal coupling procedure in this article is a particular case of the one exhibited in theorem \ref{thm:MinCoupMont}.
\end{example}

\section{The geometry of $E$-manifolds and associated bundles} \label{sec:prelim}

After having described examples where $E$-manifolds naturally appear, we review the basic definitions and results in the theory of $E$-manifolds. The main definitions are present in \cite{MirandaScott}, and many constructions are borrowed from the literature of Lie algebroids (see, for instance part VII in \cite{CannasWeinstein99}).

\subsection{Local structure, exterior differential and cohomology}

We begin by showing that the information of an $E$-structure can be encoded in a Lie algebroid, called the $E$-tangent bundle. This characterization allows us to apply geometric arguments similar to those in classical mechanics on the standard tangent bundle $\mathrm{T}M$. 

The key ingredient of this construction is the following:

\begin{theorem}[Serre-Swan theorem \cite{swan}] \label{thm:serreswan} 
    A $\mathcal{C}^{\infty}(M)$-module $P$ is isomorphic to the module of sections of a vector bundle $E$ (denoted as $\Gamma(E)$) if and only if $P$ is finitely generated and projective.
\end{theorem}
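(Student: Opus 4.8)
The plan is to prove the two implications separately, since their content is quite different. Throughout I would exploit the standard dictionary between direct summands of free modules and idempotent endomorphisms, which is the algebraic shadow of the geometric notion of a subbundle of a trivial bundle.

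First I would treat the easy direction: if $P \cong \Gamma(E)$ for a smooth vector bundle $E \surj M$, then $P$ is finitely generated and projective. The crucial input is that every vector bundle over a paracompact, finite-dimensional smooth manifold is \emph{complemented}, i.e. there is a bundle $E'$ and an isomorphism $E \oplus E' \cong M \times \mathbf{R}^N$ onto a trivial bundle of finite rank $N$. To produce this one covers $M$ by finitely many trivializing opens (using that an $n$-manifold has covering dimension $n$, so $n+1$ such opens suffice) and glues local frames with a subordinate partition of unity to obtain global sections $s_1, \dots, s_N$ spanning every fibre; the induced surjection $\mathcal{C}^\infty(M)^N \surj \Gamma(E)$ then splits because $E$ is a summand. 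Taking global sections of $E \oplus E' \cong M \times \mathbf{R}^N$ yields $\Gamma(E) \oplus \Gamma(E') \cong \mathcal{C}^\infty(M)^N$, exhibiting $\Gamma(E)$ at once as finitely generated and as a direct summand of a free module, hence projective.

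Next I would prove the converse, which is the substantive direction. A finitely generated projective module $P$ is by definition a direct summand of some $\mathcal{C}^\infty(M)^N$, so there is an idempotent $e \in \End_{\mathcal{C}^\infty(M)}(\mathcal{C}^\infty(M)^N)$, with $e^2 = e$ and $\im e \cong P$. Identifying this endomorphism ring with $N \times N$ matrices over $\mathcal{C}^\infty(M)$, I read $e$ as a smooth family of idempotent matrices $e(p) \in M_N(\mathbf{R})$, $p \in M$. The key observation is that $\rank e(p) = \Tr e(p)$ is a continuous integer-valued function, hence locally constant, so on each connected component the family $e(p)$ has constant rank. I would then invoke the constant-rank lemma to conclude that $p \mapsto \im e(p) \subseteq \mathbf{R}^N$ is a smooth vector subbundle $E \inj M \times \mathbf{R}^N$. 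Finally, unwinding definitions, a smooth section of $E$ is exactly an element of $\mathcal{C}^\infty(M)^N$ lying pointwise in the image of $e$, that is a fixed point of $e$; this identifies $\Gamma(E) \cong \im e \cong P$.

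The hard part will be the constant-rank lemma: the pointwise image of a smooth constant-rank idempotent family is locally trivial. I would establish this near a point $p_0$ by choosing a basis of $\im e(p_0)$ and completing it to a basis of $\mathbf{R}^N$; the restriction of $e(p)$ to $\im e(p_0)$ is then an isomorphism onto $\im e(p)$ for $p$ close to $p_0$, by a nondegeneracy (nonvanishing determinant) argument, which furnishes a smooth local frame and hence a local trivialization. I would also remark that the rank may differ across connected components of $M$, so $E$ need not have globally constant rank; this is harmless, since the statement only asserts $P \cong \Gamma(E)$ for \emph{some} vector bundle.
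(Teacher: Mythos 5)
The paper never proves this statement: it is quoted as Swan's theorem with a citation to \cite{swan} and used as a black box (to pass from the locally free, locally finitely generated module $E$ to the bundle $\ee{T}M$ and its anchor). So there is no in-paper proof to compare against, and your proposal must be judged on its own merits. In substance it is correct, and it is the standard proof of the smooth Serre--Swan theorem. The converse (hard) direction is exactly the classical argument: write $P$ as the image of an idempotent $e \in M_N(\mathcal{C}^\infty(M))$, note that $\rank e(p) = \Tr e(p)$ is continuous and integer-valued, hence locally constant, and then get smooth local frames by pushing a basis $v_1, \ldots, v_k$ of $\im e(p_0)$ forward to $e(p)v_1, \ldots, e(p)v_k$: these stay independent near $p_0$ because independence is an open (nonvanishing minor) condition, and they span $\im e(p)$ by the rank count. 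The identification $\Gamma(E) = \{ s \in \mathcal{C}^\infty(M)^N \mid es = s \} = \im e \cong P$ then closes the argument, since an idempotent acts as the identity on its image.

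Two points deserve repair. First, in the easy direction your sentence ``the induced surjection $\mathcal{C}^\infty(M)^N \surj \Gamma(E)$ then splits because $E$ is a summand'' is circular as written: that $E$ is a summand of a trivial bundle is precisely what you are constructing at that moment. The correct order is: the spanning sections define a fibrewise surjective bundle map $\phi\colon M \times \mathbf{R}^N \longrightarrow E$; equip the trivial bundle with the standard fibre metric; then the restriction of $\phi$ to $(\ker\phi)^\perp$ is a bundle isomorphism onto $E$, so $M \times \mathbf{R}^N \cong \ker\phi \oplus E$, and applying the (additive) section functor exhibits $\Gamma(E)$ as a direct summand of a finitely generated free module. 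Second, a fine-print caveat on the statement itself: finite generation of $\Gamma(E)$ forces the rank of $E$ to be bounded, so if $M$ has infinitely many connected components the ``only if'' direction holds only for bundles of bounded rank (your finite-cover/partition-of-unity argument also implicitly needs this to get a finite $N$). This is harmless in the paper's applications, where the $E$-tangent bundle has constant rank, but it is worth stating the hypothesis if you want the equivalence in full generality.
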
 

The main idea is that projective modules over commutative rings are like vector bundles on compact spaces. For practical reasons,  we will focus on free modules. The  condition of $E$ being locally free implies that theorem \ref{thm:serreswan} holds and elements of $E \subset \operatorname{Vec}(M)$ are in bijection with sections of a vector bundle $\ee{T}M$, called the \emph{$E$-tangent bundle}. Localizing this equivalence of sections, we can define a vector bundle map $\rho_E\colon \ee{T}M \longrightarrow \mathrm{T}M$. The anchor map $\rho_E\colon \ee{T}M \longrightarrow \mathrm{T}M$ is defined by inclusion of the set of sheaf of local sections. Moreover, as the distribution $E$ is involutive,  the Lie bracket of vector fields can be restricted to sections of $\ee{T}M$. In this way, the triple $(\ee{T}M, \rho, [\cdot, \cdot])$ becomes a Lie algebroid. 

We present now some additional properties satisfied by the anchor map $\rho$ that are useful to give pictorial examples. This result was already noticed in \cite{GarmendiaThesis} and dates back to \cite{DebordAlmostReg}.

\begin{lemma}
    The anchor map of the Lie algebroid $\ee{T}M $, $\rho_E\colon \ee{T}M \longrightarrow \mathrm{T}M$ is generically injective.
\end{lemma}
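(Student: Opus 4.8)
The plan is to show that the set $U_{\mathrm{inj}} = \{p \in M : (\rho_E)_p \text{ is injective}\}$ is open and dense, which is the meaning of \emph{generic} injectivity. Openness is immediate once the problem is localized, so the whole content is density. First I would fix a point and a neighbourhood $U$ on which $E$ restricts to a free module, with frame $X_1, \dots, X_r \in E|_U$; these are vector fields that form a pointwise basis of the fibres of $\ee{T}M$. Writing the components of the $X_i$ in a coordinate chart as the columns of an $n \times r$ matrix $A(q)$ (with $n = \dim M$), one has that $(\rho_E)_q$ is injective precisely when $X_1(q), \dots, X_r(q)$ are linearly independent in $\mathrm{T}_qM$, i.e. when $\rank A(q) = r$, i.e. when some $r\times r$ minor of $A(q)$ is nonzero. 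Since the minors are smooth functions, $U_{\mathrm{inj}} \cap U$ is open, and it remains to prove it is dense in $U$.

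Suppose for contradiction that it is not dense: then there is a nonempty open set $V \subseteq U$ on which every $r\times r$ minor of $A$ vanishes, so $\rank A(q) < r$ for all $q \in V$. Let $s = \max_{q\in V}\rank A(q) < r$; by lower semicontinuity of the rank the locus $V' = \{q\in V : \rank A(q) = s\}$ is open and nonempty, and on $V'$ the rank is constantly $s$. Shrinking to a smaller open set $V''$ on which a fixed $s \times s$ minor (say in the columns $X_1, \dots, X_s$, after relabelling) stays nonzero, Cramer's rule expresses each remaining field $X_k$ ($k>s$) as a smooth combination $X_k = \sum_{j=1}^s c_{kj} X_j$ on $V''$, with $c_{kj} \in \mathcal{C}^\infty(V'')$. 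Because $s < r$ there is at least one such $k$, giving a combination
\[
    Y = X_k - \sum_{j=1}^s c_{kj} X_j
\]
which vanishes identically as a vector field on $V''$. However, as a section of $\ee{T}M$ the element $Y$ is a nontrivial combination of the frame $X_1, \dots, X_r$ (the coefficient of $X_k$ being $1$), hence $Y(q) \neq 0$ in the fibre $\ee{T}_qM$ for every $q \in V''$. Thus $Y$ is a nowhere-vanishing local section of $\ee{T}M$ whose image under the anchor is the zero vector field, contradicting the fact that over any open set $\rho_E$ acts on sections as the inclusion $E|_{V''} \hookrightarrow \operatorname{Vec}(V'')$, which is injective. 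This contradiction proves density. The same argument run on $V = U$ shows \emph{en passant} that necessarily $r \le n$, since if $r > n$ all $r\times r$ minors vanish identically.

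The subtle point, and the step I expect to be the main obstacle, is keeping track of two different notions of vanishing: a frame $X_1, \dots, X_r$ is by definition pointwise linearly independent \emph{as sections of the abstract bundle} $\ee{T}M$, whereas injectivity of the anchor concerns linear independence of their \emph{images as honest vector fields} in $\mathrm{T}M$, and these can genuinely differ (e.g. $q_1\,\partial/\partial q_1$ is a nonvanishing frame section but vanishes as a vector field along $Z$). The engine of the proof is precisely that a failure of generic injectivity forces a constant-rank drop on an open set, which by Cramer's rule upgrades to an honest smooth module relation; feeding this relation back through the injectivity of the anchor on local sections — guaranteed because $\ee{T}M$ is realized as a locally free subsheaf of the sheaf of vector fields, so that $\rho_E$ is the inclusion of local sections — yields the contradiction. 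The only care needed is the passage from a non-dense open locus to a set of locally constant maximal rank, which is handled by lower semicontinuity of the rank.
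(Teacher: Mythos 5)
Your proof is correct and follows essentially the same route as the paper's: argue by contradiction, use semicontinuity of the rank to extract an open set on which the anchor has locally constant rank, produce there a nowhere-vanishing local section of $\ee{T}M$ whose anchor image is the zero vector field, and contradict the injectivity of the section-level map $\tilde{\rho}_E$ guaranteed by Serre--Swan. The only difference is presentational: where the paper realizes $\ker \rho_E$ as a trivial subbundle and takes a basis of its sections, you build the kernel section explicitly via minors and Cramer's rule, which incidentally avoids the paper's (unnecessary, and not valid for all $E$-structures) remark that $\ee{T}M$ and $\mathrm{T}M$ have the same rank.
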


\begin{proof}
    For the sake of clarity, we shall review the construction of the anchor map $\rho_E$ pointwise. By Serre-Swan theorem \ref{thm:serreswan}, for every point $p \in M$ there exists an open set $U_p \subset M$ and a ring isomorphism $\tilde{\rho}_E\colon \Gamma_{U_p}(\ee{T}M) \longrightarrow E_{U_p}$. We define $\mathcal{I}_p = \{ f \in \mathcal{C}^\infty(M) \mid f(p) = 0 \}$, the ideal of vanishing smooth functions at $p$. The evaluation map $\operatorname{ev}_p\colon E_{U_p} \longrightarrow \mathrm{T}_p M$ is well defined in the quotient ring $E_{U_p}/\mathcal{I}_p E_{U_p}$, giving a map $\widetilde{\operatorname{ev}}_p\colon E_{U_p}/\mathcal{I}_p E_{U_p} \longrightarrow \mathrm{T}_p M$. The ring isomorphism $\tilde{\rho}_E$ induces now a ring isomorphism $\Gamma_{U_p}(\ee{T}M)/\mathcal{I}_p \Gamma_{U_p}(\ee{T}M) \simeq E_{U_p}/\mathcal{I}_p E_{U_p}$ which, by precomposition, defines a map $\alpha\colon \Gamma_{U_p}(\ee{T}M)/\mathcal{I}_p \Gamma_{U_p}(\ee{T}M) \longrightarrow \mathrm{T}_p M$. Taking the canonical identification $\Gamma_{U_p}(\ee{T}M)/\mathcal{I}_p \Gamma_{U_p}(\ee{T}M) \simeq \ee{T}_p M$ we finally have $\rho_E|_p \coloneqq \alpha\colon \ee{T}_p M \longrightarrow \mathrm{T}_p M$. Notice that, by construction, $\tilde{\rho}_E = \Gamma(\rho_E)$.

    Let $K \subset M$ be the set of points $p \in M$ for which $\ker \rho_p \neq \varnothing$. As both bundles have the same rank, this set is the set of points $p \in M$ such that $\det \rho_p \neq 0$. Assume now that $M \setminus K$ were not dense (i.e., $\operatorname{int} K \neq \varnothing$). Take an open set $U \subset K$ contained in the kernel. By smoothness, there exists a point $p \in U$ and an open neighbourhood $V_p \subset U$ such that the rank of $\rho_E$ is constant; therefore, $\ker \rho_E$ is a vector bundle over $V_p$ and, by shrinking the domain if necessary, we may assume that it is trivial and that the Serre-Swan property in theorem \ref{thm:serreswan} is satisfied. But now, a basis $F_1, \ldots, F_l$ of $\Gamma_{V_p}(\ker \rho_E)$ can be extended to a set of generators $F_1, \ldots, F_l, E_{l + 1}, \ldots, E_k$ of $\Gamma_{V_p}(\ee{T}M)$ and, as $\tilde{\rho}_E = \Gamma(\rho_E)$, we have $\tilde{\rho}_E(F_i) = 0$. This would contradict theorem \ref{thm:serreswan}. Thus, the map $\rho\colon \ee{T}M \longrightarrow \mathrm{T}M$ is generically injective.
\end{proof}



This characterization implies that the Stefan foliation associated to the anchor map of the algebroid $\ee{T}M$ is an instance of an almost regular foliation as coined by Debord~\cite{DebordAlmostReg}. As an application of theorem 1 in \cite{DebordAlmostReg}, the algebroid $\ee{T}M$ can be integrated to a Lie groupoid. It is certainly not the case that any singular foliation can be recovered by the anchor of a Lie algebroid (confer lemma 1.3 in \cite{AndroulidakisZambon}). We discuss now an explicit construction of the $E$-tangent bundle over $b$-manifolds and its anchor, together with an example over regular foliations.

\begin{example}
    Let us consider a $b$-manifold $(M, Z)$ following the notation of example \ref{ex:bMan}. In a local chart $U$ around a point $p \in Z$ compatible with $Z$ we have taken as generators of the $b$-tangent bundle the local fields
    \begin{equation*}
        q_1 \frac{\partial}{\partial q_1}, \frac{\partial}{\partial q_2}, \ldots, \frac{\partial}{\partial q_n}.
    \end{equation*}
    The identification of the elements of $\bi{\operatorname{Vec}}(M)$ with sections of $\bi{T}M$ implicitly uses $\tilde{\rho}_E$. Under this local description of $\tilde{\rho}_E$, for a point $q \in U_p$ with coordinates $\vec{q}$, the anchor $\rho_E|_q$ applied to an element $X = \sum_{i = 1}^n \alpha_i E_i(\vec{q}) \in \bi{T}_q M$ can be computed as
    \begin{equation*}
        \rho_E|_q(X) = \alpha_1 q_1 \frac{\partial}{\partial q_1}(\vec{q}) + \sum_{i = 2}^n \alpha_i \frac{\partial}{\partial q_i}(\vec{q}).
    \end{equation*}
    In particular, if $q_1 = 0$ (that is, if $q \in Z$) then $\rho_E|_q$ fails to be injective. The reason is that, although we are identifying sections of the $b$-tangent bundle with $b$-vector fields in $M$, the rank of the set of $b$-vector fields as sections of the tangent bundle, evaluated at points of the critical hypersurface is $n-1$. The rank  of the distribution generated by $b$-vector fields as sections of the $b$-tangent bundle is constant at all points and equal to $n$.
    
    In the setting of $b$-symplectic geometry, the critical hypersurface $Z$ is defined as the zero section of the map $\Pi^n$. In this case, there are two different algebroid structures. On one hand, the $E$-manifold structure $\rho_E\colon \bi{T}M \longrightarrow \mathrm{T}M$ constructed using Theorem \ref{thm:serreswan} in example \ref{ex:bMan}.  On the other hand, the Lie algebroid structure whose anchor is given by the Poisson tensor $\Pi^\sharp\colon \mathrm{T}^*M \longrightarrow \mathrm{T}M$. We can easily argue that both structures are different by looking at their characteristic foliations on $M$. For the $b$-tangent bundle $\bi{T}M$, the only leaves of its characteristic foliation are $Z$ and the connected components of $M \setminus Z$. Observe that $Z$ has to be odd-dimensional. The characteristic foliation of $\Pi^\sharp\colon \mathrm{T}^*M \longrightarrow \mathrm{T}M$ is the symplectic foliation of $(M, \Pi)$ and, as a consequence, its leaves are even-dimensional. This symplectic foliation can be restricted to the leaf $Z$, where it induces a codimension 1 foliation (for more details, refer to \cite{GuilleminMirandaPires}). This shows that the characteristic foliations do not agree and that the algebroid structures are consequently not isomorphic.
\end{example}

\begin{example}
    Regular foliations can be naturally endowed with an $E$-manifold structure following example \ref{ex:RegFol}. In a foliated chart $U$ with coordinates $\vec{q}$, the local fields
    \begin{equation*}
        \frac{\partial}{\partial q_1}, \ldots, \frac{\partial}{\partial q_p}
    \end{equation*}
    can be identified with generators $E_1, \ldots, E_p$ of the $E$-tangent bundle $\ee{T}M$ under the ring isomorphism $\tilde{\rho}_E$. Following the previous computations in the setting of a $b$-manifold, for a point $q \in U$ with coordinates $\vec{q}$ anchor map $\rho_E|_q$ can be computed in an element $X = \sum_{i = 1}^p \alpha_i E_i(\vec{q}) \in \ee{T}_q M$ as
    \begin{equation*}
        \rho_E|_q(X) = \sum_{i = 1}^p \alpha_i \frac{\partial}{\partial q_i}(\vec{q}).
    \end{equation*}
    Notice that, in this case, the anchor map is injective for every $q \in U$ and, as there exists a covering of $M$ by foliated charts, it is also injective for any $q \in M$. This example shows that the failure of injectiveness is closely related to the change of dimensions in the leaves of the characteristic foliation of the $E$-structure.
\end{example}

We present now a description of the local structure of $E$-manifolds. The following concepts are directly inherited from the literature of Lie algebroids.

\begin{definition}[Structure functions] \label{def:StrucConst}
    Consider an $E$-manifold $(M, E_M)$ and a coordinate set $U \subset M$ with coordinates $q_1, \ldots, q_n$. Let $E_1, \ldots, E_p$ be a set of generators of $\ee{\operatorname{Vec}}(U)$. The \emph{structure functions} are the smooth functions $\rho_{ij} \in \mathcal{C}^\infty(U)$ and $C_{ij}^k \in \mathcal{C}^\infty(U)$ satisfying
    \begin{equation}
        E_i = \sum_{j = 1}^n \rho_{ij} \frac{\partial}{\partial q_j}, \quad [E_i, E_j] = \sum_{k = 1}^p C_{ij}^k E_{k},
    \end{equation}
    where $i, j, k = 1, \ldots, p$.
\end{definition}

\begin{remark}
    The structure functions are well-defined by the involutivity of the $E$-structure, $[E, E] \subset E$. Moreover, they are skew-symmetric, $C_{ij}^k = -C_{ji}^k$, by the skew-symmetry of the Lie bracket.
\end{remark}


The \emph{$E$-cotangent bundle} (denoted by $\ee{T}^* M$) is defined as the dual bundle of $\ee{T}M$. The global sections of $\bigwedge^p \ee{T}^* M$ are called \emph{$E$-forms of degree $p$} and are denoted by $\ee{\Omega}^p(M)$. We can define a differential $\diff\colon \ee{\Omega}^p(M) \longrightarrow \ee{\Omega}^{p + 1}(M)$ explicitly as
\begin{align} \label{eq:Ediff}
    \diff \omega(X_0, \ldots, X_p) &= \sum_{i = 0}^p (-1)^i \mathcal{L}_{i(X_i)} \omega\Big( X_0, \ldots, \widehat{X}_i, \ldots, X_p \Big) \\
    & \qquad + \sum_{0 \leqslant i < j \leqslant p} (-1)^{i + j} \omega\Big( [X_i, X_j], X_0, \ldots, \widehat{X}_i, \ldots, \widehat{X}_j, \ldots, X_p \Big). \nonumber
\end{align}
The Lie derivative of a form $\omega \in \ee{\Omega}^p(M)$ along a section $X \in E$ is defined using Cartan's formula
\begin{equation} \label{eq:ELieDer}
    \mathcal{L}_X \omega = \diff \iota_X \omega + \iota_X \diff \omega.
\end{equation}
As $E$ is an involutive submodule of $\operatorname{Vec}(M)$, the Lie bracket is the restriction of the Lie bracket for vector fields of the tangent bundle $\mathrm{T}M$.

The differential \eqref{eq:Ediff} satisfies the cochain condition $\diff^2 = 0$, giving $\ee{\Omega}^\bullet(M)$ a cochain complex structure. The cohomology of this complex is called the \emph{$E$-cohomology} ${}^E H^\bullet(M)$. This cohomology can be easily identified in many of the provided examples. For $b$-manifolds, the corresponding $b$-cohomology has been largely studied in the literature. For regular foliations, this cohomology is yet another reincarnation of  \emph{foliated cohomology}.

\subsection{$E$-Maps and flows of $E$-fields}

Having defined $E$-manifolds, we are interested in morphisms between them. This leads us to the following definition.

\begin{definition} \label{def:EMap}
    Let $(M, E_M)$ and $(N, E_N)$ be $E$-manifolds. An \emph{$E$-manifold map} between $M$ and $N$ is a Lie algebroid map between $\ee{T}M$ and $\ee{T}N$; that is, it is a pair $(f, F)$ with $f \colon M \longrightarrow N$ and $F\colon \ee{T}M \longrightarrow \ee{T}N$ smooth maps such that the following diagrams are commutative
    \begin{equation} \label{eq:EMapDiag}
    \begin{tikzcd}[sep = large]
    	{\ee{T}M} & {\ee{T}N} && {\ee{T}M} & {\ee{T}N} \\
    	M & N && {\mathrm{T}M} & {\mathrm{T}N}
    	\arrow["f"', from=2-1, to=2-2]
    	\arrow["{\diff f}"', from=2-4, to=2-5]
    	\arrow["F", from=1-4, to=1-5]
    	\arrow["{\rho_M}"', from=1-4, to=2-4]
    	\arrow["{\rho_N}", from=1-5, to=2-5]
    	\arrow["F", from=1-1, to=1-2]
    	\arrow["{\tau_{\ee{T}M}}"', from=1-1, to=2-1]
    	\arrow["{\tau_{\ee{T}N}}", from=1-2, to=2-2]
    \end{tikzcd}
    \end{equation}
    and satisfying $\diff F^* = F^* \diff$.
\end{definition}

\begin{remark}
    As a consequence of the definition of a Lie algebroid map, we have the condition $\diff_p f (E_M(p)) \subset E_N(f(p))$. This is a result of commutative diagram \ref{eq:EMapDiag}. 
\end{remark}

So far, we have visualized the $E$-tangent bundle as a replacement of the standard tangent bundle $\mathrm{T}M$, yielding constrained or singular dynamics. The equations of motion in classical mechanics are specified in the form of vector fields, and their flow gives the dynamical evolution of the system. It is a surprisingly non-trivial fact that flows of $E$-fields preserve the submodule of vector fields by pushforward (confer \cite[proposition 1.6]{AndroulidakisSkandalis09}).

\begin{proposition}
    If $(M, E)$ is an $E$-manifold, the flow $\varphi_t$ of any section $X \in E$ is an $E$-map, that is, $\diff \varphi_t (E) \subset E$.
\end{proposition}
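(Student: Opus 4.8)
The plan is to prove the statement locally and then patch, using the characterization of an $E$-map provided by the remark after Definition \ref{def:EMap}: a diffeomorphism $\varphi_t$ is an $E$-map precisely when $\diff_p \varphi_t(E(p)) \subset E(\varphi_t(p))$ for every $p$, i.e.\ when $\varphi_t$ preserves the distribution $E$ pointwise. Since $\varphi_t$ is a diffeomorphism with inverse $\varphi_{-t}$ (also a flow of $X \in E$), the inclusion in one direction will automatically upgrade to equality, so it suffices to show $\diff\varphi_t(E) \subseteq E$. Working on a coordinate chart $U$ with generators $E_1, \ldots, E_p$ of $\ee{\operatorname{Vec}}(U)$, any section $Y \in E$ can be written $Y = \sum_i f_i E_i$ with $f_i \in \mathcal{C}^\infty(U)$, so by linearity and the fact that pushforward of functions times fields behaves well, the problem reduces to showing that $(\varphi_t)_* E_i \in E$ for each generator $E_i$.

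First I would fix the generator $E_i$ and consider the time-dependent family $Y_i(t) \coloneqq (\varphi_{-t})_* E_i$, a curve of vector fields. The standard identity for the evolution of a pushed-forward field under a flow gives $\frac{d}{dt} (\varphi_{-t})_* E_i = (\varphi_{-t})_* [X, E_i]$, or equivalently $\frac{d}{dt} Y_i(t) = [X, Y_i(t)]$ once one is careful with signs; this is the Lie-derivative evolution equation $\frac{d}{dt}\big((\varphi_t)^* E_i\big) = (\varphi_t)^*[X, E_i]$. The crucial input is now involutivity of the $E$-structure (Definition \ref{def:StrucConst} and the remark following it): because $X \in E$ and $E_i \in E$, we have $[X, E_i] \in E$, and more precisely, expanding $X = \sum_k g_k E_k$ and using $[E_k, E_i] = \sum_\ell C^\ell_{ki} E_\ell$ together with the Leibniz rule for the bracket, the bracket $[X, E_i]$ is again a $\mathcal{C}^\infty(U)$-combination of the $E_\ell$. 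Hence the evolution equation says that the curve $t \mapsto Y_i(t)$ in the $\mathcal{C}^\infty(U)$-module $\ee{\operatorname{Vec}}(U)$ has derivative lying in that same module.

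The main step is then to conclude that $Y_i(t)$ itself stays in $E$ for all $t$, not merely that its derivative does. The cleanest way is to pass to components: write $Y_i(t) = \sum_\ell a^\ell_i(t) E_\ell$ for unknown smooth coefficient functions $a^\ell_i(t) \in \mathcal{C}^\infty(U)$ (legitimate at $t=0$ since $Y_i(0) = E_i$), and observe that the evolution equation $\dot Y_i = [X, Y_i]$ becomes a linear first-order ODE system for the $a^\ell_i$ driven by the structure functions $C^\ell_{kj}$ and the coefficients $g_k$ of $X$. Because this is a linear ODE with smooth coefficients, it has a unique solution defined for all $t$ in the interval where the flow exists, and that solution keeps $Y_i(t)$ inside the span of $E_1, \ldots, E_p$, i.e.\ inside $E|_U$. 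I expect the main obstacle to be making the ``components evolve by a linear ODE'' argument fully rigorous as an identity of vector fields rather than just at each point: one must check that the decomposition $Y_i(t) = \sum_\ell a^\ell_i(t) E_\ell$ is preserved under differentiation in $t$, which relies on the local freeness of $E$ (so the $E_\ell$ are genuinely a basis and the coefficients $a^\ell_i$ are uniquely determined smooth functions) — this is exactly where the $E$-structure hypothesis, as opposed to a merely locally finitely generated singular foliation, is used.

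Finally I would globalize: the local conclusion $\diff\varphi_t(E|_U) \subseteq E|_U$ is independent of the chart, and since charts cover $M$ and the property is pointwise, it follows that $\diff\varphi_t(E) \subseteq E$ globally. Applying the same argument to $\varphi_{-t}$ yields the reverse inclusion, giving $\diff\varphi_t(E) = E$ and hence that $\varphi_t$ is an $E$-map by the criterion recalled above. A stylistically lighter alternative, which I would mention as a remark, is to invoke \cite[proposition 1.6]{AndroulidakisSkandalis09} directly, since $E$-structures are in particular singular foliations in the sense of Androulidakis--Skandalis and the cited proposition already establishes that flows of such fields preserve the module; the ODE argument above is essentially the specialization of that proof to the locally free setting.
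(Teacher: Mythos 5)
Your argument is essentially correct, but note first that the paper itself gives no proof of this proposition: it states it and defers entirely to \cite[proposition 1.6]{AndroulidakisSkandalis09}, which is precisely the alternative you mention in your closing remark. So your detailed ODE argument is a genuinely different, self-contained route relative to the paper --- in fact it is essentially the Androulidakis--Skandalis proof reproduced in the locally free setting. What it buys is transparency: one sees exactly that the only inputs are involutivity and local finite generation, whereas the paper's citation hides the mechanism.

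Two points in your write-up need repair. First, the ansatz $Y_i(t) = \sum_\ell a_i^\ell(t) E_\ell$ presupposes the conclusion: for $t \neq 0$ you do not yet know that $Y_i(t) = (\varphi_{-t})_* E_i$ lies in the module, so its ``coefficients'' cannot be introduced as unknowns. The standard fix reverses the logic. Use the evolution equation in the form $\tfrac{d}{dt} Y_i(t) = (\varphi_t)^*[X, E_i] = \sum_j (f_{ij} \circ \varphi_t)\, Y_j(t)$, where $[X, E_i] = \sum_j f_{ij} E_j$ by involutivity; note that this, unlike $\dot{Y}_i = [X, Y_i]$, is a genuine pointwise linear ODE (the bracket form involves spatial derivatives of the unknown and is a transport equation, not an ODE). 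Then solve the matrix equation $\dot{\Lambda}(t) = A(t) \Lambda(t)$, $\Lambda(0) = \operatorname{Id}$, with $A(t) = (f_{ij} \circ \varphi_t)$, set $Z_i(t) = \sum_\ell \Lambda_{i\ell}(t) E_\ell$, and observe that $Z_i(t)(p)$ and $Y_i(t)(p)$ solve the same linear ODE in $\mathrm{T}_p M$ with the same initial value $E_i(p)$; uniqueness of solutions gives $Y_i(t) = \sum_\ell \Lambda_{i\ell}(t) E_\ell \in E$, with $\Lambda_{i\ell}(t)$ smooth in $p$ by smooth dependence on parameters. Second, your claim that this step ``is exactly where the $E$-structure hypothesis, as opposed to a merely locally finitely generated singular foliation, is used'' is wrong: the corrected argument never uses freeness --- it needs only the existence of some smooth coefficients $f_{ij}$ (local finite generation) and uniqueness for the pointwise ODE, which has nothing to do with the module structure. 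This is why \cite[proposition 1.6]{AndroulidakisSkandalis09} holds for arbitrary singular foliations, and indeed why the paper can quote it verbatim for $E$-structures.
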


This result shows that the mechanics induced by an $E$-field restrict to the $E$-tangent bundle. A more general result concerning the lift of a vector field to an arbitrary Lie algebroid can be found in \cite{RuiLoja}. Consequently, sections of Lie algebroids can be integrated to produce one-parameter groups of transformations. This construction is vacuous over $E$-manifolds given that, by definition, sections of the $E$-tangent bundle are in correspondence with $E$-vector fields on $M$.


\subsection{Pullbacks, local structure and products}

Given an $E$-manifold, which we assume to be the natural phase space of a degenerate physical system, we expect to describe Hamiltonian mechanics in the dual bundle $\ee{T}^* M$. In order to reflect the degeneracy of $E$ in the dynamics on $\ee{T}^*M$, it is necessary to introduce the notion of \emph{prolongation}. This procedure induces an $E$-manifold structure in the $E$-cotangent bundle $\ee{T}^*M$. More importantly, this idea will be used in section \ref{sec:GaugeTheory} for a similar purpose on a principal $G$-bundle. 

\begin{definition} \label{def:ProlFibreBun}
    Let $\ee{T}M$ be an $E$-manifold and consider a fibre bundle $\tau\colon B \longrightarrow M$. The \emph{pullback} of $\ee{T}M$ to $B$ is the $E$-manifold defined as
    \begin{equation}
        \ee{T}B = \{ (u, v) \in \mathrm{T}B \times \ee{T}M \mid \diff \tau (u) = \rho_{E} (v) \}. \qedhere
    \end{equation}
\end{definition}

Notice that the pullback $E$-manifold is well-defined because $\tau\colon B \longrightarrow M$ is a surjective map, and therefore it is transversal to the anchor $\rho_E$. The following proposition shows that $\ee{T}B$ is indeed an $E$-manifold by explicitly describing the corresponding $E$-structure.

\begin{proposition}
    Let $(M, E_M)$ be an $E$-manifold, let $\tau\colon B \longrightarrow M$ be a fibre bundle and consider the pullback $E$-manifold $\ee{T}B$. Then, $\ee{T}B$ is isomorphic to the $E$-tangent bundle generated by the $E$-structure
    \begin{equation}
        E_B = \langle (\diff \tau)^{-1}(E_M) \rangle. \qedhere
    \end{equation}
\end{proposition}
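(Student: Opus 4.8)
The plan is to show that the pullback $E$-manifold $\ee{T}B$ from Definition \ref{def:ProlFibreBun} coincides, as a Lie algebroid over $B$, with the $E$-tangent bundle associated to the locally free submodule $E_B = \langle (\diff\tau)^{-1}(E_M)\rangle \subseteq \operatorname{Vec}(B)$. The strategy has three components: first verify that $E_B$ really is an $E$-structure on $B$ (involutive and locally finitely generated free), second produce a canonical vector bundle map identifying the fibred product with the $E$-tangent bundle of $E_B$, and third check that this identification is an isomorphism of Lie algebroids, i.e.\ it intertwines anchors and brackets.

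First I would fix convenient local models. Around a point $p\in M$ choose coordinates $q_1,\dots,q_n$ and a local frame $E_1,\dots,E_p$ of $E_M$ with structure functions $E_i=\sum_j\rho_{ij}\,\partial/\partial q_j$ as in Definition \ref{def:StrucConst}. Since $\tau\colon B\longrightarrow M$ is a fibre bundle, I may shrink to a trivializing chart so that $B|_U\simeq U\times V$ with fibre coordinates $y_1,\dots,y_r$, and $\tau$ is the projection. Then the horizontal lifts $\widehat{E_i}=\sum_j\rho_{ij}\,\partial/\partial q_j$ (read in the coordinates on $B$, acting only on the base variables) together with the vertical fields $\partial/\partial y_1,\dots,\partial/\partial y_r$ form a local set of generators of $E_B$. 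The key computation is that $(\diff\tau)^{-1}(E_M)$, as a preimage submodule, is generated exactly by these lifts and the full vertical tangent bundle: a vector field $X$ on $B$ satisfies $\diff\tau(X_b)\in\rho_E(\ee{T}_{\tau(b)}M)$ precisely when its horizontal part lies in the span of the $\widehat{E_i}$, while the vertical part is unconstrained. This exhibits $E_B$ as locally finitely generated and free of rank $p+r$; involutivity follows because $[\widehat{E_i},\widehat{E_j}]=\sum_k C_{ij}^k\,\widehat{E_k}$ projects correctly under $\diff\tau$ and brackets with vertical fields stay in $E_B$ (the vertical fields commute among themselves and their bracket with a horizontal lift is again in the module, as $\tau$-projectability is preserved).

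Next I would define the comparison map. By Serre-Swan (Theorem \ref{thm:serreswan}) the free module $E_B$ is the section module of a vector bundle; call it $\ee{T}^{B_B}B$ to distinguish it temporarily. I send a generator $\widehat{E_i}\mapsto$ the section whose value is $(\widehat{E_i})_b, E_i(\tau(b)))\in\mathrm{T}B\times\ee{T}M$, and $\partial/\partial y_\ell\mapsto(\partial/\partial y_\ell, 0)$; one checks these land in the fibred product of Definition \ref{def:ProlFibreBun} because $\diff\tau(\widehat{E_i})=\rho_E(E_i)$ and $\diff\tau(\partial/\partial y_\ell)=0=\rho_E(0)$. Extending $\mathcal{C}^\infty(B)$-linearly gives a bundle map to $\ee{T}B$; I would verify it is fibrewise bijective by a dimension count ($\dim$ of the fibred product at $b$ equals $\dim E_{M,\tau(b)}+r=p+r$, matching the rank of $E_B$) together with injectivity from the independence of the chosen generators. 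Finally, to upgrade this to a Lie algebroid isomorphism I confirm it commutes with the anchors (the anchor of $\ee{T}B$ is the first projection $(u,v)\mapsto u$, which by construction matches the inclusion anchor $\rho_{E_B}$) and with the brackets; since both bracket structures are the restriction of the ambient Lie bracket on $\operatorname{Vec}(B)$ — for the pullback because the first component lives in $\mathrm{T}B$, and for $E_B$ because it is an involutive submodule — the two brackets agree on generators and hence everywhere.

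The main obstacle I anticipate is the clean verification that $E_B=\langle(\diff\tau)^{-1}(E_M)\rangle$ is genuinely a \emph{free} $\mathcal{C}^\infty(B)$-module of the predicted rank, rather than merely finitely generated, and that the generating set behaves coherently across overlapping trivializations. The subtlety is that $(\diff\tau)^{-1}(E_M)$ is defined fibrewise as a preimage, and one must argue that its global sections assemble into a locally free module; the transversality remark following Definition \ref{def:ProlFibreBun} (that $\tau$ is a surjective submersion, hence transverse to $\rho_E$) is exactly what guarantees the fibred product has constant fibre dimension and thus that $E_B$ is locally free. Once transversality is invoked, the dimension count is forced and the local frames patch because the horizontal lifts transform by the same cocycle as the $E_i$ under changes of the base chart, with the vertical fields transforming by the structure-group transition functions of $B$. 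I would therefore present transversality as the conceptual crux and relegate the frame computations to routine local verification.
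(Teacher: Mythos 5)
Your overall route is the same as the paper's: over a trivializing chart, split the tangent directions of $B$ into base and vertical ones, exhibit the lifts $\widehat{E}_1,\dots,\widehat{E}_p$ of the local generators of $E_M$ together with the vertical fields as local generators, and invoke Serre--Swan (theorem \ref{thm:serreswan}) to identify the resulting free module with the sections of the fibred product; the extra verifications you add (the explicit comparison map, the fibrewise dimension count from transversality, the anchor and bracket compatibility) are sound. However, your ``key computation'' is wrong as stated. You claim that a vector field $X$ on $B$ satisfies the pointwise condition $\diff\tau(X_b)\in\rho_E(\ee{T}_{\tau(b)}M)$ for all $b$ \emph{precisely when} its horizontal part lies in the $\mathcal{C}^\infty(B)$-span of the $\widehat{E}_i$. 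The ``only if'' direction fails whenever the $E$-structure is not determined by its pointwise evaluations, which is exactly the phenomenon $E$-manifolds are designed to capture. Concretely, take $M=\mathbf{R}$ with the $b^2$-structure $E_M=\langle x^2\,\partial/\partial x\rangle$ of example \ref{ex:bMan} and $B=M\times F$ trivial. The field $X=x\,\partial/\partial x$ satisfies your pointwise condition everywhere: at $x\neq 0$ the condition is vacuous, and at $x=0$ one has $\diff\tau(X)=0\in\rho_E(\ee{T}_0M)=\{0\}$. Yet $x\,\partial/\partial x\notin\mathcal{C}^\infty(B)\cdot x^2\,\partial/\partial x$, since $1/x$ is not smooth; the same failure occurs for the elliptic structure $\langle x\partial_x+y\partial_y,\,-y\partial_x+x\partial_y\rangle$, where $x\,\partial/\partial x$ vanishes at the origin but is not in the module. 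Under your pointwise reading, $E_B$ would be the pullback of the $b$-structure rather than of the $b^2$-structure, whereas the sections of the fibred product $\ee{T}B$ are exactly the combinations of the pairs $(\widehat{E}_i,E_i)$ and $(V_\ell,0)$, i.e.\ they realize the smaller module; with that reading the proposition you are proving would actually be false (there is no algebroid isomorphism over the identity, because the anchor images differ).

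The repair is to read $(\diff\tau)^{-1}(E_M)$ module-theoretically rather than pointwise: it consists of the $\tau$-projectable fields $X$ on $B$ with $\tau_*X\in E_M$, so that $E_B=\langle(\diff\tau)^{-1}(E_M)\rangle$ is, essentially by definition, the module generated by the $\tau$-lifts of elements of $E_M$ together with $\Gamma(\ker\diff\tau)$. This is also how the paper's proof uses the notation when it asserts, ``with our definitions'', that $\langle(\diff\tau)^{-1}(E_M)\rangle=\langle E_1,\dots,E_p,V_1,\dots,V_q\rangle$. Once the notation is read this way, no pointwise characterization is needed at all, and the remainder of your argument --- freeness and rank $p+r$ from transversality of $\tau$ to the anchor, involutivity from $[\widehat{E}_i,\widehat{E}_j]=\widehat{[E_i,E_j]}$ together with the verticality of the remaining brackets, the fibrewise bijectivity of your comparison map, and the anchor/bracket checks --- goes through unchanged and coincides with the paper's proof.
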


\begin{proof}
    The idea of the proof is to give a local description of the set of sections of $\ee{T}B$. To do this, first we choose a connection $\mathrm{T}B \simeq \mathrm{V}B \oplus \mathrm{H}B$; under this trivialization the map $\diff \tau$ becomes the projection onto the second component. Therefore, $\mathrm{T}B \times \ee{T}M \simeq (\mathrm{V}B \oplus \mathrm{H}B) \times \ee{T}M$ and the condition $\diff \tau(u) = \rho_E(v)$ completely determines the horizontal component of $u$. As a consequence, $\ee{T}B \simeq \mathrm{V}B \times \ee{T}M$.
    
    Let us consider a point $m \in M$ and a trivializing coordinate chart $U \subset M$ centered around $m$ with coordinates $\vec{q}$. Considering the isomorphism $\tau^{-1}(U) \simeq U \times Q$, a local chart $V \subset Q$ with coordinates $\vec{r}$ gives a local chart $U \times V$ around $B$. Notice that the tangent space to $V$ becomes a local chart of $\mathrm{V}B$. If $E_1, \ldots, E_p$ is a local set of generators of $E_M(U)$ and $V_1, \ldots, V_q$ is a local basis of sections of $\mathrm{V}B$ around $V$, then $E_1, \ldots, E_p, V_1, \ldots, V_q$ is a local set of generators of $\mathrm{V}B \times \ee{T}M \simeq \ee{T}B$. One can observe now that, with our definitions,
    \begin{equation*}
        \langle (\diff \tau)^{-1} (E_M) \rangle = \langle E_1, \ldots, E_p, V_1, \ldots, V_q \rangle.
    \end{equation*}
    By Serre-Swan theorem \ref{thm:serreswan}, we conclude that $\ee{T}B$ is the $E$-tangent bundle generated by $E_B$.
\end{proof}

In the proof of the previous result we have introduced a set of local coordinates to describe the $E$-structure in $B$ which have been, in a sense, induced from a choice of coordinates in $M$ and in a fibre of $B$. We will give a complete description of this procedure in the following proposition, which can be regarded as a natural generalization of the introduction of natural coordinates in the cotangent bundle.

\begin{proposition} \label{prop:ENatCoord}
    Let $(M, E_M)$ be an $E$-manifold and let $\tau\colon B \longrightarrow M$ be a principal bundle with typical fibre $Q$ endowed with the pullback structure $E_B$. Let $U \subset M$ be a local chart satisfying the trivialization property $\tau^{-1}(U) \simeq U \times Q$ with coordinates $\vec{q}$, let $E_1, \ldots, E_p$ be local generators of $\ee{T}U$ with structure functions $\rho_{ij}$ and $C_{ij}^k$ and consider a local chart $V \subset Q$ with coordinates $\vec{r}$. If $\vec{p}$ are the coordinates induced by the sections $E_1, \ldots, E_p$ and $\vec{s}$ are the natural coordinates from $\vec{r}$, then the open set $U \times V$ with coordinates $\vec{q}, \vec{p}, \vec{r}, \vec{s}$ is a chart of $\ee{T}(U \times V)$ and $E_1, \ldots, E_p, V_1, \ldots, V_q$ are the local sections associated to $\vec{p}$ and $\vec{s}$, respectively. Moreover, we have
    \begin{equation} \label{eq:EProlFunDiff}
        \diff f = \sum_{i = 1}^k \sum_{j = 1}^n \rho_{ij} \frac{\partial f}{\partial q_j} E_i^* + \sum_{i = 1}^q \frac{\partial f}{\partial r_i} V_i^*
    \end{equation}
    Additionally, the differential of the dual sections is given by
    \begin{equation} \label{eq:EProlSecDiff}
        \diff E_i^* = - \frac{1}{2} \sum_{j,l = 1}^k C_{jl}^i E_j^* \wedge E_k^*, \quad \diff V_i^* = 0. \qedhere
    \end{equation}
\end{proposition}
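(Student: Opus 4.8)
The plan is to build the chart from the identification $\ee{T}B \simeq \mathrm{V}B \times \ee{T}M$ obtained in the preceding proposition, and then read off the two differential formulas from the Lie algebroid structure of $\ee{T}B$ expressed in the chosen frame. First I would fix the connection implicit in the product trivialization $\tau^{-1}(U) \simeq U \times Q$, namely the flat one whose horizontal lift of $\partial/\partial q_j$ is $\partial/\partial q_j$ on $U \times Q$. With respect to it, the generators $E_1, \ldots, E_p$ lift to the sections whose anchors are $\rho_E(E_i) = \sum_j \rho_{ij}\, \partial/\partial q_j$, where the $\rho_{ij}$ depend only on $\vec q$, while $V_1, \ldots, V_q = \partial/\partial r_1, \ldots, \partial/\partial r_q$ are the vertical generators, with anchor the coordinate fields on $V$. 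The coefficients of a point of $\ee{T}_x B$ in the frame $E_1, \ldots, E_p, V_1, \ldots, V_q$ then define the fibre coordinates $\vec p$ and $\vec s$, and together with $\vec q, \vec r$ they give the announced chart of the total space; by construction $E_1, \ldots, E_p, V_1, \ldots, V_q$ are precisely the sections whose coefficients are $\vec p$ and $\vec s$.

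For \eqref{eq:EProlFunDiff} I would use that for a function $f$ the $E$-differential is the degree-zero case of \eqref{eq:Ediff}, i.e. $\diff f(X) = \rho_E(X)f$. Expanding $\diff f$ in the dual coframe $E_1^*, \ldots, E_p^*, V_1^*, \ldots, V_q^*$ gives $\diff f = \sum_i \big(\rho_E(E_i)f\big)\, E_i^* + \sum_i \big(\rho_E(V_i)f\big)\, V_i^*$; substituting $\rho_E(E_i)f = \sum_j \rho_{ij}\, \partial f/\partial q_j$ from Definition \ref{def:StrucConst} and $\rho_E(V_i)f = \partial f/\partial r_i$ yields the stated formula.

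For \eqref{eq:EProlSecDiff} I would evaluate the one-form case of \eqref{eq:Ediff}, namely $\diff\alpha(X,Y) = \rho_E(X)\big(\alpha(Y)\big) - \rho_E(Y)\big(\alpha(X)\big) - \alpha([X,Y])$, on pairs of frame sections. Since each of $E_i^*$ and $V_i^*$ takes constant Kronecker values on the frame, the first two terms vanish and only the bracket term survives, so everything reduces to the brackets of the frame inside $\ee{T}B$. The computation is: $[E_i,E_j] = \sum_k C_{ij}^k E_k$, inherited from $\ee{T}M$ because the $C_{ij}^k$ depend only on $\vec q$; $[V_i,V_j] = 0$ since the $V_i$ are coordinate fields; and, crucially, $[E_i,V_j] = 0$, which follows from $\big[\sum_l \rho_{il}\, \partial/\partial q_l,\ \partial/\partial r_j\big] = -\sum_l (\partial \rho_{il}/\partial r_j)\, \partial/\partial q_l = 0$, as $\rho_{il}$ is a function of $\vec q$ alone. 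Feeding these brackets back gives $\diff E_i^*(E_j,E_l) = -C_{jl}^i$ with all mixed and purely vertical components zero, hence $\diff E_i^* = -\tfrac12 \sum_{j,l} C_{jl}^i\, E_j^* \wedge E_l^*$, while $\diff V_i^* = 0$.

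The step I expect to be the main obstacle is the bracket bookkeeping in the pullback algebroid: one must justify that, in the flat trivialization, lifting the $E_i$ preserves the structure functions $C_{ij}^k$ and makes the horizontal and vertical generators commute, since in a frame not adapted to the trivialization the curvature of the connection would contaminate $[E_i,V_j]$ and hence the coefficients of $\diff E_i^*$. Once the frame is chosen compatibly with the product trivialization, so that the connection is flat and the anchors $\rho_{ij}$ are independent of the fibre coordinates $\vec r$, the remaining verifications are the routine substitutions above.
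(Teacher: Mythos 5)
Your proposal is correct and follows essentially the same route as the paper's proof: both use the product trivialization to split $\ee{T}(U \times V) \simeq \ee{T}U \times \mathrm{T}V$, take the frame $E_1, \ldots, E_p, V_1, \ldots, V_q$ with brackets $[E_i, E_j] = \sum_k C_{ij}^k E_k$, $[V_i, V_j] = 0$, $[E_i, V_j] = 0$, and then derive \eqref{eq:EProlFunDiff} and \eqref{eq:EProlSecDiff} by contracting $\diff f$ and $\diff E_i^*$ against the frame using \eqref{eq:Ediff}. The only difference is one of emphasis: you explicitly verify the vanishing of the cross-brackets $[E_i, V_j]$ from the $\vec{q}$-dependence of the $\rho_{il}$ in the flat product trivialization, a point the paper simply asserts.
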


\begin{proof}
    The product $U \times V$ is a local chart of $B$ with coordinates $\vec{q}, \vec{r}$. In local coordinates, the bundle projection $\tau\colon U \times V \longrightarrow U$ is expressed as $\tau(\vec{q}, \vec{r}) = \vec{q}$. From the definition of pullback structure, $E_B(U \times V) = E_M(U) \times \operatorname{Vec}(V)$ and, as a consequence, we have locally $\ee{T}(U \times V) \simeq \ee{T}U \times \mathrm{T}V$.
    
    Take now local generators $\langle E_1, \ldots, E_p \rangle$ of $\ee{T}U$, which induce coordinates $\vec{q}$, $\vec{p}$ in $\ee{T}U$. Consider now the local coordinates $\vec{r}$, $\vec{s}$ in $\mathrm{T}V$ and let $V_1, \ldots, V_q$ be a basis of sections associated to these coordinates. As a consequence of the preceding discussion, the coordinates $\vec{q}, \vec{r}, \vec{p}, \vec{s}$ are associated to a local chart of $\ee{T}(U \times V)$ with local sections $E_1, \ldots, E_p, V_1, \ldots, V_q$. From this definition we have
    \begin{equation*}
        E_i = \sum_{j = 1}^n \rho_{ij} \frac{\partial}{\partial q_j}, \quad [E_i, E_j] = \sum_{k = 1}^p C_{ij}^k E_k, \quad \text{and} \quad V_i = \frac{\partial}{\partial r_i}, \quad [V_i, V_j] = 0,
    \end{equation*}
    and the cross Lie brackets $[E_i, V_j]$ vanish. We compute the contraction of the differential $\diff f$ with the local generators, obtaining
    \begin{equation*}
        \langle \diff f, E_i \rangle = \mathcal{L}_{\sum_{j = 1}^n \rho_{ij} \partial_{q_j}} f = \sum_{j = 1}^{n} \rho_{ij} \frac{\partial f}{\partial q_j}, \quad \langle \diff f, V_i \rangle = \mathcal{L}_{\partial_{r_i}} f = \frac{\partial f}{\partial r_i}.
    \end{equation*}
    Equation \eqref{eq:EProlFunDiff} is a direct consequence of these computations, as
    \begin{equation*}
        \diff f = \sum_{i = 1}^p \sum_{j = 1}^n \rho_{ij} \frac{\partial f}{\partial q_j} E_i^* + \sum_{i = 1}^{q} \frac{\partial f}{\partial r_i} V_i^*.
    \end{equation*}
    
    To obtain equations \eqref{eq:EProlSecDiff} we have to compute the contraction of $\diff E_i^*$ following equation \eqref{eq:Ediff}. A direct computation shows
    \begin{align*}
        \diff E_i^* (E_j, E_k) &= \mathcal{L}_{E_j} \langle E_i^*, E_k \rangle - \mathcal{L}_{E_j} \langle E_i^*, E_k \rangle - \langle E_i^*, [E_j, E_k] \rangle \\
        &= - C_{jk}^i, \\
        \diff E_i^* (E_j, V_k) &= \mathcal{L}_{E_j} \langle E_i^*, V_k \rangle - \mathcal{L}_{V_k} \langle E_i^*, E_j \rangle - \langle E_i^*, [E_j, V_k] \rangle \\
        &= 0, \\
        \diff E_i^* (V_j, V_k) &= \mathcal{L}_{V_j} \langle E_i^*, V_k \rangle - \mathcal{L}_{V_k} \langle E_i^*, V_j \rangle - \langle E_i^*, [V_j, V_k] \rangle \\
        &= 0.
    \end{align*}
    As a consequence, we have
    \begin{equation}
        \diff E_i^* = - \frac{1}{2} \sum_{j,k = 1}^p C_{jk}^i E_j^* \wedge E_k^*.
    \end{equation}
    A similar computation shows that $\diff V_i^* = 0$.
\end{proof}

We will eventually have to consider the product of two $E$-manifolds. The following proposition states that the obvious product structure fulfills the desired universal property.

\begin{proposition} \label{prop:UnivPropEProd}
    Let $(M, E_M)$ and $(N, E_N)$ be two $E$-manifolds. The product $E_M \times E_N$ is an $E$-structure in $M \times N$, and the $E$-manifold $(M \times N, E_M \times E_N)$ satisfies the product universal property: for every $E$-manifold $(L, E_L)$ and every pair of $E$-maps $f\colon L \longrightarrow M$ and $g\colon L \longrightarrow N$ there exists a unique $E$-map $h\colon L \longrightarrow M \times N$ making diagram \eqref{eq:ProdDiag} commute.
    \begin{equation} \label{eq:ProdDiag}
        \begin{tikzcd}[sep=large]
        	&& M \\
        	L & {M \times N} \\
        	&& N
        	\arrow["h", from=2-1, to=2-2]
        	\arrow["{p_1}", from=2-2, to=1-3]
        	\arrow["{p_2}"', from=2-2, to=3-3]
        	\arrow["f", curve={height=-18pt}, from=2-1, to=1-3]
        	\arrow["g"', curve={height=18pt}, from=2-1, to=3-3]
        \end{tikzcd} \qedhere
    \end{equation}
\end{proposition}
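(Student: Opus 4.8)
The plan is to prove the statement in two stages: first verify that $E_M \times E_N$ is a genuine $E$-structure in the sense of Definition \ref{def:EManDef}, and then establish the universal property by producing the mediating map explicitly and arguing its uniqueness. Working locally, I would pick a point $(m,n)$, a chart $U \subseteq M$ with local generators $E_1, \ldots, E_p$ of $E_M$ and structure functions $\rho^M_{ij}, C^{M}_{ij,k}$, and a chart $V \subseteq N$ with generators $F_1, \ldots, F_q$ of $E_N$ and structure functions $\rho^N, C^N$. Lifting each $E_i$ and $F_j$ to $U \times V$ through the two projections, the candidate generators of $(E_M \times E_N)(U \times V)$ are $E_1, \ldots, E_p, F_1, \ldots, F_q$, which is finite. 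For freeness, the lifted $E_i$ only carry components along the $M$-coordinates while the lifted $F_j$ only carry components along the $N$-coordinates, so a vanishing combination $\sum_i a_i E_i + \sum_j b_j F_j = 0$ splits into $\sum_i a_i E_i = 0$ and $\sum_j b_j F_j = 0$; freeness of $E_M$ and $E_N$, applied along slices in the complementary variable, forces all coefficients to vanish. For involutivity, the cross-brackets $[E_i, F_j]$ vanish since the fields depend on disjoint coordinates, while $[E_i, E_j]$ and $[F_i, F_j]$ remain in $E_M$ and $E_N$ by their own involutivity; hence $[E_M \times E_N, E_M \times E_N] \subseteq E_M \times E_N$.

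Next I would identify the associated $E$-tangent bundle. By the Serre--Swan theorem \ref{thm:serreswan} the generators above realize $\ee{T}(M \times N)$ as the Whitney sum of pullbacks $p_1^* \ee{T}M \oplus p_2^* \ee{T}N$, with anchor $\rho_{M \times N} = \rho_M \oplus \rho_N$ and a bracket whose only non-trivial structure functions are the pulled-back $C^M$ and $C^N$ (the block-diagonal form recorded in the first step). In particular the two projections $P_1\colon \ee{T}(M \times N) \to \ee{T}M$ and $P_2\colon \ee{T}(M \times N) \to \ee{T}N$, covering $p_1$ and $p_2$, are $E$-maps in the sense of Definition \ref{def:EMap}.

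For the universal property, given $E$-maps $(f, F)\colon L \to M$ and $(g, G)\colon L \to N$, the base map is forced to be $h = (f, g)$ by the universal property of the product of manifolds, and under the decomposition $\ee{T}(M \times N) = p_1^* \ee{T}M \oplus p_2^* \ee{T}N$ the bundle map must be $H(v) = (F(v), G(v))$, which indeed covers $h$. I would then verify the two commuting squares of Definition \ref{def:EMap} componentwise, each following from the corresponding square for $F$ and for $G$ together with the splitting of the anchor, and finally check the morphism condition $\diff H^* = H^* \diff$: since the $E$-differential on $M \times N$ splits along the two factors, $H^*$ of an $E$-form decomposes into an $M$-part pulled back by $F^*$ and an $N$-part pulled back by $G^*$, so the identity reduces to $\diff F^* = F^* \diff$ and $\diff G^* = G^* \diff$, which hold by hypothesis. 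Uniqueness is then immediate: any mediating $E$-map must have base map $(f,g)$, and composing its bundle map with $P_1$ and $P_2$ recovers $F$ and $G$, pinning down $H$.

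The main obstacle is the morphism condition $\diff H^* = H^* \diff$: one has to see precisely that the $E$-differential on $M \times N$ carries no mixed terms, so that pullback by $H$ genuinely decouples into the two factor pullbacks. This is exactly where the vanishing of the cross-brackets $[E_i, F_j]$ — equivalently the block-diagonal form of the structure functions established in the first step — does the real work; everything else is a bookkeeping of commuting diagrams and the elementary product property of the underlying manifolds.
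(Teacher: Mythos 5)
Your proof is correct. Note that the paper itself states this proposition without giving any proof, so there is nothing to compare against; your argument supplies the missing details in what is clearly the intended, natural way: local generators $E_1,\dots,E_p,F_1,\dots,F_q$ with block-diagonal structure functions, the identification $\ee{T}(M\times N)\simeq p_1^*\,\ee{T}M\oplus p_2^*\,\ee{T}N$ via Serre--Swan (Theorem \ref{thm:serreswan}), and componentwise verification of the diagrams in Definition \ref{def:EMap}. The only step you state too quickly is the last one, $\diff H^*=H^*\diff$: a general $E$-form on $M\times N$ has coefficient functions mixing both factors, so the identity does not literally ``reduce to'' $\diff F^*=F^*\diff$ and $\diff G^*=G^*\diff$. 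The patch is standard: since $H^*$ is an algebra morphism and both sides of the identity are antiderivations with respect to the wedge product, it suffices to check it on functions and on the generating one-forms. On functions $\phi\in\mathcal{C}^\infty(M\times N)$ the identity $\diff_L(\phi\circ h)=H^*\diff\phi$ is precisely the anchor square $\diff h\circ\rho_L=\rho_{M\times N}\circ H$, which you already verified componentwise; on the lifted generators $p_1^*E_i^*$ and $p_2^*F_j^*$ it follows from the vanishing of the cross-brackets $[E_i,F_j]$, exactly as you indicate in your closing remark. With that bookkeeping made explicit, the argument is complete, including the uniqueness part.
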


\section{Symplectic geometry on $E$-manifolds} \label{sec:symplgeom}

Having introduced the essential tools in the differential geometry of $E$-manifolds, we are in position to describe their symplectic geometry. Many definitions and results are straightforward generalizations of their classical counterparts and have already been described in the literature. The definition of symplectic form in an $E$-manifold can be traced back to \cite{NestTsyganEMan}. The results concerning the formulation of Hamiltonian geometry and Marsden-Weinstein reduction over $E$-manifolds have been extracted from \cite{MarreroDeLeonAlgebroids, MarreroReduction, MarreroReductionGut}.

\begin{definition}
    Let $(M, E)$ be an $E$-manifold. An \emph{$E$-symplectic form} is a two-form $\omega \in \ee{\Omega}^2(M)$ which is non-degenerate and closed, that is, $\diff \omega = 0$.
\end{definition}

\begin{remark} \label{rmk:NonDegSymp}
    Here, non-degeneracy means that the vector bundle morphism
    \begin{equation}
        \begin{array}{rccc}
            \omega^\flat\colon & \ee{T}M & \longrightarrow & \ee{T}^*M \\
             & X & \longmapsto & \iota_X \omega
        \end{array}
    \end{equation}
    is a vector bundle isomorphism. The inverse map is denoted by $\omega^\sharp$ and, together, are called the \emph{musical isomorphisms} induced by $\omega$.
\end{remark}

The non-degeneracy condition can be used to define the notion of Hamiltonian vector field of a function in complete analogy with the smooth case.

\begin{definition}
    Let $(M, E)$ be an $E$-symplectic manifold. The \emph{Hamiltonian vector field} of a function $H \in \mathcal{C}^\infty(M)$, denoted by $X_H \in E$, is the unique $E$-field which satisfies
    \begin{equation}
        \iota_{X_H} \omega = - \diff H. \qedhere
    \end{equation}
\end{definition}

\subsection{The canonical symplectic form in $\ee{T}^* M$}\label{sec:canonical}

The Hamiltonian formalism in classical mechanics uses explicitly the Liouville one-form of the cotangent bundle $\mathrm{T}^* M$. There exists an analogue construction for the $E$-cotangent bundle.


Definition \ref{def:ProlFibreBun} of prolongation of a fibre bundle makes the natural projection $\tau\colon \ee{T}^*M \longrightarrow M$ an $E$-map. This condition is enough to define a Liouville form in the $E$-manifold $\ee{T}^*M$. This form gives a natural framework for Hamiltonian mechanics in $E$-manifolds by means of the canonical symplectic form, defined analogously to the smooth case. These constructions are particular cases of the Hamiltonian formalism for Lie algebroids as in \cite{MarreroDeLeonAlgebroids}.

\begin{definition}
    Let $\ee{T}M$ be an $E$-manifold and consider the pullback manifold $\ee{T} \ee{T}^* M$. The one-form $\lambda \in \ee{\Omega}^1(\ee{T}^* M)$, defined by its action on an element $X \in \ee{T}_\alpha (\ee{T}^*M)$ as
    \begin{equation} \label{eq:LiouvForm}
        \langle \lambda, X \rangle = \big \langle \tau_{\ee{T}\ee{T}^*M}(X), \diff_\alpha \tau_{\ee{T}^*M} (X) \big \rangle,
    \end{equation}
    is called the \emph{Liouville form} of $\ee{T}^* M$
\end{definition}

\begin{definition}
    Let $\ee{T}M$ be an $E$-manifold and consider the prolongation $\ee{T}\ee{T}^* M$. The \emph{canonical symplectic form} is defined in terms of the Liouville form as
    \begin{equation} \label{eq:CanSympForm}
        \omega = \diff \lambda. \qedhere
    \end{equation}
\end{definition}

We give now a local description of the Liouville form and the canonical symplectic form in the induced coordinates \eqref{prop:ENatCoord}. This result will be used to conclude that $\omega$ is non-degenerate and, consequently, a symplectic form.

\begin{lemma}
    Let $(M, E_M)$ be an $E$-manifold and consider the $E$-cotangent bundle $\ee{T}^*M$ with the pullback $E$-structure. Take an open set $U$, let $E_1, \ldots, E_p$ be a set of local generators of $\operatorname{Vec}_U(\ee{T}M)$ and consider the dual basis $F_1, \ldots, F_p$ in $\ee{T}^*M$. In natural coordinates \ref{prop:ENatCoord}, the Liouville one-form \eqref{eq:LiouvForm} is expressed as
    \begin{equation} \label{eq:ELiouFormNatCoord}
        \lambda = \sum_{i = 1}^n r_i E_i^*. \qedhere
    \end{equation}
\end{lemma}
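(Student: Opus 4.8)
The plan is to compute the Liouville form $\lambda$ in natural coordinates by evaluating its defining equation \eqref{eq:LiouvForm} on each element of the local basis $E_1, \ldots, E_p, V_1, \ldots, V_q$ of the pullback $E$-tangent bundle $\ee{T}\ee{T}^*M$, and then to reconstruct $\lambda$ as a combination of the dual basis $E_i^*, V_j^*$. The starting point is the observation that $\ee{T}^*M$ carries natural coordinates $\vec{q}, \vec{p}, \vec{r}, \vec{s}$ as described in Proposition \ref{prop:ENatCoord}: here $\vec{q}$ are the base coordinates on $U$, the $\vec{r}$ are the fibre coordinates of $\ee{T}^*M$ (so that a covector $\alpha \in \ee{T}^*_mM$ is written $\alpha = \sum_i r_i E_i^*$), and $\vec{p}, \vec{s}$ are the corresponding coordinates along the prolonged directions.

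**First**, I would unwind the pairing on the right-hand side of \eqref{eq:LiouvForm}. For $X \in \ee{T}_\alpha(\ee{T}^*M)$, the term $\tau_{\ee{T}\ee{T}^*M}(X)$ is just the footpoint $\alpha \in \ee{T}^*M$, viewed as a covector in $\ee{T}^*_mM$ where $m = \tau_{\ee{T}^*M}(\alpha)$; and $\diff_\alpha \tau_{\ee{T}^*M}(X)$ is the image of $X$ under the prolonged projection, landing in $\ee{T}_mM$. So the Liouville form pairs the footpoint covector $\alpha$ against the $\ee{T}M$-component of $X$. Concretely, writing $\alpha = \sum_i r_i E_i^*$ and expanding $X$ in the basis $E_1, \ldots, E_p, V_1, \ldots, V_q$ of $\ee{T}_\alpha(\ee{T}^*M)$, the projection $\diff_\alpha \tau_{\ee{T}^*M}$ kills the vertical directions $V_j$ (which are tangent to the fibre of $\ee{T}^*M$) and sends each $E_i$ to the corresponding base section. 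Thus $\langle \lambda, E_i \rangle = \langle \alpha, E_i \rangle = r_i$ and $\langle \lambda, V_j \rangle = 0$.

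**Combining** these contractions, since the $E_i^*, V_j^*$ form the dual basis, one reads off $\lambda = \sum_{i=1}^p r_i E_i^*$, which is \eqref{eq:ELiouFormNatCoord} (the index range, written as $n$ in the statement, should be the number $p$ of generators). The one genuine subtlety — and the step I expect to require the most care — is correctly identifying how the natural fibre coordinates $\vec{r}$ of $\ee{T}^*M$ interact with the prolongation: one must be sure that the prolonged projection $\diff \tau_{\ee{T}^*M}$ really does annihilate the vertical sections $V_j$ and acts as the identity on the base generators $E_i$ after the identification $\ee{T}\ee{T}^*M \simeq \mathrm{V}(\ee{T}^*M) \times \ee{T}M$ furnished by the preceding proposition. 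Everything else is a direct unpacking of definitions, and the tautological nature of the Liouville form (it is defined precisely to pair the base covector against the projected tangent vector) makes the computation mirror the classical case $\lambda = \sum_i p_i\, \diff q_i$ exactly.
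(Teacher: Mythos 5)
Your proof is correct and takes essentially the same route as the paper's: both identify, in natural coordinates, the footpoint projection $\tau_{\ee{T}\ee{T}^*M}$ as $(\vec{q},\vec{r})$ and the prolonged projection $\diff\tau_{\ee{T}^*M}$ as $(\vec{q},\vec{p})$ (so it annihilates the vertical sections $V_j$), and then read off $\lambda = \sum_i r_i E_i^*$ from the tautological pairing evaluated on the basis $E_1,\ldots,E_p,V_1,\ldots,V_q$. Your side remark that the summation bound $n$ in the statement should really be $p$, the number of local generators, is also consistent with the paper's own computation, which produces $\sum_{i=1}^p r_i p_i$.
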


\begin{proof}
    In natural coordinates \ref{prop:ENatCoord}, the bundle projections are given by $\tau_{\ee{T}\ee{T}^*M}(\vec{q}, \vec{p}, \vec{r}, \vec{s}) = (\vec{q}, \vec{r})$ and $\diff \tau_{\ee{T}^*M}(\vec{q}, \vec{p}, \vec{r}, \vec{s}) = (\vec{q}, \vec{p})$. Moreover, as $\{F_i\}$ is the dual basis of $\{E_i\}$, the natural pairing of $\ee{T}M$ and $\ee{T}^*M$ in these coordinates reads as $\langle \vec{r}, \vec{p} \rangle = \sum_{i = 1}^p r_i p_i$.
    
    Consider now an element $X \in \ee{T}_\alpha \ee{T}^*M$. By the definition of pullback bundle, there are scalars $p_i$ and $s_i$ such that
    \begin{equation*}
        X = \sum_{i = 1}^p p_i E_i + \sum_{i = 1}^p s_i V_i.
    \end{equation*}
    As a consequence of the local expressions for $\tau_{\ee{T}\ee{T}^*M}$ and $\diff \tau_{\ee{T}^*M}$, we have that
    \begin{equation*}
        \langle \lambda, X \rangle = \big\langle \tau_{\ee{T}\ee{T}^*M} (X), \diff \tau_{\ee{T}^*M}(X) \big\rangle = \langle \vec{r}, \vec{p} \rangle = \sum_{i = 1}^p r_i p_i.
    \end{equation*}
   Equation \eqref{eq:ELiouFormNatCoord} follows from this result.
\end{proof}

\begin{lemma}
    Let $(M, E_M)$ be an $E$-manifold and consider the $E$-cotangent bundle $\ee{T}^*M$ with the pullback structure. Take an open set $U$, let $E_1, \ldots, E_p$ be a set of local generators of $\operatorname{Vec}_U(\ee{T}M)$ and consider the dual basis $F_1, \ldots, F_p$ in $\ee{T}^*M$. In natural coordinates \ref{prop:ENatCoord}, the canonical symplectic form \eqref{eq:CanSympForm} is expressed as
    \begin{equation} \label{eq:ECanSympFormNatCoord}
        \omega = \sum_{i = 1}^p V_i^* \wedge E_i^* - \frac{1}{2} \sum_{i,j,k = 1}^p r_i C_{jk}^i E_j^* \wedge E_k^*.
    \end{equation}
    As a consequence, the symplectic form $\omega$ is non-degenerate.
\end{lemma}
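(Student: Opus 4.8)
The plan is to compute $\omega=\diff\lambda$ directly from the local expression $\lambda=\sum_{i=1}^p r_i E_i^*$ established in the previous lemma, feeding in the two differentiation formulas \eqref{eq:EProlFunDiff} and \eqref{eq:EProlSecDiff} from Proposition \ref{prop:ENatCoord}. First I would expand by the Leibniz rule, $\diff\lambda=\sum_{i=1}^p\big(\diff r_i\wedge E_i^*+r_i\,\diff E_i^*\big)$, and then evaluate the two ingredients separately. For $\diff r_i$ I apply \eqref{eq:EProlFunDiff} to the fibre coordinate function $f=r_i$: since $r_i$ depends only on the fibre variables, all the $\partial r_i/\partial q_j$ contributions vanish and $\partial r_i/\partial r_l=\delta_{il}$, so $\diff r_i=V_i^*$. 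For $\diff E_i^*$ I quote \eqref{eq:EProlSecDiff}, namely $\diff E_i^*=-\tfrac12\sum_{j,k=1}^p C_{jk}^i\,E_j^*\wedge E_k^*$. Substituting both yields exactly
\begin{equation*}
    \omega=\sum_{i=1}^p V_i^*\wedge E_i^*-\frac12\sum_{i,j,k=1}^p r_i\,C_{jk}^i\,E_j^*\wedge E_k^*,
\end{equation*}
which is the claimed formula.

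For non-degeneracy the cleanest route is to contract $\omega$ against the local frame $(E_1,\dots,E_p,V_1,\dots,V_p)$ of $\ee{T}(\ee{T}^*M)$ and show that the musical map $\omega^\flat$ is onto. I would compute $\iota_{V_i}\omega=E_i^*$, using that the second summand involves no $V^*$, and $\iota_{E_j}\omega=-V_j^*-\sum_{i,k}r_i\,C_{jk}^i\,E_k^*$, using the skew-symmetry $C_{jk}^i=-C_{kj}^i$ to collapse the two contributions of the antisymmetric term. The first identity shows every $E_i^*$ lies in the image of $\omega^\flat$; the second then shows every $V_j^*$ does too, since its accompanying $E_k^*$-terms are already known to be hit. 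Hence $\omega^\flat$ is surjective, and as source and target have the same rank $2p$ it is an isomorphism, i.e.\ $\omega$ is non-degenerate. Equivalently, in this frame the matrix of $\omega$ is $\left(\begin{smallmatrix} -R & -I\\ I & 0\end{smallmatrix}\right)$ with $R_{jk}=\sum_i r_i C_{jk}^i$ antisymmetric, and the identity off-diagonal blocks force $\det=\pm1$ regardless of $R$.

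I do not anticipate a genuine obstacle: the argument is essentially bookkeeping. The two points that require care are the justification of $\diff r_i=V_i^*$, which relies on reading $r_i$ as a pullback (fibre) coordinate so that \eqref{eq:EProlFunDiff} applies, and the observation in the final step that the block structure renders invertibility of $\omega$ independent of the structure functions $C_{jk}^i$, so that the twisting by the $E$-algebroid bracket never spoils non-degeneracy.
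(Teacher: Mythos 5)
Your derivation of the coordinate formula \eqref{eq:ECanSympFormNatCoord} is exactly the paper's: expand $\omega=\diff\lambda$ by the Leibniz rule from $\lambda=\sum_i r_i E_i^*$, get $\diff r_i=V_i^*$ from \eqref{eq:EProlFunDiff}, and substitute \eqref{eq:EProlSecDiff} for $\diff E_i^*$. (Incidentally, the paper's displayed intermediate step contains a typo, writing $\sum_i r_i \diff V_i^*$ where it must be $\sum_i r_i \diff E_i^*$; your version is the correct one.) Where you genuinely part ways is the non-degeneracy check. The paper settles it with a single Pfaffian-type computation, $\omega^p/p! = V_1^*\wedge\cdots\wedge V_p^*\wedge E_1^*\wedge\cdots\wedge E_p^*$, which is a volume form, so $\omega$ is non-degenerate. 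You instead compute the contractions $\iota_{V_i}\omega=E_i^*$ and $\iota_{E_j}\omega=-V_j^*-\sum_{i,k}r_i C^i_{jk}E_k^*$ (both correct, including the use of skew-symmetry of $C^i_{jk}$) and deduce surjectivity, hence bijectivity, of $\omega^\flat$; equivalently, the matrix of $\omega$ in the frame is $\left(\begin{smallmatrix}-R&-I\\ I&0\end{smallmatrix}\right)$ with determinant $1$ regardless of $R$. Both routes are elementary and valid: the paper's is shorter, while yours exhibits the musical isomorphism explicitly and makes transparent the conceptual point that the algebroid structure functions, entering only in the $E^*\wedge E^*$ block, can never spoil invertibility.
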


\begin{proof}
    Let us consider the natural coordinates \ref{prop:ENatCoord}. Applying equations \eqref{eq:EProlFunDiff} and \eqref{eq:EProlSecDiff} to the local expression for the Liouville form \eqref{eq:ELiouFormNatCoord} we have
    \begin{equation*}
        \omega = \sum_{i = 1}^p \diff r_i \wedge E_i^* + \sum_{i = 1}^p r_i \diff V_i^* = \sum_{i = 1}^p V_i^* \wedge E_i^* - \frac{1}{2} \sum_{i,j,k = 1}^p r_i C_{jk}^i E_j^* \wedge E_k^*.
    \end{equation*}
    This expression implies the non-degeneracy of $\omega$. A straightforward computation shows that
    \begin{equation*}
        \frac{\omega^p}{p!} = V_1^* \wedge \cdots \wedge V_p^* \wedge E_1^* \wedge \cdots \wedge E_p^*
    \end{equation*}
    and, as a consequence, $\omega^p$ is a volume form.
\end{proof}

We will describe now the canonical symplectic forms associated to the examples \ref{ex:bMan}, \ref{ex:cMan} and \ref{ex:RegFol}.

\begin{example}
    Consider a $b$-manifold $(M, Z)$. In this setting, the $E$-tangent bundle $\ee{T}M$ is called the \emph{$b$-tangent bundle} and is written $\bi{T}M$. Similarly, the $E$-cotangent bundle is called the \emph{$b$-cotangent bundle}, $\bi{T}^* M$.
    The pullback structure on $\bi{T}M$ is equivalent to the $b$-structure of the induced hypersurface $Z_{\bi{T}M} \subset \bi{T}M$. This shows that definition \ref{def:ProlFibreBun} generalizes the $b$-manifold structure in the b-cotangent bundle, which can be found in \cite{GuilleminMirandaPires}.
    
    Take now an adapted chart $(U, \varphi)$ to $Z$ with coordinates $\vec{q}$; the set of sections
    \begin{equation*}
        q_1 \frac{\partial}{\partial q_1}, \frac{\partial}{\partial q_2}, \ldots, \frac{\partial}{\partial q_n}, \frac{\partial}{\partial p_1}, \ldots, \frac{\partial}{\partial p_n}.
    \end{equation*}
    is a basis of $\bi{T}M$. Notice that the structure functions $C_{ik}^k$ vanish. In the natural coordinates \ref{prop:ENatCoord}, the basis of sections $E_i^*, V_i^*$ can be explicitly described as
    \begin{equation*}
        \frac{\diff q_1}{q_1}, \ldots, \diff q_n, \diff p_1, \ldots, \diff p_n.
    \end{equation*}
    As a consequence of equation \eqref{eq:CanSympForm}, the canonical symplectic form is
    \begin{equation}
        \omega = \diff p_1 \wedge \frac{\diff q_1}{q_1} + \sum_{i = 2}^n \diff p_i \wedge \diff q_i.
    \end{equation}
    This is the canonical $b$-symplectic form in the $b$-cotangent bundle of \cite{GuilleminMirandaPires}.
\end{example}

\begin{example}
    Consider now, in the setting of example \ref{ex:cMan}, a $c$-manifold $(M, Z)$. Following the previous example, a set of local sections of a chart $(U, \varphi)$ around a point $p \in i(Z_k)\setminus i(Z_{k + 1})$ with coordinates $\vec{q}$ adapted to the immersed hypersurface $i(Z)$ induces the basis of local sections $E_i^*$, $V_i^*$ as
    \begin{equation*}
        \frac{\diff q_1}{q_1}, \ldots, \frac{\diff q_k}{q_k}, \diff q_{k + 1}, \ldots, \diff q_n, \diff p_1, \ldots, \diff p_n. 
    \end{equation*}
    As before, the structure functions $C_{ij}^k$ vanish. Equation \eqref{eq:CanSympForm} implies that the canonical Liouville form is expressed as
    \begin{equation}
        \omega = \sum_{i = 1}^k \diff p_i \wedge \frac{\diff q_i}{q_i} + \sum_{i = k + 1}^n \diff p_i \wedge \diff q_i.
    \end{equation}
    
    We remark that this construction is canonical in the sense that it can be canonically constructed in the $E$-cotangent bundle $\ee{T}^*M$, in analogy with the smooth canonical symplectic form. Miranda and Scott proved in \cite{MirandaScott} that there is no analogue of Darboux's theorem for symplectic forms over $E$-manifolds; rather, there are some cohomological obstructions to the existence of normal forms.
\end{example}

\begin{example}
    In the setting of example \ref{ex:RegFol}, consider a foliated chart $(U, \varphi)$ with coordinates $\vec{q}$. The basis of sections $E_i^*$, $V_i^*$ in \ref{prop:ENatCoord} are locally described as $\diff q_1, \ldots, \diff q_k, \diff p_1, \ldots, \diff p_k$ and, as the structure functions vanish once again, equation \eqref{eq:CanSympForm} is read as
    \begin{equation}
        \omega = \sum_{i = 1}^k \diff p_i \wedge \diff q_i. \qedhere
    \end{equation}
\end{example}

\subsection{Lifting $E$-maps, cotangent lifts and symplectic properties}

The definition of $E$-map consists of two pieces of data; one is the map between the base manifolds, while the other is the covering map between the corresponding $E$-manifolds (in fact, the covering map fully determines the base map). We could wonder if, similar to the smooth situation, the base map completely determines the covering map via a tangent functor. The answer is negative, as the following example shows.

\begin{example}
    Let us consider a $b$-manifold $(M, Z)$ and $p_0 \in Z$. Define the map $f\colon M \longrightarrow M$ as $f(p) = p_0$ for every $p \in M$. Obviously, $\diff_p f (\bi{\operatorname{Vec}} (p)) = 0 \subset \bi{\operatorname{Vec}}(p_0)$, and the map admits a restriction $F\colon \bi{T} M \longrightarrow \rho_E(\bi{T}M)$, but there is no unique way to lift this map to a map $G\colon \bi{T}M \longrightarrow \bi{T}M$. In fact, for any $X \in \bi{T}_{p_0} M$ the map $G\colon \bi{T}M \longrightarrow \bi{T}M$ defined by $G(Y) = X$ for every $Y \in \bi{T}M$ makes diagrams \eqref{eq:EMapDiag} commutative.
\end{example}

The key point of the previous counterexample is that the anchor $\rho$ of an $E$-manifold only identifies $\ee{T}M$ with $\mathrm{T}M$ in an open and dense subset of $\mathrm{T}M$. We are defining $f$ such that $\diff f$ does not intersect the isomorphism locus of $\rho_E$. However, there is a concrete instance where the map $f$ uniquely lifts to an $E$-map. This class of maps is remarkably important for us, because we can define an analogous version of the cotangent lift.

\begin{proposition} \label{prop:CompDiffDetEMaps}
    Let $(M, E_M)$ and let $f\colon M \longrightarrow M$ be a smooth diffeomorphism such that $\diff_p f (E_M(p)) = E_M(f(p))$ for every $p \in M$. Then, there exists a unique $E$-map covering $f$, which we denote by $\diff f$ and call the \emph{$E$-tangent lift} of $f$.
\end{proposition}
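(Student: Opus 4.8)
The plan is to build the lift $F\colon \ee{T}M \longrightarrow \ee{T}M$ straight out of the module $E_M$ via the Serre–Swan identification, and then to extract uniqueness from the generic injectivity of the anchor. The clean reading of the hypothesis is that pushforward by $f$ preserves the $E$-structure as a submodule, i.e. $f_* E_M = E_M$ (this is the natural meaning of $f$ being an $E$-diffeomorphism, and it immediately implies the stated pointwise equality, since $\{(f_*X)(f(p)) : X\in E_M\} = \diff_p f\,(E_M(p))$ while the left-hand side is all of $E_M(f(p))$). Under this reading $f_*$ restricts to an $\R$-linear bijection $f_*\colon E_M \longrightarrow E_M$ with $f_*(gX) = (g\circ f^{-1})\,f_*X$; that is, it is a module isomorphism covering the ring automorphism $\phi = (f^{-1})^*$ of $\mathcal{C}^\infty(M)$. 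The whole proof is the functorial passage from this module isomorphism to a bundle isomorphism.

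For existence I would upgrade $f_*$ to a bundle map fibrewise, reusing the description of the anchor from the generic injectivity lemma: $\ee{T}_p M \simeq \Gamma_{U_p}(\ee{T}M)/\mathcal{I}_p\Gamma_{U_p}(\ee{T}M)$, compatibly with $\tilde\rho_E\colon \Gamma(\ee{T}M)\xrightarrow{\sim} E_M$. Because $g$ vanishes at $p$ exactly when $g\circ f^{-1}$ vanishes at $f(p)$, we have $f_*(\mathcal{I}_p\Gamma) = \mathcal{I}_{f(p)}\Gamma$, so $f_*$ descends to a linear isomorphism $F_p\colon \ee{T}_p M \longrightarrow \ee{T}_{f(p)} M$, $F_p([\sigma]_p) = [f_*\sigma]_{f(p)}$. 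Smoothness of $p\mapsto F_p$ follows because, for local frames $E_1,\dots,E_p$ near $p_0$ and $E_1',\dots,E_p'$ near $f(p_0)$, the preservation $f_*E_M = E_M$ gives $f_*E_i = \sum_j a_{ij}E_j'$ with $a_{ij}\in\mathcal{C}^\infty$, and these $a_{ij}$ are precisely the matrix entries of $F$ in the induced frames. This yields a smooth bundle isomorphism $F\colon \ee{T}M\longrightarrow\ee{T}M$ covering $f$, so the left square of \eqref{eq:EMapDiag} commutes by construction.

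Next I would verify $(f,F)$ is an $E$-map. Anchor compatibility $\rho_E\circ F = \diff f\circ\rho_E$ reduces to a one-line computation on classes: $\rho_{f(p)}(F_p[\sigma]_p) = (\tilde\rho_E(f_*\sigma))(f(p)) = \diff_p f\,((\tilde\rho_E\sigma)(p)) = \diff_p f(\rho_p[\sigma]_p)$, using that $\tilde\rho_E$ intertwines the two incarnations of $f_*$; this is the right square of \eqref{eq:EMapDiag}. Since $f_*$ preserves the Lie bracket of vector fields, $F$ preserves the algebroid bracket, and substituting anchor- and bracket-compatibility into the Koszul formula \eqref{eq:Ediff} gives $\diff\circ F^* = F^*\circ\diff$. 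Hence $(f,F)$ is a Lie algebroid morphism, i.e. an $E$-map covering $f$.

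Finally, uniqueness is where generic injectivity does the work. Let $K\subseteq M$ be the closed, nowhere dense locus where $\rho_E$ drops rank; since $\diff_p f$ is an isomorphism, $\dim E_M(f(p)) = \dim E_M(p)$, so $\operatorname{rank}\rho_E|_{f(p)} = \operatorname{rank}\rho_E|_p$ and therefore $f(K)=K$ and $f(M\setminus K)=M\setminus K$. If $(f,F')$ is any $E$-map covering $f$, the anchor square forces $\rho_E|_{f(p)}\circ F'_p = \diff_p f\circ\rho_E|_p = \rho_E|_{f(p)}\circ F_p$; on $M\setminus K$ the map $\rho_E|_{f(p)}$ is injective, so $F'_p = F_p$ there, and by smoothness together with the density of $M\setminus K$ we get $F'=F$ everywhere. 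The single delicate point — which I expect to be the only genuine obstacle — is the smoothness of $F$ across the singular set $K$ in the existence step: a pointwise definition off $K$ does not extend continuously in general (the image distribution alone does not determine the lift on $K$, as one already sees for an elliptic singularity), and it is exactly the module-level preservation $f_*E_M = E_M$ fed through Serre–Swan that supplies the smooth extension.
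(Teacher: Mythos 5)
Your proof is correct, but it is not the paper's argument, and the difference is substantive. The paper's proof stays entirely at the pointwise level: it uses the hypothesis to define a map $F\colon \ee{T}M \longrightarrow \rho_E(\ee{T}M)$, inverts $\rho_E$ over the open dense set $U \cap f(U)$ where the anchor is an isomorphism, and then simply asserts that the resulting lift ``can be smoothly extended'' to all of $M$ because $U \cap f(U)$ is dense. That extension step is exactly the point you flag as delicate, and the paper gives no justification for it: a smooth map defined on an open dense set need not extend even continuously across the complement. Your construction is what actually closes this gap: you work with the module isomorphism $f_*\colon E_M \longrightarrow E_M$, push it through Serre--Swan to get the bundle map globally, obtain smoothness from the smoothness of the coefficients $a_{ij}$ in $f_* E_i = \sum_j a_{ij} E_j'$, verify the anchor, bracket and differential compatibilities (which the paper omits entirely), and reserve generic injectivity of the anchor for uniqueness --- where density genuinely does suffice, since two smooth bundle maps agreeing on a dense set agree everywhere.

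One caveat, which is really a criticism of the statement rather than of your proof: your ``clean reading'' of the hypothesis, $f_* E_M = E_M$ as submodules, is strictly stronger than the literal pointwise hypothesis $\diff_p f(E_M(p)) = E_M(f(p))$, and the two are \emph{not} equivalent. For the elliptic $E$-structure on $\mathbf{R}^2$ generated by $V = x\frac{\partial}{\partial x} + y\frac{\partial}{\partial y}$ and $W = -y\frac{\partial}{\partial x} + x\frac{\partial}{\partial y}$, the diffeomorphism $f(x,y) = (2x, y)$ satisfies the pointwise condition (both sides equal $\mathrm{T}_p\mathbf{R}^2$ off the origin and $\{0\}$ at the origin), yet $f_* W$ is the linear vector field with matrix $\left(\begin{smallmatrix} 0 & -2 \\ 1/2 & 0 \end{smallmatrix}\right)$, which is not of the form $a\,\mathrm{Id} + b\,J$, so $f_* W \notin E_M$; moreover, solving $\rho_E \circ F = \diff f \circ \rho_E$ off the origin forces frame coefficients such as $(q_1^2/2 + 2q_2^2)/|q|^2$, which have no continuous extension to the origin, so no $E$-map covers this $f$ at all. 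Hence the proposition is false under the pointwise reading, and the module-level hypothesis you adopt is not merely ``natural'' but necessary --- which is also precisely why the paper's density-extension argument cannot work as written. Your closing remark about elliptic singularities is, in effect, this counterexample; stating it explicitly would turn your proof into a correction of both the paper's hypothesis and its proof.
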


\begin{proof}
    From the condition $\diff_p f (E_M(p)) = E_M(f(p))$ for every $p \in M$ we have that there exists a well-defined map $F\colon \ee{T}M \longrightarrow \rho_E(\ee{T}M)$. Let $U \subseteq M$ be the set of points for which $\rho_E$ is an isomorphism, which is open and dense from the construction of the $E$-tangent bundle. As $f$ is a diffeomorphism, $f(U)$ is also open and dense and $U \cap f(U)$ is open and dense. Therefore, over $U \cap f(U)$, $F$ uniquely lifts to a map $\diff f\colon \ee{T}M \longrightarrow \ee{T}M$ which, in fact, is a diffeomorphism because $\rho_E$ and $F$ are diffeomorphisms as well. Finally, as $U \cap f(U)$ is open and dense, $\diff f$ can be smoothly extended to a diffeomorphism $\diff\colon \ee{T}M \longrightarrow \ee{T}M$ over $M$.
\end{proof}

As structure preserving diffeomorphisms in the base manifold uniquely lift to $E$-maps, it is natural to consider lifts to the $E$-cotangent bundle. The following proposition shows that such lifts are well-defined, $E$-maps with respect to the pullback structure in $\ee{T}^*M$ and that the canonical Liouville form \eqref{eq:LiouvForm} is preserved.

\begin{proposition} \label{prop:ECotLiftPresLiouForm}
    Let $(M, E_M)$ be an $E$-manifold and consider the $E$-cotangent bundle $\ee{T}^*M$ with the pullback structure. Any $E$-diffeomorphism $f \in \operatorname{Diff}(\ee{T}M)$ defines an $E$-diffeomorphism $\hat{f} \in \operatorname{Diff}(\ee{T}\ee{T}^*M)$, called the \emph{$E$-cotangent lift}. Additionally, $\hat{f}$ preserves the Liouville one-form \eqref{eq:LiouvForm}.
\end{proposition}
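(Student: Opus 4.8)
The plan is to build the $E$-cotangent lift $\hat f$ as the fibrewise inverse-transpose of the $E$-tangent lift of $f$, check that this map preserves the pullback $E$-structure on $\ee{T}^* M$ so that Proposition \ref{prop:CompDiffDetEMaps} produces a unique lift to $\ee{T}\ee{T}^*M$, and then establish $\hat f^*\lambda = \lambda$ by a naturality argument formally identical to the smooth case.

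First I would invoke Proposition \ref{prop:CompDiffDetEMaps}: the $E$-diffeomorphism $f\in\operatorname{Diff}(\ee{T}M)$ is the $E$-tangent lift $f=\diff f_0$ of a base diffeomorphism $f_0\colon M\to M$ with $\diff_p f_0(E_M(p))=E_M(f_0(p))$, and as such it is a fibrewise linear vector bundle isomorphism of $\ee{T}M$ covering $f_0$. I then define $\hat f\colon \ee{T}^*M\to\ee{T}^*M$ fibrewise by $\langle \hat f(\alpha), w\rangle=\langle \alpha, f^{-1}(w)\rangle$ for $\alpha\in\ee{T}^*_pM$ and $w\in\ee{T}_{f_0(p)}M$, so that $\hat f(\alpha)\in\ee{T}^*_{f_0(p)}M$ and $\hat f$ is a diffeomorphism of the total space $\ee{T}^*M$ covering $f_0$.

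The next step is to show that $\hat f$ preserves the pullback structure $E_{\ee{T}^*M}=\langle(\diff\tau)^{-1}(E_M)\rangle$, so that Proposition \ref{prop:CompDiffDetEMaps} applies to the $E$-manifold $\ee{T}^*M$ and yields a unique $E$-tangent lift $\diff\hat f\in\operatorname{Diff}(\ee{T}\ee{T}^*M)$; this is the $E$-cotangent lift. Working with the local generators $E_1,\dots,E_p,V_1,\dots,V_p$ of $E_{\ee{T}^*M}$ coming from the splitting $\ee{T}(\ee{T}^*M)\simeq \mathrm{V}(\ee{T}^*M)\times\ee{T}M$, I would argue on two fronts: since $\hat f$ is linear on each fibre, $\diff\hat f$ carries vertical fields to vertical fields, hence $\diff\hat f(V_i)\in E_{\ee{T}^*M}$; and since $\tau\circ\hat f=f_0\circ\tau$ gives $\diff\tau\circ\diff\hat f=f\circ\diff\tau$ with $f=\diff f_0$ preserving $\rho_E(E_M)$, the image $\diff\hat f(E_i)$ again projects into $\rho_E(E_M)$ and therefore lies in $E_{\ee{T}^*M}$. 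Running the same argument for $f_0^{-1}$ upgrades the inclusion to the equality $\diff\hat f(E_{\ee{T}^*M})=E_{\ee{T}^*M}$.

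Finally I would verify invariance of the Liouville form. For $X\in\ee{T}_\alpha(\ee{T}^*M)$ the bundle projection gives $\tau_{\ee{T}\ee{T}^*M}(\diff\hat f(X))=\hat f(\alpha)$, while functoriality of the $E$-tangent lift yields $\diff\tau_{\ee{T}^*M}(\diff\hat f(X))=f(\diff\tau_{\ee{T}^*M}(X))$. Substituting into the defining formula \eqref{eq:LiouvForm} and using the definition of $\hat f$ together with $f^{-1}\circ f=\Id$ on the relevant fibre, one computes $(\hat f^*\lambda)(X)=\langle \hat f(\alpha), f(\diff\tau_{\ee{T}^*M}(X))\rangle=\langle \alpha, \diff\tau_{\ee{T}^*M}(X)\rangle=\langle\lambda, X\rangle$, whence $\hat f^*\lambda=\lambda$. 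I expect the only genuine obstacle to be the careful identification of the two projections appearing in \eqref{eq:LiouvForm} and the verification of the naturality identity $\diff\tau_{\ee{T}^*M}\circ\diff\hat f=f\circ\diff\tau_{\ee{T}^*M}$ at the level of the pullback algebroids — most cleanly checked on the open dense locus where $\rho_E$ is an isomorphism and then extended by continuity exactly as in the proof of Proposition \ref{prop:CompDiffDetEMaps}; once this is in place the computation is formally the same as for the classical tautological one-form.
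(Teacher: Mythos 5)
Your proposal is correct and follows essentially the same route as the paper: you define $\hat f$ as the inverse-transpose $(\diff f^{-1})^*$ of the unique tangent lift from Proposition \ref{prop:CompDiffDetEMaps}, verify preservation of the pullback structure via the relation $\diff\tau\circ\diff\hat f = \diff f\circ\diff\tau$ (the paper does this as a single diagram chase rather than splitting into vertical and projectable generators, but the content is identical), and conclude with the same naturality computation $\langle\hat f(\alpha), f(\diff\tau_{\ee{T}^*M}X)\rangle = \langle\alpha,\diff\tau_{\ee{T}^*M}X\rangle$ for the Liouville form.
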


\begin{proof}
    Consider a diffeomorphism $f\colon M \longrightarrow M$ compatible with $E_M$ which we know, by proposition \ref{prop:CompDiffDetEMaps}, that uniquely lifts to a covering diffeomorphism $\diff f\colon \ee{T}M \longrightarrow \ee{T}M$. We define, similarly to the smooth case, the $E$-cotangent lift as $\hat{f} = (\diff f^{-1})^*$. This map satisfies $\widehat{fg} = \hat{f} \hat{g}$ and gives rise to the commutative diagram \ref{diag:ECotLift}.
    \begin{equation} \label{diag:ECotLift}
    \begin{tikzcd}[sep = large]
    	{\ee{T}^*M} & {\ee{T}^*M} \\
    	M & M
    	\arrow["f"', from=2-1, to=2-2]
    	\arrow["{\hat{f}}", from=1-1, to=1-2]
    	\arrow["\tau"', from=1-1, to=2-1]
    	\arrow["\tau", from=1-2, to=2-2]
    \end{tikzcd}
    \end{equation}
    
    As the cotangent lift is a diffeomorphism, checking that it is an $E$-map amounts to checking that it preserves the pullback structure by proposition \ref{prop:CompDiffDetEMaps}. From the commutativity of diagram \eqref{diag:ECotLift}, we can see that $\diff \tau \diff \hat{f} (E_{\ee{T}^*M}) = \diff f \diff \tau(E_{\ee{T}^*M})$. Given that $E_{\ee{T}^*M}= \langle (\diff \tau)^{-1}(E_M) \rangle$ and that $f$ is an $E$-map, we have $\diff f \diff \tau(E_{\ee{T}^*M}) = \diff f(E_M) = E_M$. As a result, $\diff \tau \diff \hat{f} (E_{\ee{T}^*M}) = E_M$, which implies $\diff \hat{f}(E_{\ee{T}^*M}) \subset E_{\ee{T}^*M}$. As $\hat{f}$ is a diffeomorphism, $\diff \hat{f}(E_{\ee{T}^*M}) \subset E_{\ee{T}^*M}$ and the proof is finished. 
    
    The proof of the second statement follows from the commutativity of the following diagrams:
    \begin{equation} \label{diag:ECotLiftTan}
    \begin{tikzcd}[sep = large]
    	{\ee{T}\ee{T}^*M} & {\ee{T}\ee{T}^*M} \\
    	{\ee{T}M} & {\ee{T}M}
    	\arrow["{\diff f}"', from=2-1, to=2-2]
    	\arrow["{\diff \hat{f}}", from=1-1, to=1-2]
    	\arrow["{\diff \tau_{\ee{T}^*M}}", from=1-2, to=2-2]
    	\arrow["{\diff \tau_{\ee{T}^*M}}"', from=1-1, to=2-1]
    \end{tikzcd} \quad
    \begin{tikzcd}[sep = large]
    	{\ee{T}\ee{T}^*M} & {\ee{T}\ee{T}^*M} \\
    	{\ee{T}^*M} & {\ee{T}^*M}
    	\arrow["{\hat{f}}"', from=2-1, to=2-2]
    	\arrow["{\diff \hat{f}}", from=1-1, to=1-2]
    	\arrow["{\tau_{\ee{T}\ee{T}^*M}}", from=1-2, to=2-2]
    	\arrow["{\tau_{\ee{T}\ee{T}^*M}}"', from=1-1, to=2-1]
    \end{tikzcd}
    \end{equation}
    
    A direct computation shows
    \begin{align*}
        \langle \hat{f}^*\lambda, X \rangle &= \langle \lambda, \diff \hat{f} X \rangle \\
        &= \big\langle \tau_{\ee{T}\ee{T}^*M}(\diff \hat{f} X), \diff \tau_{\ee{T}^*M} \diff \hat{f} X \big\rangle \\
        &= \big\langle \hat{f} \tau_{\ee{T}\ee{T}^*M}(X), \diff f \diff \tau_{\ee{T}^*M} X \big\rangle \\
        &= \big\langle \tau_{\ee{T}\ee{T}^*M}(X), \diff \tau_{\ee{T}^*M} X \big\rangle \\
        &= \langle \lambda, X \rangle. \qedhere
    \end{align*}
\end{proof}

\subsection{$E$-group actions and Marsden-Weinstein reduction}

A key technique in classical Hamiltonian mechanics is that of Marsden-Weinstein reduction, which formalizes the idea of reducing the phase space of a physical system by a symmetry. This idea is also completely necessary in the formulation of classical gauge and Yang-Mills theories, which further motivates their study. Underlying this construction is the notion of Lie group action on a symplectic manifold, which encodes the information of a continuous symmetry and infinitesimal action. 

\begin{definition}
    Let $(M, E)$ be an $E$-manifold and consider a Lie group $G$. A \emph{Lie group action} of $G$ on $(M, E_M)$ is Lie group action $\rho\colon G \times M \longrightarrow M$ such that, for every $(g, p) \in G \times M$, we have $\diff_{(g, p)} \rho (E_{G \times M}) = E_M$. Here, we have taken the $E$-manifold structure $(G \times M, \operatorname{Vec}(G) \oplus E_M)$.
\end{definition}



For a fixed $g \in G$, the group action defines a diffeomorphism in $M$ and, as a consequence of the previous definition, the map is lifted to the $E$-tangent bundle. Therefore, this information is equivalent to defining Lie group actions in a categorical sense. 

\begin{remark}
    The definition of $E$-action implies that the fundamental vector fields of $\rho$ are $E$-fields. The fundamental action $\hat{\rho}$ can be recovered from the action $\rho$ as $\hat{\rho}(X) = \diff \rho (X \oplus \vec{0})$ for any left-invariant vector field $X \in \operatorname{Lie}(G) \simeq \mathfrak{g}$. From the universal property of products \ref{prop:UnivPropEProd}, the pushforward is an $E$-field, $\diff \rho(X \oplus \vec{0}) \in \ee{\operatorname{Vec}}(M)$.
\end{remark}

The other ingredient in the formulation of Marsden-Weinstein reduction is that of \emph{Hamiltonian} group actions. The definition in the case of $E$-manifolds follows directly from that in the case of smooth manifolds. We also state a technical lemma concerning the existence of product Hamiltonian group actions without proof.

\begin{definition}[Hamiltonian action]
    Let $(M, E)$ be an $E$-manifold and consider a symplectic form $\omega \in \ee{\Omega}^2(M)$. We say that an action $\rho\colon G \times M \longrightarrow M$ is \emph{Hamiltonian} if there exists a moment map $\mu \in \mathcal{C}^\infty(M) \otimes \mathfrak{g}^*$ fulfilling
    \begin{equation}
        \iota_{X^\sharp} \omega = - \diff \langle \mu, X \rangle
    \end{equation}
    for each $X \in \mathfrak{g}$.
\end{definition}


\begin{lemma} \label{lem:ProdHamAct}
    Let $(M, E_M)$ and $(N, E_N)$ be two $E$-symplectic manifolds with symplectic forms $\omega_M$ and $\omega_N$, respectively and let $G$ be a Lie group. Given two Hamiltonian $G$-actions $\rho_M\colon G \times M \longrightarrow M$ and $\rho_N\colon G \times N \longrightarrow N$ with respective moment maps $\mu_M$ and $\mu_N$, the induced $G$-action in $M \times N$ with symplectic form $p_1^* \omega_M + p_2^* \omega_N$ is Hamiltonian with moment map $\mu = p_1^* \mu_M + p_2^* \mu_N$.
\end{lemma}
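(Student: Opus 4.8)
The plan is to verify the moment map identity $\iota_{X^\sharp}\omega = -\diff\langle\mu, X\rangle$ directly for each $X \in \g$, exploiting the fact that both the symplectic form $\omega = p_1^*\omega_M + p_2^*\omega_N$ and the candidate moment map $\mu = p_1^*\mu_M + p_2^*\mu_N$ split as sums of pullbacks along the projections $p_1\colon M\times N \longrightarrow M$ and $p_2\colon M\times N \longrightarrow N$. First I would record that, as part of the product structure of Proposition \ref{prop:UnivPropEProd}, the projections $p_1, p_2$ are $E$-maps for the product $E$-structure $E_M \times E_N$; by the defining property of an $E$-map in Definition \ref{def:EMap}, pullback of $E$-forms then commutes with the $E$-differential, $p_i^*\diff = \diff\, p_i^*$. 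This already shows that $\omega$ is a genuine $E$-symplectic form: it is closed as a sum of pullbacks of closed forms, and non-degenerate because, under the fibrewise identification $\ee{T}(M\times N)\iso p_1^*\ee{T}M \oplus p_2^*\ee{T}N$, it is block-diagonal.

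The key structural input is the decomposition of the fundamental vector field. For the diagonal action $g\cdot(m,n) = (\rho_M(g,m), \rho_N(g,n))$, the fundamental vector field $X^\sharp \in \ee{\operatorname{Vec}}(M\times N)$ of $X \in \g$ is, under the identification above, the direct sum $X^\sharp = X^\sharp_M \oplus X^\sharp_N$ of the fundamental vector fields of the factor actions. This is obtained by differentiating each component separately, and the remark preceding the lemma guarantees that each $X^\sharp_M, X^\sharp_N$ is an honest $E$-field, so that $X^\sharp$ is a genuine section of $\ee{T}(M\times N)$ which is $p_1$-related to $X^\sharp_M$ and $p_2$-related to $X^\sharp_N$.

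With these two facts the computation is routine. Relatedness together with naturality of the contraction (the identity $\iota_{X^\sharp}p_i^*\omega_i = p_i^*\iota_{X^\sharp_i}\omega_i$ for $E$-fields that are $p_i$-related) gives
\begin{equation*}
\iota_{X^\sharp}\omega = \iota_{X^\sharp}p_1^*\omega_M + \iota_{X^\sharp}p_2^*\omega_N = p_1^*\iota_{X^\sharp_M}\omega_M + p_2^*\iota_{X^\sharp_N}\omega_N.
\end{equation*}
Applying the hypothesis that each factor action is Hamiltonian, $\iota_{X^\sharp_M}\omega_M = -\diff\langle\mu_M, X\rangle$ and $\iota_{X^\sharp_N}\omega_N = -\diff\langle\mu_N, X\rangle$, and commuting each $p_i^*$ past the $E$-differential, we conclude
\begin{equation*}
\iota_{X^\sharp}\omega = -p_1^*\diff\langle\mu_M, X\rangle - p_2^*\diff\langle\mu_N, X\rangle = -\diff\big(p_1^*\langle\mu_M, X\rangle + p_2^*\langle\mu_N, X\rangle\big) = -\diff\langle\mu, X\rangle,
\end{equation*}
which is the desired identity for every $X \in \g$.

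The main obstacle is not the algebra but justifying the naturality identities in the $E$-category, where the anchor $\rho_E$ is only generically injective and one cannot naively import smooth formulas through the anchor image. The delicate point is to confirm that contraction, pullback along $p_i$, and the relatedness of fundamental vector fields behave exactly as in the smooth setting despite the degeneracy of the anchor. I would secure this by working with the product $E$-structure $E_M\times E_N$ intrinsically, so that $X^\sharp$ literally splits as a direct sum of $E$-fields and each contraction identity reduces to the corresponding statement on a single factor, where the $E$-map property $p_i^*\diff = \diff\, p_i^*$ and the $p_i$-relatedness are immediate from the explicit local generators $E_1,\ldots,E_p, V_1,\ldots,V_q$ of the product structure.
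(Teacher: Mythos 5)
Your argument is correct, but there is nothing in the paper to compare it against: the authors explicitly introduce this statement as ``a technical lemma concerning the existence of product Hamiltonian group actions without proof,'' and indeed no proof appears in the text. Your write-up supplies precisely the omitted details, and it does so soundly. The computation is the standard smooth-case one, and the three $E$-categorical justifications you isolate are exactly the ones needed: the projections $p_1, p_2$ are $E$-maps for the product structure of Proposition \ref{prop:UnivPropEProd}, hence $p_i^*\diff = \diff\, p_i^*$ by Definition \ref{def:EMap}; the fundamental vector field of the diagonal action splits as $X^\sharp_M \oplus X^\sharp_N$ under $\ee{T}(M \times N) \iso p_1^*\ee{T}M \oplus p_2^*\ee{T}N$, with each summand an honest $E$-field by the remark on $E$-actions (this is what makes the contraction $\iota_{X^\sharp}\omega$ meaningful at all, since $\omega$ only eats $E$-fields); and contraction is natural for $p_i$-related $E$-fields, which follows directly from the definition of pullback of $E$-forms through the bundle map covering $p_i$. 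One small point worth making explicit if your argument were inserted as the paper's proof: the paper's definition of Hamiltonian action imposes no equivariance condition on the moment map, so verifying $\iota_{X^\sharp}\omega = -\diff\langle \mu, X\rangle$ for every $X \in \g$, as you do, is indeed all that is required.
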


    

In the definition of Hamiltonian action we have described the fundamental vector fields of the action as Hamiltonian vector fields generated by smooth functions $\mathcal{C}^\infty(M)$. This construction is well defined because the sheaf of smooth functions fits into the cochain complex of $E$-differential forms
\begin{equation} \label{eq:ECochainCplx}
    \begin{tikzcd}
    	0 & {\mathcal{C}^\infty(M)} & {\ee{\Omega}^1(M)} & \cdots & {\ee{\Omega}^n(M)} & 0
    	\arrow[from=1-1, to=1-2]
    	\arrow[from=1-2, to=1-3]
    	\arrow[from=1-3, to=1-4]
    	\arrow[from=1-4, to=1-5]
    	\arrow[from=1-5, to=1-6]
    \end{tikzcd}
\end{equation}
However, in some instances the class of smooth functions may be enlarged to consider more general modules of functions which include $\mathcal{C}^\infty(M)$ as a submodule and still fit into equation \eqref{eq:ECochainCplx}. The first example to consider are $b$-functions, as described in \cite[def. 3]{GMPToricActions}. 
We define $E$-functions in the following way: Let $\alpha$ be a $E$-form of degree $1$ which is closed. Obviously this form is not the differential of a smooth function but we enlarge the set of admissible functions in such a way that a closed form is exact in this category. In other words, as observed by Nest and Tsygan in 
 \cite{NestTsyganEMan}, in distinction  to the standard de Rham complex, the $E$-complex is
not locally acyclic and hence does not give an acyclic resolution of the set of smooth functions. However, we can still define \emph{artificially} a set such that this condition is met. We call this, the set of $E$-functions. The exact description of this set depends on the example under consideration. For $b$-functions this set is exactly $^{b}\mathcal{C}^\infty(M)=\{ g \log\vert x\vert+ h, g \in \mathbb{R}, h\in \mathcal{C}^\infty(M)\}.$ The set of $b^m$-functions is defined recursively according to the formula $$~^{b^m} \mathcal{C}^\infty(M)= x^{-(m-1)}\mathcal{C}^\infty(x) + ~^{b^{m-1}} \mathcal{C}^\infty(M)$$

\noindent  {with $\mathcal{C}^\infty(x)$ the set of smooth functions in the defining function $x$}.

In the case of regular foliations this set coincides with the set of $0$-forms in foliated cohomology, thus with basic functions.

The notion of Hamiltonian group action can be generalized to include Hamiltonian functions which belong to the set $\ee{\mathcal{C}}^\infty(M)$. This definition, at least for $b^m$-symplectic manifolds, can be found in \cite[def. 2.6]{MatveevaMiranda}.

\begin{definition}[$E$-Hamiltonian action]
    Let $(M, E)$ be an $E$-manifold and consider a symplectic form $\omega \in \ee{\Omega}^2(M)$. We say that an action $\rho\colon G \times M \longrightarrow M$ is \emph{$E$-Hamiltonian} if there exists a moment map $\mu \in \ee{\mathcal{C}}^\infty(M) \otimes \mathfrak{g}^*$ fulfilling
    \begin{equation}
        \iota_{X^\sharp} \omega = - \diff \langle \mu, X \rangle
    \end{equation}
    for each $X \in \mathfrak{g}$.
\end{definition}


We are ready now to state a version of the Marsden-Weinstein reduction by the Hamiltonian action of a Lie group $G$ in the context of $E$-symplectic manifolds. This result has been extracted and adapted from \cite[theorem 3.11]{MarreroReductionGut}. Afterwards, we prove a version of the well-known shifting trick over $E$-symplectic manifolds. The proof is essentially the same as for regular manifolds, and all we have to do is check that the construction is compatible with the $E$-structures.

\begin{theorem} \label{thm:ERed}
    Let $(M, E)$ be an $E$-manifold with symplectic form $\omega \in \ee{\Omega}^2(M)$. Consider a proper and free group $E$-action $\rho\colon G \times M \longrightarrow M$ which is Hamiltonian with moment map $\mu\colon M \longrightarrow \mathfrak{g}^*$. If $\alpha \in \mathfrak{g}^*$ is a regular value of $\mu$, $\mu^{-1}(\alpha)/G_\alpha$ is an $E$-symplectic manifold with symplectic form $\omega_\mathrm{red}$ given by
    \begin{equation} \label{eq:RedForm}
        \pi^* \omega_\mathrm{red} = i^* \omega. \qedhere
    \end{equation}
\end{theorem}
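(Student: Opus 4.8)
The plan is to transcribe the classical Marsden–Weinstein argument into the $E$-category, replacing $\mathrm{T}M$ by $\ee{T}M$ and the exterior derivative by the $E$-differential $\diff$, and to verify that each step respects the $E$-structures; for the underlying symplectic Lie algebroid reduction I would lean on \cite[theorem 3.11]{MarreroReductionGut}, so that the genuinely new work is the $E$-compatibility. Since $\mu\colon M \longrightarrow \mathfrak{g}^*$ is an ordinary smooth map and $\alpha$ is a regular value, the regular value theorem makes $\mu^{-1}(\alpha)$ a smooth embedded submanifold of $M$. First I would equip it with the $E$-tangent bundle $\ee{T}\mu^{-1}(\alpha) \coloneqq \rho_E^{-1}\big(\mathrm{T}\mu^{-1}(\alpha)\big)$; using the generic injectivity of $\rho_E$ together with the regularity of $\alpha$, I would check that the associated submodule is locally free and involutive, so that $\mu^{-1}(\alpha)$ is an $E$-manifold and the inclusion $i$ is an $E$-map.

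Next I would pass to the quotient. By equivariance of $\mu$ the coadjoint stabilizer $G_\alpha$ of $\alpha$ preserves $\mu^{-1}(\alpha)$, and since the ambient $G$-action is proper and free, so is the restricted $G_\alpha$-action; hence $\mu^{-1}(\alpha)/G_\alpha$ is a smooth manifold and $\pi$ a principal $G_\alpha$-bundle. As $G_\alpha$ acts by $E$-diffeomorphisms (proposition \ref{prop:CompDiffDetEMaps}) and flows of $E$-fields preserve the submodule $E$, the $G_\alpha$-invariant $E$-fields on the level set are projectable and push forward to a well-defined locally free involutive submodule $E_\mathrm{red}$ on the quotient, making $\pi$ an $E$-map.

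I would then define $\omega_\mathrm{red}$ through $\pi^*\omega_\mathrm{red} = i^*\omega$, which amounts to showing that $i^*\omega$ is \emph{basic} for the $G_\alpha$-action. Invariance is inherited from the $G$-invariance of $\omega$. For horizontality, for $X \in \mathfrak{g}_\alpha$ the fundamental field $X^\sharp$ is tangent to $\mu^{-1}(\alpha)$, so that
\begin{equation*}
    \iota_{X^\sharp} i^*\omega = i^* \iota_{X^\sharp}\omega = -\,i^* \diff \langle \mu, X \rangle = -\,\diff\, i^* \langle \mu, X \rangle = 0,
\end{equation*}
where the third equality uses $\diff F^* = F^* \diff$ for $E$-maps (definition \ref{def:EMap}) and the last uses that $\langle \mu, X \rangle \equiv \langle \alpha, X \rangle$ is constant on $\mu^{-1}(\alpha)$. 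Thus $i^*\omega$ descends to a unique $E$-form $\omega_\mathrm{red}$, closed because $\diff\, i^*\omega = i^* \diff \omega = 0$ and $\pi^*$ is injective on $E$-forms.

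The main obstacle will be the non-degeneracy of $\omega_\mathrm{red}$, which I would establish by a pointwise argument carried out intrinsically in the $E$-tangent bundle. Using the $E$-symplectic musical isomorphism $\omega^\flat\colon \ee{T}M \longrightarrow \ee{T}^*M$ of remark \ref{rmk:NonDegSymp}, I would prove the $E$-analogue of the reduction identity: since $\diff_p\mu = d_p\mu \circ \rho_E$, one has $\ker \diff_p\mu = \rho_E^{-1}\big(\ker d_p\mu\big) = \ee{T}_p\mu^{-1}(\alpha)$, and this kernel coincides with the $\omega$-orthogonal of the orbit tangent $\ee{T}_p(G \cdot p)$ taken inside $\ee{T}_p M$. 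From this it follows that the $\omega$-radical of $i^*\omega$ on $\ee{T}_p\mu^{-1}(\alpha)$ is exactly $\ee{T}_p(G_\alpha \cdot p)$, so that quotienting by it leaves a non-degenerate form on the reduced $E$-tangent space. The delicate point is that $\rho_E$ degenerates on the singular locus, so this linear algebra cannot be transported to $\mathrm{T}M$, where non-degeneracy would be lost; it must be run inside $\ee{T}M$, where $\omega^\flat$ remains an isomorphism. This is precisely the $E$-compatibility that adapts \cite[theorem 3.11]{MarreroReductionGut} to the present setting.
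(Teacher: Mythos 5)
The first thing to note is what you are being compared against: the paper does not prove Theorem \ref{thm:ERed} at all. It is stated as ``extracted and adapted from \cite[theorem 3.11]{MarreroReductionGut}'', and only the subsequent shifting trick (Theorem \ref{thm:EShiftTrick}) receives a proof in the text. Your proposal therefore supplies exactly the verification the paper delegates to the symplectic Lie algebroid literature, and its skeleton is the right fleshing-out of that delegation: the basicness computation for $i^*\omega$ (using $\diff F^* = F^*\diff$ for the $E$-maps $i$ and $\pi$) is correct, and so is your insistence that the non-degeneracy linear algebra be run inside $\ee{T}_pM$, where $\omega^\flat$ remains an isomorphism, with the radical of $i^*\omega$ identified with $\ee{T}_p(G_\alpha \cdot p)$.

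There is, however, a genuine gap in your first step, where you claim local freeness of $\ee{T}\mu^{-1}(\alpha) = \rho_E^{-1}\big(\mathrm{T}\mu^{-1}(\alpha)\big)$ ``using the generic injectivity of $\rho_E$ together with the regularity of $\alpha$''. Neither tool suffices. Generic injectivity of $\rho_E$ is a density statement on $M$; the level set has measure zero and could a priori sit inside the degeneracy locus, so density does not restrict to it. More seriously, ordinary regularity of $\alpha$ (surjectivity of $d_p\mu$) does not control the rank of $\rho_E^{-1}(\ker d_p\mu) = \ker(d_p\mu \circ \rho_E) = \ker\diff_p\mu$: where $\rho_E$ degenerates, $\diff_p\mu$ can drop rank even though $d_p\mu$ is onto, and then $\rho_E^{-1}(\mathrm{T}\mu^{-1}(\alpha))$ is not a vector bundle and its sections are not a locally free module. (Take $\mu = x_1$ near a point of $Z = \{x_1 = 0\}$ on a $b$-manifold: every value is regular in the ordinary sense, yet $\diff_p\mu$ vanishes identically along $Z$.) What rescues the statement is the hypothesis you never invoke at this point: the action is free and proper. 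Freeness makes each orbit map an injective immersion, so $X \longmapsto X^\sharp_p$ is injective into $\mathrm{T}_pM$, hence also into $\ee{T}_pM$ (a fundamental section vanishing in $\ee{T}_pM$ would be sent by $\rho_E$ to a vanishing vector in $\mathrm{T}_pM$); then the moment map identity $\omega_p(X^\sharp_p, v) = -\langle \diff_p\mu(v), X\rangle$ and the non-degeneracy of $\omega_p$ on $\ee{T}_pM$ give the $E$-version of the bifurcation lemma, $\operatorname{im}\diff_p\mu = \operatorname{ann}\{X \in \mathfrak{g} \mid X^\sharp_p = 0 \text{ in } \ee{T}_pM\} = \mathfrak{g}^*$, so $\diff\mu$ has full rank along the entire level set and $\ker\diff\mu$ has constant rank. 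You actually write down the needed identity --- $\ker\diff_p\mu$ equals the $\omega$-orthogonal of the orbit tangent --- in your final paragraph; it must be moved upstream, since without it the object you are reducing is not yet known to be an $E$-manifold, and the quotient step inherits the same defect. This is also precisely what separates Theorem \ref{thm:ERed} from the cited result: \cite{MarreroReductionGut} produces a symplectic Lie algebroid over $\mu^{-1}(\alpha)/G_\alpha$, and the constant-rank and injectivity-on-sections checks above are what make that algebroid the $E$-tangent bundle of an honest $E$-structure on the quotient.
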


\begin{theorem}[$E$-Shifting trick] \label{thm:EShiftTrick}
    Let $(M, E)$ be an $E$-symplectic manifold with symplectic form $\omega \in \ee{\Omega}^2(M)$ and a free and proper $E$-Lie group action $\rho\colon G \times M \longrightarrow M$ with moment map $\mu\colon M \longrightarrow \mathfrak{g}^*$ and assume $\alpha \in \mathfrak{g}^*$ is a regular value of $\mu$. If we endow $(\mathcal{O}(-\alpha), \omega_{\mathcal{O}(- \alpha)})$ with the natural coadjoint action of $G$ and the moment map $i \colon \mathcal{O}(-\alpha) \longhookrightarrow \mathfrak{g}^*$, there is an isomorphism of $E$-symplectic manifolds
    \begin{equation} \label{eq:ShiftTrick}
        \mu^{-1}(\alpha)/G_\alpha \simeq (\mu + i)^{-1}(0)/G.
    \end{equation}
    In particular, the Marsden-Weinstein reduction can always be performed at $0 \in \mathfrak{g}^*$.
\end{theorem}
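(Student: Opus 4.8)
The plan is to imitate the classical shifting trick, enlarging the phase space by a coadjoint orbit and reducing at $0$, while checking at each step compatibility with the $E$-structures. First I would regard the coadjoint orbit $\mathcal{O}(-\alpha)$, with its Kirillov--Kostant--Souriau form $\omega_{\mathcal{O}(-\alpha)}$, as a regular $E$-manifold carrying the full $E$-structure $\operatorname{Vec}(\mathcal{O}(-\alpha))$; recall that the inclusion $i\colon \mathcal{O}(-\alpha) \hookrightarrow \mathfrak{g}^*$ is the moment map for the coadjoint action. Using Proposition \ref{prop:UnivPropEProd} I would form the product $E$-symplectic manifold $(M \times \mathcal{O}(-\alpha),\, E_M \times \operatorname{Vec}(\mathcal{O}(-\alpha)))$ with symplectic form $\Omega = p_1^* \omega + p_2^* \omega_{\mathcal{O}(-\alpha)}$. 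The diagonal $G$-action is then an $E$-action, its two factors being the given action on $M$ and the coadjoint action on $\mathcal{O}(-\alpha)$, the latter automatically an $E$-action since the orbit carries its full tangent bundle; and Lemma \ref{lem:ProdHamAct} shows it is Hamiltonian with moment map $\Phi = \mu + i$.

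The next step is to verify that $0 \in \mathfrak{g}^*$ is a regular value of $\Phi$ so that Theorem \ref{thm:ERed} applies. By equivariance of $\mu$, the regularity hypothesis at $\alpha$ propagates to the whole coadjoint orbit $\mathcal{O}(\alpha)$; since every $(m, \beta) \in \Phi^{-1}(0)$ satisfies $\mu(m) = -\beta \in \mathcal{O}(\alpha)$, the differential $\diff \mu_m$ is already surjective there, and hence so is $\diff \Phi = \diff \mu \oplus \diff i$. Thus $0$ is regular, and Theorem \ref{thm:ERed} produces the $E$-symplectic reduced space $(\mu + i)^{-1}(0)/G$. This already shows that the right-hand side of \eqref{eq:ShiftTrick} is a well-defined $E$-symplectic manifold and, in particular, that reduction at $0$ can always be performed.

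To build the isomorphism I would use the constant-orbit embedding $j\colon \mu^{-1}(\alpha) \longrightarrow \Phi^{-1}(0)$, $j(m) = (m, -\alpha)$, which is well-defined because $\Phi(m, -\alpha) = \mu(m) + i(-\alpha) = \alpha - \alpha = 0$ and $-\alpha \in \mathcal{O}(-\alpha)$. Since every element of $G_\alpha$ fixes $-\alpha$ under the coadjoint action, $j$ is $G_\alpha$-equivariant and descends to $\bar{j}\colon \mu^{-1}(\alpha)/G_\alpha \longrightarrow \Phi^{-1}(0)/G$. A short orbit chase shows $\bar{j}$ is a bijection: surjectivity because any $(m, \beta) \in \Phi^{-1}(0)$ has $\beta \in \mathcal{O}(-\alpha)$, so some $g \in G$ carries $\beta$ to $-\alpha$ and, by equivariance, $\mu(g^{-1} \cdot m)$ to $\alpha$; injectivity because matching the fixed second components forces the identification to take place within $G_\alpha$. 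For the symplectic forms I would invoke the defining relation $\pi^* \omega_{\mathrm{red}} = \iota^* \omega$ of Theorem \ref{thm:ERed} on each side: because $j$ has constant second component, $j^* \Omega = j^* p_1^* \omega + j^* p_2^* \omega_{\mathcal{O}(-\alpha)} = \iota^* \omega + 0$, so $\bar{j}$ intertwines the two reduced forms and is therefore an $E$-symplectomorphism once it is shown to be an $E$-map.

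The main obstacle will be precisely this last point: checking that $\bar{j}$ is an $E$-diffeomorphism rather than only a smooth symplectomorphism, which forces one to track how the $E$-structure descends through the reduction on each side. The favourable circumstance is that $\mathcal{O}(-\alpha)$ carries its full tangent bundle as $E$-structure, so every singular $E$-direction of $M \times \mathcal{O}(-\alpha)$ originates in the $E_M$ factor; consequently the pullback structure of Definition \ref{def:ProlFibreBun}, restricted to $\Phi^{-1}(0)$ and pushed to the quotient, should identify the reduced $E$-tangent bundle on the right with the one coming from $E_M$ on $\mu^{-1}(\alpha)$. Making this identification rigorous, by confirming that $j$ restricts to an isomorphism of the relevant $E$-tangent bundles with its image pairing each point of $\mu^{-1}(\alpha)$ with a fixed orbit point all of whose directions are regular, is the crux of the argument; once it is in place, the equivalence \eqref{eq:ShiftTrick} follows as an isomorphism of $E$-symplectic manifolds.
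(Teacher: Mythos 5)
Your proposal is correct and rests on the same engine as the paper's proof --- the product $M \times \mathcal{O}(-\alpha)$ with the sum of the symplectic forms, Lemma \ref{lem:ProdHamAct} for the diagonal moment map $\mu + i$, and Theorem \ref{thm:ERed} for the reduction at $0$ --- but it organizes the isomorphism differently. The paper factors it through an intermediate space: it first shows that reduction at the point equals reduction along the orbit, $\mu^{-1}(\alpha)/G_\alpha \simeq \mu^{-1}(\mathcal{O}(\alpha))/G$, proving that the induced map of classes is an $E$-symplectomorphism via the uniqueness of the reduced form \eqref{eq:RedForm}, and then observes that $(\mu + i)^{-1}(0) = \mu^{-1}(\mathcal{O}(\alpha))$, the zero level being the graph of $-\mu$ over the orbit preimage. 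Your map $\bar{j}\colon [m] \longmapsto [(m, -\alpha)]$ is exactly the composition of these two identifications, verified directly: bijectivity by an orbit chase, and the form identity by pulling back the product form along the constant-second-component embedding. Each route buys something. Yours makes explicit a hypothesis the paper leaves tacit, namely that $0$ is a regular value of $\mu + i$ (your equivariance argument for this is the right one), and it avoids appealing to uniqueness of the reduced form. The paper's factorization isolates the point-versus-orbit reduction statement as a result of independent interest, which it reuses later in the cotangent-bundle-reduction discussion of section \ref{sec:GaugeTheory}. Finally, on the point you flag as the crux --- that $\bar{j}$ is an $E$-map and not merely a smooth symplectomorphism --- your reasoning (the orbit factor carries its full tangent bundle as $E$-structure, so every singular direction of the product, of its zero level set, and of the quotient originates in the $E_M$ factor) is the correct one; the paper treats this point at the same level of detail, asserting rather than proving that the relevant maps are $E$-maps, so this does not constitute a gap relative to the published argument.
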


\begin{proof}
    The first part of the proof relies on an interesting observation on its own, which claims that the reduction by coadjoint orbits is equivalent to the reduction by a point. Given a regular value $\alpha \in \mathfrak{g}^*$, the natural inclusion of $E$-manifolds $i\colon \mu^{-1}(\alpha) \longhookrightarrow \mu^{-1}(\mathcal{O}(\alpha))$ induces an $E$-diffeomorphism $\mu^{-1}(\alpha)/G_\alpha \simeq \mu^{-1}(\mathcal{O}(\alpha))/G$. We will also show that the uniqueness of the reduced form following equation \eqref{eq:RedForm} implies that the map of classes $[i]\colon \mu^{-1}(\alpha)/G_\alpha \longrightarrow \mu^{-1}(\mathcal{O}(\alpha))/G$ is an $E$-symplectomorphism. 
    
    Let now $\pi_\alpha\colon \mu^{-1}(\alpha)/G_\alpha \longrightarrow \mu^{-1}(\alpha)/G_\alpha$ and $\pi_{\mathcal{O}(\alpha)}\colon \mu^{-1}(\mathcal{O}(\alpha)) \longrightarrow \mu^{-1}(\mathcal{O}(\alpha))/G$ be the respective quotient maps; similarly, let $i_\alpha\colon \mu^{-1}(\alpha) \longhookrightarrow M$ and $i_{\mathcal{O}(\alpha)}\colon \mu^{-1}(\mathcal{O}(\alpha)) \longhookrightarrow M$ the respective inclusions and let $\omega_{\alpha}$ and $\omega_{\mathcal{O}(\alpha)}$ be the reduced symplectic forms. It is obvious that $i_\alpha = i_{\mathcal{O}(\alpha)} i$, which implies $i_\alpha^* \omega = i^* i_{\mathcal{O}(\alpha)}^* \omega$. The characterization of the reduced $E$-symplectic form following equation \eqref{eq:RedForm} implies $\pi_\alpha^* \omega_\alpha = i^* \pi_{\mathcal{O}(\alpha)}^* \omega_{\mathcal{O}(\alpha)}$. We construct now the commutative diagram \eqref{eq:UniqRedForm}, where all the maps are actually $E$-maps. A direct consequence of the commutativity is that $\pi_\alpha^* [i]^* \omega_{\mathcal{O}(\alpha)} = i^* \pi_{\mathcal{O}(\alpha)}^* \omega_{\mathcal{O}(\alpha)}$. The previous results imply now that $\pi_\alpha^* \omega_\alpha = \pi_\alpha^* [i]^* \omega_{\mathcal{O}(\alpha)}$ but, as the reduced form is unique, we conclude $\omega_\alpha = [i]^* \omega_{\mathcal{O}(\alpha)}$. This shows that $[i]$ is an $E$-symplectomorphism.
    \begin{equation} \label{eq:UniqRedForm}
        \begin{tikzcd}
        	{\mu^{-1}(\alpha)} & {\mu^{-1}(\mathcal{O}(\alpha))} \\
        	{\mu^{-1}(\alpha)/G_\alpha} & {\mu^{-1}(\mathcal{O}(\alpha))/G}
        	\arrow["{[i]}"', from=2-1, to=2-2]
        	\arrow["i", from=1-1, to=1-2]
        	\arrow["{\pi_\alpha}"', from=1-1, to=2-1]
        	\arrow["{\pi_{\mathcal{O}(\alpha)}}", from=1-2, to=2-2]
        \end{tikzcd}
    \end{equation}
    
    After this isomorphism of $E$-manifolds, we consider the product manifold $M \times \mathcal{O}(-\alpha)$, which is endowed with the product $E$-symplectic form of the $E$-form $\omega$ and the symplectic form on the coadjoint orbit: $\omega \oplus \omega_{\mathcal{O}(-\alpha)}$. The diagonal action of $G$ is Hamiltonian with moment map $\mu + i$ following lemma \ref{lem:ProdHamAct}. A straightforward computation shows
    \begin{align*}
        (\mu + i)^{-1}(0) &= \{ (p, \beta) \in M \times \mathcal{O}(- \alpha) \mid \mu(p) + \beta = 0 \} \\
        &= \mu^{-1}(\mathcal{O}(\alpha)),
    \end{align*}
    and, in particular, the isomorphism respects the corresponding $E$-structures. This observation, together with the first result, proves equation \eqref{eq:ShiftTrick}.
\end{proof}

\section{Gauge theory of $E$-manifolds} \label{sec:GaugeTheory}

In classical gauge theories, the configuration space of a point-mass particle in terms of positions and momenta does not give a complete characterization of the physical properties of a system. A standard example is the configuration space of classical electromagnetism, in which particles are described by an additional scalar called the \emph{electric charge}. The natural configuration space becomes a real line bundle $\mathbf{L}$ over the space-time $M$. In a general setting, the new configuration space is described by a fibre bundle $\tau\colon B \longrightarrow M$ with typical fibre $Q$.  This fibre is understood to represent the internal degrees of freedom of the particle.

Additionally, in classical gauge theories we consider the action of a Lie group $G$ on the total configuration space by gauge transformations. Following the previous interpretation, gauge transformations represent diffeomorphisms of the total configuration space $B$ which do not change the physical description of a particle in the base manifold $M$. The geometric idea behind a gauge transformation is represented in picture \ref{fig:gauge-local}.

Having recalled the geometric description of gauge theories, we start by defining gauge theories over $E$-manifolds using the pullback structure over a fibre bundle.

\begin{definition}
    Let $(M, E_M)$ be an $E$-manifold. A \emph{gauge theory} over $M$ is a pair $(B, G)$, where $\tau\colon B \longrightarrow M$ is a fibre bundle with the pullback structure $E_B$ and a $G$-structure on $B$. 
\end{definition}

\begin{figure}[h!]
    \centering
    \includegraphics[scale=0.5]{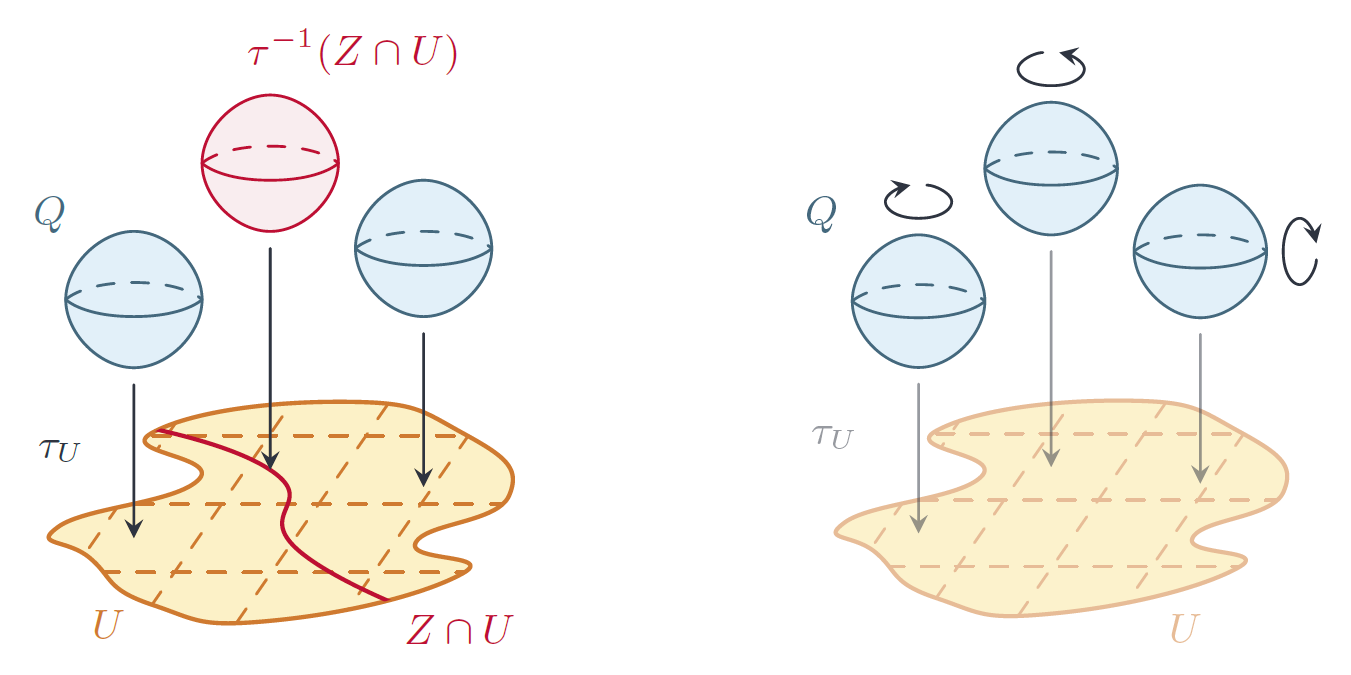}
    \caption{On the left, the induced pullback structure described locally for a $b$-manifold. In this case, the $b$-structure coincides with that generated by the pullback hypersurface $\pi^{-1}(Z)$. On the right, a schematic representation of a gauge transformation. The transformation may change from fibre to fibre smoothly, but a point of a fibre has to be mapped to the same fibre.}
    \label{fig:gauge-local}
\end{figure}

The equations of motion are specified using a $G$-invariant connection on the bundle $B$, called a gauge field. Such connections were already described in \cite{NestTsyganEMan}; we give here a different characterization as invariant splittings of an ``$E$-Atiyah sequence''.

\begin{definition}
    Let $(M, E_M)$ be an $E$-manifold and consider a gauge theory $(B, G)$ over $M$. A \emph{gauge field} on $(B, G)$ is a $G$-invariant splitting of the following short exact sequence:
    \begin{equation} \label{eq:EAtiyahSeq}
        \begin{tikzcd}
        	0 & {\ker \diff \tau} & {\ee{T} B} & {\tau^* \ee{T} M}  & 0.
        	\arrow[from=1-1, to=1-2]
        	\arrow["\iota", from=1-2, to=1-3]
        	\arrow["{\diff \tau}", from=1-3, to=1-4]
        	\arrow[from=1-4, to=1-5]
        \end{tikzcd} \qedhere
    \end{equation}
\end{definition}

The equations of motion of a classical particle under the action of a gauge field were described by  Wong in  \cite{Wong}, and are named after him as \emph{Wong's equations}. Weinstein showed in \cite{WeinsteinUniversal} that the equations of motion by Wong can be described in geometric terms and that, moreover, they become Hamiltonian. In his setting, a gauge field induces a projection from the cotangent bundle $\mathrm{T}^*B$, which can be regarded as the natural phase space of a gauge theory, to the standard cotangent bundle $\mathrm{T}^*M$. The choice of a Hamiltonian function in $\mathrm{T}^*M$ (the kinetic energy, for instance, if $M$ is a Riemannian manifold) induces a Hamiltonian function in $\mathrm{T}^*B$ which generates Wong's equations. 

Sternberg introduced generalized the minimal coupling procedure of electromagnetism to Yang-Mills theories and showed that it could be described by the introduction of a magnetic term in the canonical symplectic structure of $\mathrm{T}^*M$. Weinstein showed that the choice of a gauge field induces a symplectomorphism of symplectic manifolds which describes Sternberg's minimal coupling. Montgomery considered the symplectic formulation of Yang-Mills theories and showed that the minimal coupling of Weinstein and Wong's equations of motion could be described over more general Poisson manifolds. In particular, he proved that the restriction to a symplectic leaf recovers Weinstein's results.

Our goal throughout the rest of the section is to prove analogous statements when the base manifold $M$ has an $E$-structure.

\subsection{Geodesics on Riemannian $E$-manifolds}

In classical mechanics, the base manifold of a physical system is called the \emph{configuration space}, and is assumed to be a Riemannian manifold. The study of geodesics on a Riemannian manifold is one of the first approaches to general gauge theories in physics. Metrics can be similarly defined for arbitrary $E$-manifolds, and their construction is analogue to that of $E$-symplectic forms.

\begin{definition}
    Consider an $E$-manifold $(M, E)$. A \emph{metric} on $(M, E)$ is a section $g \in \Gamma(\ee{T}M \odot \ee{T}M)$ which is positive-definite and non-degenerate. An $E$-manifold $\ee{T}M$ together with a metric $g$ is called a \emph{Riemannian $E$-manifold}.
\end{definition}

Analogously to remark \ref{rmk:NonDegSymp}, the non-degeneracy of the metric $g$ gives rise to vector bundle isomorphisms $g^\flat, g^\sharp$, called the musical isomorphisms of the metric. 
Affine connections are defined as splittings of the sequence
\begin{equation}
    \begin{tikzcd}
    	0 & {\ker \diff \tau} & {\ee{T} \ee{T} M} & {\ee{T}M} & 0.
    	\arrow[from=1-1, to=1-2]
    	\arrow["\iota", from=1-2, to=1-3]
    	\arrow["{\diff \tau}", from=1-3, to=1-4]
    	\arrow[from=1-4, to=1-5]
    \end{tikzcd}
\end{equation}
The lift $X_\nabla \colon \ee{T}M \longrightarrow \ee{T}\ee{T}M$ is an $E$-field in $\ee{T}M$ with the induced $E$-structure on $\ee{T}M$, called the \emph{geodesic flow}.  To construct such a flow in the dual bundle $\ee{T}^*M$ we consider the musical isomorphism given by the metric $g$, giving the following isomorphism of short exact sequences:
\begin{equation}
    \begin{tikzcd}[sep=large]
    	0 & {\ker\diff\tau} & {\ee{T}\ee{T}M} & {\ee{T}M} & 0 \\
    	0 & {\ker\diff\tau} & {\ee{T}\ee{T}^* M} & {\ee{T}^* M} & 0
    	\arrow[from=1-1, to=1-2]
    	\arrow[from=2-1, to=2-2]
    	\arrow["\iota", from=1-2, to=1-3]
    	\arrow["\iota"', from=2-2, to=2-3]
    	\arrow["{\diff\tau_{\ee{T}M}}", shift left=1, from=1-3, to=1-4]
    	\arrow["{\diff\tau_{\ee{T}^*M}}"', shift right=1, from=2-3, to=2-4]
    	\arrow[from=1-4, to=1-5]
    	\arrow[from=2-4, to=2-5]
    	\arrow["{\diff g^\flat}", shift left=0, from=1-2, to=2-2]
    	\arrow["{\diff g^\flat}", shift left=0, from=1-3, to=2-3]
    	\arrow["{g^\flat}", shift left=0, from=1-4, to=2-4]
    	\arrow["X_\nabla", shift left=1, from=1-4, to=1-3]
    	\arrow["Y_\nabla"', shift right=1, from=2-4, to=2-3]
    \end{tikzcd}
\end{equation}

The geodesics are the flow of a Hamiltonian vector fields also for $E$-manifolds. 

\begin{theorem} \label{thm:EGeod}
    Consider an $E$-manifold $(M, E)$ with a metric $g \in \Gamma(\ee{T}^*M \odot \ee{T}^*M)$ and an affine connection $X_\nabla\colon \ee{T}^* M \longrightarrow \ee{T}\ee{T}^* M$. The induced flow $\varphi_t$ of $Y_\nabla = \diff g^\flat X_\nabla$ is Hamiltonian, with Hamiltonian function
    \begin{equation*}
        H(\alpha) = g(g^\sharp \alpha, g^\sharp \alpha). \qedhere
    \end{equation*}
\end{theorem}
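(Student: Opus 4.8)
The plan is to use the non-degeneracy of the canonical $E$-symplectic form, established through the local expression \eqref{eq:ECanSympFormNatCoord}, to turn the statement into a single contraction identity. Indeed, $\varphi_t$ is the Hamiltonian flow of $H$ precisely when $Y_\nabla$ equals the Hamiltonian vector field $X_H$, and since $\omega^\flat$ is an isomorphism it suffices to check that $\iota_{Y_\nabla}\omega = -\diff H$ as $E$-one-forms on $\ee{T}^*M$. Both $\iota_{Y_\nabla}\omega$ and $-\diff H$ are smooth sections of $\ee{T}^*\ee{T}^*M$: the former because $Y_\nabla = \diff g^\flat X_\nabla$ is assembled from smooth bundle maps and $\omega$ is smooth, the latter because $H$ is smooth. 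Hence it is enough to verify the identity on an open dense subset of $\ee{T}^*M$ and then invoke continuity.

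First I would restrict to the locus $U \subset M$ on which the anchor $\rho_E\colon \ee{T}M \to \mathrm{T}M$ is injective; by the generic injectivity lemma this set is open and dense, and there $\rho_E$ has constant rank $p$, so $\operatorname{im}\rho_E|_U$ is an involutive rank-$p$ sub-bundle of $\mathrm{T}U$, i.e. the tangent distribution of a regular foliation $\mathcal{F}$. Because the bracket on $\ee{T}M$ is the restriction of the bracket of vector fields, $\rho_E$ identifies $\ee{T}M|_U$ with $\mathrm{T}\mathcal{F}$ as Lie algebroids. Under this identification every piece of $E$-data specializes to its classical leafwise counterpart: the metric $g$ becomes a leafwise Riemannian metric, $\ee{T}^*M|_U$ becomes the leafwise cotangent bundle $\mathrm{T}^*\mathcal{F}$, the Liouville form \eqref{eq:LiouvForm} and the canonical form \eqref{eq:CanSympForm} become the tautological and canonical forms on each $\mathrm{T}^*L$, the flow $Y_\nabla$ becomes the leafwise geodesic spray, and $H(\alpha) = g(g^\sharp\alpha, g^\sharp\alpha)$ becomes the leafwise kinetic energy. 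On a single leaf this is the classical statement that the geodesic spray is the Hamiltonian vector field of the kinetic energy, so $\iota_{Y_\nabla}\omega = -\diff H$ holds leafwise, hence as $E$-one-forms over $\tau^{-1}(U)$; density and smoothness then propagate it to all of $\ee{T}^*M$, and non-degeneracy gives $Y_\nabla = X_H$.

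The step I expect to require the most care is the reduction to the classical picture over $U$: one must check that the pullback $E$-structure of definition \ref{def:ProlFibreBun}, together with the intrinsic forms \eqref{eq:LiouvForm} and \eqref{eq:CanSympForm}, really do restrict to the leafwise tautological and canonical forms, so that the classical geodesic theorem applies unchanged on each leaf. A related point, implicit in the statement, is that the classical result holds for the Levi-Civita connection of $g$; I would therefore take $X_\nabla$ to be the metric, torsion-free connection, since for a general affine connection the transported spray $\diff g^\flat X_\nabla$ need not be Hamiltonian with kinetic energy. An alternative that sidesteps the foliation reduction is a direct computation in the natural coordinates of proposition \ref{prop:ENatCoord}: writing $H = \sum_{i,j} g^{ij} r_i r_j$ in the frame dual to $E_1, \ldots, E_p$, one solves $\iota_{X_H}\omega = -\diff H$ using the explicit form \eqref{eq:ECanSympFormNatCoord} and compares with the coordinate expression of the spray; there the only real obstacle is the bookkeeping of the structure functions $C^i_{jk}$, which appear both in $\omega$ and in the Koszul formula for the connection coefficients in the non-holonomic frame and must cancel against one another.
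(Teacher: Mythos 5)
Your proposal is correct, but it takes a genuinely different route from the paper's proof. The paper argues globally and intrinsically, never leaving the $E$-category: it first asserts that the flow of $Y_\nabla$ preserves the Liouville form (using $\diff \varphi_t Y_\nabla = Y_\nabla$ and an argument modelled on proposition \ref{prop:ECotLiftPresLiouForm}), so that Cartan's formula turns $\mathcal{L}_{Y_\nabla}\lambda = 0$ into $\iota_{Y_\nabla}\diff\lambda = -\diff \iota_{Y_\nabla}\lambda$, exhibiting $Y_\nabla$ as Hamiltonian with $H = \iota_{Y_\nabla}\lambda$; it then computes $H(\alpha) = \langle \alpha, \diff\tau_{\ee{T}^*M} Y_\nabla(\alpha)\rangle = \langle \alpha, g^\sharp\alpha\rangle = g(g^\sharp\alpha, g^\sharp\alpha)$, using only that $g^\flat$ is a bundle morphism and that $X_\nabla$ splits the sequence, $\diff\tau_{\ee{T}M} X_\nabla = \operatorname{id}$. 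You instead localize to the open dense locus where the anchor is injective, identify the $E$-data there with the leafwise Riemannian geometry of a regular foliation (as in example \ref{ex:RegFol}), quote the classical geodesic theorem leaf by leaf, and propagate $\iota_{Y_\nabla}\omega = -\diff H$ to all of $\ee{T}^*M$ by smoothness and density; the step you flag as delicate --- that the canonical $E$-forms restrict, over the regular locus, to the leafwise tautological and canonical forms --- is precisely the computation the paper records in its regular-foliation example after proposition \ref{prop:ENatCoord}, so it fills in readily. As for what each approach buys: the paper's argument is coordinate-free and needs neither a density argument nor any classical input, while yours is more elementary and, importantly, more robust, because it forces the actual hypotheses into the open. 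Your insistence on the Levi-Civita connection is not pedantry but a needed correction: for a general affine connection the transported spray $\diff g^\flat X_\nabla$ is not the Hamiltonian vector field of the kinetic energy (this already fails for a non-metric connection on flat $\mathbf{R}^n$), so the theorem as printed cannot hold for arbitrary $X_\nabla$. Your route also surfaces a normalization that the paper's argument obscures: since $\diff\tau_{\ee{T}^*M} Y_\nabla(\alpha) = g^\sharp\alpha$, one has $\iota_{Y_\nabla}\lambda = g(g^\sharp\alpha, g^\sharp\alpha)$, while the leafwise classical theorem gives $\iota_{Y_\nabla}\omega = -\diff\big(\tfrac{1}{2} g(g^\sharp\alpha, g^\sharp\alpha)\big)$; consequently $\mathcal{L}_{Y_\nabla}\lambda = \diff\big(\tfrac{1}{2} g(g^\sharp\alpha, g^\sharp\alpha)\big) \neq 0$, so the paper's intermediate invariance claim cannot hold for a spray, and the correct Hamiltonian carries the factor $\tfrac{1}{2}$ --- exactly what carrying out your leafwise comparison honestly would produce.
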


\begin{proof}
    For the first part of the proof we only have to observe that the flow $\varphi_t$ gives an isomorphism of short exact sequences
    \begin{equation*}
        \begin{tikzcd}[sep=large]
        	0 & {\ker\diff\tau} & {\ee{T}\ee{T}^*M} & {\ee{T}^*M} & 0 \\
        	0 & {\ker\diff\tau} & {\ee{T}\ee{T}^* M} & {\ee{T}^* M} & 0
        	\arrow[from=1-1, to=1-2]
        	\arrow[from=2-1, to=2-2]
        	\arrow["\iota", from=1-2, to=1-3]
        	\arrow["\iota"', from=2-2, to=2-3]
        	\arrow["{\diff\tau_{\ee{T}^*M}}", shift left=1, from=1-3, to=1-4]
        	\arrow["{\diff\tau_{\ee{T}^*M}}"', shift right=1, from=2-3, to=2-4]
        	\arrow[from=1-4, to=1-5]
        	\arrow[from=2-4, to=2-5]
        	\arrow["{\diff\varphi_t}", shift left=0, from=1-3, to=2-3]
        	\arrow["{\varphi_t}", shift left=0, from=1-4, to=2-4]
        	\arrow["Y_\nabla", shift left=1, from=1-4, to=1-3]
        	\arrow["Y_\nabla"', shift right=1, from=2-4, to=2-3]
        	\arrow[from=1-2, to=2-2]
        \end{tikzcd}
    \end{equation*}
    Notice that the splitting $Y_\nabla$ remains unchanged under the action of $\varphi_t$ as $[Y_\nabla, Y_\nabla] = 0$, and, as a consequence, $\diff\varphi_t Y_\nabla = Y_\nabla$. We can conclude that the flow $\varphi_t$ preserves the Liouville form in $\ee{T}^*M$ using a similar argument to that in the proof of proposition \ref{prop:ECotLiftPresLiouForm}. 
    Thus, $\mathcal{L}_{Y_\nabla} \lambda = 0$ and because of Cartan's formula:
    \begin{equation*}
        \iota_{Y_\nabla} \diff \lambda = - \diff \iota_{Y_\nabla} \lambda,
    \end{equation*}
    thus, proving that $Y_\nabla$ is Hamiltonian.
    
    Regarding the Hamiltonian function, from the previous result  $H(\alpha) = \langle \lambda_\alpha, Y_\nabla(\alpha) \rangle$, thus, by definition \eqref{eq:LiouvForm}, equals $\langle \alpha, \diff \tau_{\ee{T}^*M} Y_\nabla(\alpha) \rangle$. Take now $X = g^\sharp(\alpha)$ or, equivalently, $\alpha = g^\flat(X)$. Given that the musical isomorphisms are vector bundle morphisms, we obtain $\diff\tau_{\ee{T}^* M} \diff g^\flat = \diff \tau_{\ee{T}M}$. Recall that $Y_\nabla = \diff g^\flat X_\nabla$. As a consequence,
    \begin{align*}
        \langle \lambda_\alpha, Y_\nabla(\alpha) \rangle &= \langle \alpha, \diff \tau_{\ee{T}^* M} \diff g^\flat X_\nabla(X)) \rangle \\
        &= \langle \alpha, \diff \tau_{\ee{T}M} X_\nabla(X) \rangle \\
        &= \langle \alpha, X \rangle \\
        &= g(X, X).
    \end{align*}
    Therefore, we conclude that the Hamiltonian function of the geodesic flow $\varphi_t$ is $g(g^\sharp \alpha, g^\sharp \alpha)$.
\end{proof}

\subsection{Principal bundles and gauge fields over $E$-manifolds}

The standard presentation of $E$-gauge theories introduced in the beginning of section \ref{sec:GaugeTheory} can be recovered from a formulation in terms of \emph{associated} principal bundles. The symplectic formulation of the equations of motion for gauge theories is commonly stated in these terms (see for instance, \cite{WeinsteinUniversal, MontgomeryWongsEquations, MontgomeryThesis}).

This subsection is devoted to the study of the pullback structure in principal $G$-bundles over an $E$-manifold $M$, as well as the characterizations of gauge fields in this setting. These properties will be used throughout the study of the symplectic formulation of the equations of motion in the following subsections. We begin with an example of how gauge theories are formulated in the context of manifolds admitting an $E$-symplectic structure and then we present a lemma on the structure of the vertical bundle of an $E$-principal $G$-bundle.

\begin{example}[Relativistic electromagnetism on $E$-manifolds]
In example \ref{example:Minkowskispace} we saw that, in a Minkowski space $(M,g)$, the singular foliation of $\mathrm{T}M$ by level sets $\mathcal{F}_k$ of the kinetic energy is the $E$-manifold $\mathrm{T}\mathcal{F}$. In the setting of special relativity, the space-time is $\mathbf{R}^4$ with the flat metric and it admits a relativistic version of the Maxwell’s equations of electromagnetism.

To introduce an electromagnetism formulation that behaves well with the Lorentz transformations in special relativity, instead of talking separately about \emph{charge density} $q$ and \emph{current density} $\mathbf{J}$ it is natural to work with the $4$-vector $J=(\rho, \vec{J})$ called \emph{charge-current density}. Also, the electric field $\mathbf{E}$ and the magnetic field $\mathbf{B}$ are combined into a $2$-form $F$ called \emph{field strength tensor} defined as:
\begin{equation}
    F=\diff t \wedge (E_x\diff x+E_y\diff y+E_z\diff z)+B_x\diff y \wedge \diff z+B_y\diff z \wedge \diff x+B_z\diff x \wedge \diff y.
\end{equation}
In this framework, the four Maxwell equations are equivalent to the Yang-Mills equations \eqref{eq:YangMillsEqs}.

An electromagnetic potential $A \in \Omega^1(M)$ is an appropriate potential for the field strength tensor $F$ if $\diff A = F$. It is not unique, since for any smooth function $\phi \in \mathcal{C}^\infty(M)$ the form $A + \diff \phi$ will be again an appropriate potential for $F$. To describe this gauge, it is convenient to define a connection on a principal $G$-bundle $\pi\colon P \longrightarrow \mathbf{R}^4$.

Take $G=U(1)$, so $\mathfrak{g} = \mathfrak{u}(1) \simeq \mathbf{R}$, and chose a connection on $P$ with connection form $\theta$. The Minkowski space $\mathbf{R}^4$ is contractible, hence, $\pi\colon P \longrightarrow \mathbf{R}^4$ is globally trivializable. For any trivialization corresponding to a section $s\colon \mathbf{R}^4 \longrightarrow P$, the $1$-form $s^* \theta$ is an $\mathbf{R}$-valued form on $\mathbf{R}^4$, so it can be written as $s^* \theta = A$. Then, any other section $s'$ can be written as $x \longmapsto s(x) \cdot u(x)^{-1}$ for some function $u\colon x \longmapsto \mathrm{e}^{\mathrm{i} \phi(x)} \colon \mathbf{R}^4 \longmapsto U(1)$ and we have
\begin{equation}
    s'^*(\theta)=A+\diff (\mathrm{e}^{\mathrm{i}\phi})\cdot \mathrm{e}^{-\mathrm{i}\phi}= A + \mathrm{i} \diff \phi.
\end{equation}

The choice of an electromagnetic potential is then equivalent to a choice of a trivialization of the principal $U(1)$-bundle $\pi\colon P \longrightarrow \mathbf{R}^4$. Identifying $P$ with $\mathbf{R}^4\times G$ via a section $s$, we can rewrite the field strength tensor as $F=s^*\omega=s^*\diff \theta=\diff A$.

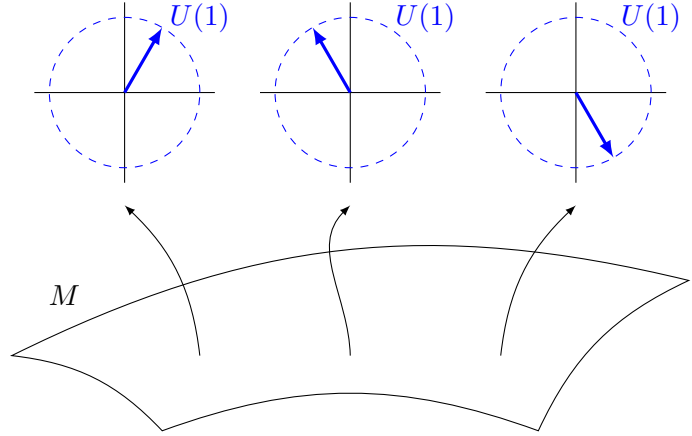
\begin{figure}[ht!]
\centering
\begin{tikzpicture}
  \begin{scope}[shift={(-0.5, 0)}]
    \draw (-4, 2) to[relative, out=20, in=160]
          (-2, 1) to[relative, out=20, in=160]
          ( 3, 1) to[relative, out=20, in=160] 
          ( 5, 3) to[relative, out=-20, in=-160] 
          cycle;
    \draw (-4, 2) node[xshift=0.7cm,yshift=0.8cm]{$M$};
  \end{scope}

  \draw[->] (-2, 2) to[relative, out=-20, in=-160] node[left] {} (-3, 4);
  \draw[->] (0, 2) to[relative, out=5 in=190] node[above] {} (0, 4);
  \draw[->] (2, 2) to[relative, out=20, in=160] node[right] {} (3, 4);

  \begin{scope}[shift={(0, +2.5)}]
    \draw[-] (-3,1.8) -- (-3,4.2);
    \draw[-] (-4.2,3) -- (-1.8,3);
    \draw[<-,color=blue, very thick] (60:1)++(-3,3) --(-3,3);
    \draw[color=blue, dashed] (-3,3) circle (1 cm) node[xshift=1cm,yshift=1cm] {$U(1)$};
    
    \draw[-] (0,1.8) -- (0,4.2);
    \draw[-] (-1.2,3) -- (1.2,3);
    \draw[<-,color=blue, very thick] (120:1)++(0,3) --(0,3);
    \draw[color=blue, dashed] (0,3) circle (1 cm) node[xshift=1cm,yshift=1cm] {$U(1)$};

    \draw[-] (3,1.8) -- (3,4.2);
    \draw[-] (1.8,3) -- (4.2,3);
    \draw[<-,color=blue, very thick] (300:1)++(3,3) --(3,3);
    \draw[color=blue, dashed] (3,3) circle (1 cm) node[xshift=1cm,yshift=1cm] {$U(1)$};
  \end{scope}
\end{tikzpicture}  
\caption{In blue, the \emph{phasor} representing the amplitude and phase of an electromagnetic signal. The principal $U(1)$-bundle is the natural phase space for electrodynamics. Sections of this bundle are often called phasors in the jargon of physics and engineering.}
\label{fig:U1-bundle}
\end{figure}
\end{example}


\begin{lemma}
    Let $\pi\colon P \longrightarrow M$ be a principal $G$-bundle over an $E$-manifold $(M, E_M)$ with the pullback structure $E_P$.
    \begin{enumerate}
        \item The action $r\colon P \times G \longrightarrow P$ is an $E$-action.
        \item The vector bundle map
        \begin{equation}
            \begin{array}{rccc}
                \varphi\colon & P \times \mathfrak{g} & \longrightarrow & \ker \diff \pi \\
                 & (p, \alpha) & \longmapsto & \diff_e \psi_p (\alpha)
            \end{array}
        \end{equation}
        is an isomorphism. Here, $\psi_p$ is the orbit map at a point $p \in P$. \qedhere
    \end{enumerate}
\end{lemma}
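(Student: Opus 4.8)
The plan is to treat the two statements separately, observing that each is a transcription of a classical fact about principal bundles into the $E$-category, so that the work lies in checking compatibility with the pullback structure $E_P$ rather than in supplying new geometric input.

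For the first claim, the decisive observation is that the $G$-action is fibre-preserving: since $\pi$ is $G$-invariant, each translation $r_g\colon P \longrightarrow P$ satisfies $\pi \circ r_g = \pi$, hence covers the identity on $M$ and $\diff \pi \circ \diff r_g = \diff \pi$. I would first use this to show that each $r_g$ preserves $E_P$. Recalling the pointwise description of the pullback distribution, $E_P(p) = (\diff_p \pi)^{-1}(E_M(\pi(p)))$, any $u \in E_P(p)$ satisfies $\diff \pi(\diff r_g u) = \diff \pi(u) \in E_M(\pi(p)) = E_M(\pi(pg))$, so that $\diff r_g u \in E_P(pg)$; applying the same argument to $r_{g^{-1}}$ yields $\diff r_g(E_P(p)) = E_P(pg)$, i.e. $r_g$ is an $E$-diffeomorphism in the sense of proposition \ref{prop:CompDiffDetEMaps}. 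To conclude that the full action is an $E$-action I would evaluate $\diff_{(p,g)} r$ on $E_{P \times G}(p,g) = E_P(p) \oplus \mathrm{T}_g G$: the $E_P(p)$-summand maps onto $E_P(pg)$ by the previous step, while the $\mathrm{T}_g G$-summand maps into the tangent to the orbit $p \cdot G$, which is vertical and therefore contained in $E_P(pg)$. Hence $\diff_{(p,g)} r(E_{P \times G}) = E_P(pg)$, which is precisely the $E$-action condition.

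For the second claim I would argue as in the classical trivialization of the vertical bundle. First I note that, under the pullback structure, $\ker \diff \pi$ coincides with the ordinary vertical bundle $\mathrm{V}P = \ker(\diff \pi\colon \mathrm{T}P \longrightarrow \mathrm{T}M)$, since the kernel of the projection $\ee{T}P \longrightarrow \pi^* \ee{T}M$ consists exactly of the pairs $(u, 0)$ with $u$ tangent to the fibres; in particular it has rank $\dim G = \operatorname{rank}(P \times \mathfrak{g})$. The map $\varphi$ is a smooth vector-bundle morphism covering the identity on $P$, because the action is smooth and $\varphi(p, \alpha) = \diff_e \psi_p(\alpha)$ depends smoothly on $(p, \alpha)$; its image lands in $\mathrm{V}P$ since each fundamental vector is tangent to an orbit. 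As the ranks agree, it suffices to prove fibrewise injectivity, and this is where freeness of the principal action enters: for each $p$ the orbit map $\psi_p\colon G \longrightarrow P$ is an injective immersion with image the fibre through $p$, so $\diff_e \psi_p\colon \mathfrak{g} \longrightarrow \mathrm{T}_p P$ is injective with image exactly $\mathrm{V}P|_p$. Thus $\varphi_p$ is a linear isomorphism for every $p$, and $\varphi$ is a vector-bundle isomorphism.

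I expect the only genuine subtlety to be the bookkeeping in identifying $\ker \diff \pi$ with the ordinary vertical bundle and in verifying that the fundamental vector fields indeed belong to $\ee{T}P$; both reduce to the fact that the vertical directions are unaffected by the $E$-structure, which is already encoded in the splitting $\ee{T}P \simeq \mathrm{V}P \times \ee{T}M$ established in the pullback proposition. The remaining arguments are the standard principal-bundle computations, requiring only freeness of the action and the fibre-preserving property used in the first part.
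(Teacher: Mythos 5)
Your proof is correct and takes essentially the same approach as the paper: item (1) is deduced from the $G$-invariance $\pi \circ r_g = \pi$ together with the pullback characterization of $E_P$, and item (2) from the observation that the ordinary vertical bundle lies inside $E_P$ (the paper's remark that $\ker \tilde{\tau} \subset E_P$), after which the classical smooth-case argument via freeness and a rank count applies. The paper compresses both steps into two sentences; your write-up simply supplies the details it leaves implicit.
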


\begin{proof}
    Consider the $E$-structure $\operatorname{Vec}(G) \times E_P$ in $G \times P$. The proof of the first item follows trivially from the $G$-invariance condition. 
    
    Regarding the second item, we now have that, if $\tilde{\tau}\colon \mathrm{T}P \longrightarrow \mathrm{T}M$ is the standard tangent map, then $\ker \tilde{\tau} \subset E_P$. The proof now follows from the smooth case.
\end{proof}

A fundamental result in the symplectic formulation of classical gauge theories is the existence of horizontal lifts and connection forms. Their construction and properties can be fundamentally derived from a result in homological algebra known as the splitting lemma. We show in analogy with the smooth setting that, even though the category of vector bundles is not abelian because of the possible non-existence of kernels, a splitting lemma still holds for $E$-Atiyah sequences.

\begin{proposition}[$E$-Splitting lemma] \label{prop:SplitLemma}
    Consider a principal $G$-bundle $\pi\colon P \longrightarrow M$ over an $E$-manifold $(M, E_M)$ with the pullback structure $E_P$. The following objects are in one-to-one correspondence:
    \begin{enumerate}
        \item $G$-Invariant splittings of the short exact sequence \eqref{eq:EAtiyahSeq}.
        \item $G$-Invariant retractions $\theta\colon \ee{T}P \longrightarrow P \times \mathfrak{g}$.
        \item $G$-Invariant sections $h\colon \pi^* \ee{T}M \longrightarrow \ee{T}P$. \qedhere
    \end{enumerate}
\end{proposition}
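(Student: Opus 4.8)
The plan is to recognize that, although \eqref{eq:EAtiyahSeq} is a sequence of $E$-tangent bundles and the ambient category is not abelian, this particular sequence is a genuine short exact sequence of \emph{vector bundles} over $P$, to which the classical splitting lemma applies verbatim. Indeed, the preceding lemma identifies the kernel term $\ker\diff\pi$ with the trivial bundle $P\times\mathfrak{g}$ through the isomorphism $\varphi$, so it is an honest vector bundle rather than a problematic quotient; surjectivity of $\diff\pi\colon\ee{T}P\longrightarrow\pi^*\ee{T}M$ follows from the defining fibre-product description of the pullback structure together with the surjectivity of the differential of the submersion $\pi$; and exactness in the middle is immediate. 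This is precisely where the possible failure of kernels in the $E$-category is sidestepped: we may argue entirely within the category of vector bundles over $P$, where every short exact sequence splits.

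First I would record the three pieces of data concretely. Using $\varphi$ to identify $\ker\diff\pi$ with $P\times\mathfrak{g}$, a retraction as in (2) is a bundle map $\theta\colon\ee{T}P\longrightarrow P\times\mathfrak{g}$ with $\theta\circ\varphi=\operatorname{id}$; a section as in (3) is a bundle map $h\colon\pi^*\ee{T}M\longrightarrow\ee{T}P$ with $\diff\pi\circ h=\operatorname{id}$; and a splitting as in (1) is a complement, that is, a subbundle $\mathcal{H}\subset\ee{T}P$ with $\ee{T}P=\ker\diff\pi\oplus\mathcal{H}$. The correspondences are the usual fibrewise-linear ones, performed smoothly: from $\mathcal{H}$ one builds $\theta=\varphi^{-1}\circ\operatorname{pr}$, with $\operatorname{pr}$ the projection onto $\ker\diff\pi$ along $\mathcal{H}$, and $h=(\diff\pi|_{\mathcal{H}})^{-1}$, using that $\diff\pi|_{\mathcal{H}}\colon\mathcal{H}\longrightarrow\pi^*\ee{T}M$ is an isomorphism; conversely $\mathcal{H}=\ker\theta$ recovers the splitting from (2) and $\mathcal{H}=\im h$ recovers it from (3). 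Checking that these assignments are mutually inverse is a direct fibrewise computation.

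The only place where the principal-bundle structure is genuinely needed is to verify that $G$-invariance passes through each assignment, and this is the step I expect to require the most care. One uses that the $G$-action on $\ee{T}P$, which is an $E$-action by the preceding lemma, restricts to $\ker\diff\pi$ and, under $\varphi$, corresponds to the action $(p,\xi)\longmapsto(p\cdot g,\Ad_{g^{-1}}\xi)$ on $P\times\mathfrak{g}$, while on $\pi^*\ee{T}M$ it is the action induced from $\pi$. With these actions fixed, $G$-invariance of $\mathcal{H}$ is equivalent to $G$-equivariance of the projection $\operatorname{pr}$, hence of $\theta$, and equally to $G$-equivariance of $h=(\diff\pi|_{\mathcal{H}})^{-1}$; thus the three invariance conditions match under the bijections. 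Once the sequence has been placed in the category of vector bundles over $P$, no further obstruction remains and the proof reduces to the standard splitting lemma together with this equivariance bookkeeping.
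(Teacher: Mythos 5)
Your proof is correct and takes exactly the route the paper intends: the paper states this proposition without a written proof, presenting it as the classical splitting lemma applied ``in analogy with the smooth setting'' once the preceding lemma identifies $\ker \diff \pi \simeq P \times \mathfrak{g}$. Your key observation — that the $E$-Atiyah sequence is an honest short exact sequence of vector bundles over $P$ (surjectivity of $\diff\pi$ coming from the fibre-product definition of $E_P$ and the submersion property, so the non-abelian-category caveat is harmless) — together with the standard fibrewise correspondences and the equivariance bookkeeping, in which $\theta$ intertwines the lifted action with the $\Ad$-twisted action on $P \times \mathfrak{g}$ exactly as recorded in remark \ref{rmk:SplitLemma}, supplies precisely the argument the paper leaves implicit.
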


\begin{notation}
    Consider a principal connection \eqref{eq:EAtiyahSeq} and the induced connection in $\ee{T}P$. The induced retraction $\theta\colon \ee{T}P \longrightarrow \ker \diff \pi$ is called the \emph{connection form}. It is an element of $\Omega^1(P \times \mathfrak{g}; \ee{T}P)$. The section $h$ is called the \emph{horizontal lift} of the connection.
\end{notation}

\begin{remark} \label{rmk:SplitLemma}
    We will consider from now and onward the natural left action induced from $r\colon P \times G \longrightarrow P$, defined as $l_g(p) = r_{g^{-1}}(p)$ for any $p \in P$ and $g \in G$. The splitting $\Phi$ of the $E$-Atiyah sequence and the section $h$ remain invariant by the action of $l$, while the connection form $\theta$ now intertwines $l$ and the adjoint action in $P \times \mathfrak{g}$.
\end{remark}

\subsection{The Hamiltonian formalism of gauge theories} \label{ssec:WeinsteinFormalism}



Having proved the main results concerning the structure of $E$-principal $G$-bundles and gauge fields over them, we are ready to investigate the symplectic formulation of the equations of motion. Throughout the subsection we will consider a fixed principal $G$-bundle $\pi\colon P \longrightarrow M$ with the pullback structure $E_P$. The following commutative diagram contains the information of the splitting lemma \ref{prop:SplitLemma},

\begin{equation} \label{eq:AtiyahSplit}
    \begin{tikzcd}
    	&& {\ee{T}P} \\
    	0 & {P \times \mathfrak{g}} && {\pi^* \ee{T}M} & 0 \\
    	&& {(P \times \mathfrak{g})\oplus \mathrm{H}P}
    	\arrow[from=2-1, to=2-2]
    	\arrow["{i \varphi}"', shift right=1, from=2-2, to=1-3]
    	\arrow["{i_1}", shift left=1, from=2-2, to=3-3]
    	\arrow["{\pi_2}"', shift right=1, from=3-3, to=2-4]
    	\arrow["{\tilde{\pi}}", shift left=1, from=1-3, to=2-4]
    	\arrow[from=2-4, to=2-5]
    	\arrow["\Phi", from=1-3, to=3-3]
    	\arrow["{\pi_1}", shift left=1, from=3-3, to=2-2]
    	\arrow["{i_2}"', shift right=1, from=2-4, to=3-3]
    	\arrow["h", shift left=1, from=2-4, to=1-3]
    	\arrow["\theta"', shift right=1, from=1-3, to=2-2]
    \end{tikzcd}
\end{equation}

where the maps are understood to be $G$-invariant. In this diagram, the splitting gives an identification $\mathrm{H}P \simeq \pi^* \ee{T}M$. As a consequence, $\mathrm{H}P$ is an $E$-manifold and the splitting $\Phi$ is an $E$-map. The isomorphism $\Phi$ also induces a $G$-action in $(P \times \mathfrak{g}) \oplus \mathrm{H}P$.

In classical Hamiltonian mechanics, the phase space for Hamilton's equations of motion over a manifold $M$ is the cotangent bundle $\mathrm{T}^*M$. In the extension of Hamiltonian mechanics to $E$-manifolds, we have seen through theorem \ref{thm:EGeod} that the natural phase space becomes $\ee{T}^*M$. We have also concluded that the configuration space of a particle in a gauge theory has to be enlarged to the total bundle $\tau\colon B \longrightarrow M$. In this setting, the natural phase space to describe Hamiltonian dynamics is $\ee{T}B$. Consequently, we are interested in the dual splitting defined by a gauge field, represented in the following commutative diagram:
\begin{equation} \label{eq:DualAtiyahSplit}
    \begin{tikzcd}
    	&& {\ee{T}^*P} \\
    	0 & {\pi^* \ee{T}^*M} && {P \times\mathfrak{g}^*} & 0 \\
    	&& {(P \times \mathfrak{g}^*) \oplus P^\sharp}
    	\arrow[from=2-1, to=2-2]
    	\arrow[from=2-4, to=2-5]
    	\arrow["\Psi"', from=3-3, to=1-3]
    	\arrow["{\pi_2^\dagger}", shift left=1, from=2-2, to=3-3]
    	\arrow["{i_1^\dagger}"', shift right=1, from=3-3, to=2-4]
    	\arrow["{\tilde{\pi}^\dagger}"', shift right=1, from=2-2, to=1-3]
    	\arrow["{\varphi^\dagger i^\dagger}", shift left=1, from=1-3, to=2-4]
    	\arrow["{\pi_1^\dagger}"', shift right=1, from=2-4, to=3-3]
    	\arrow["{\theta^\dagger}", shift left=1, from=2-4, to=1-3]
    	\arrow["{h^\dagger}"', shift right=1, from=1-3, to=2-2]
    	\arrow["{i_2^\dagger}", shift left=1, from=3-3, to=2-2]
    \end{tikzcd}
\end{equation}
We have denoted $P^\sharp = \mathrm{H}^* P$. Moreover, it can be noted that the induced splitting is given by $\Psi = \Phi^\dagger$. As before, the isomorphism $\Psi$ induces an action of $G$ in $(P \times \mathfrak{g}^*)\oplus P^\sharp$ and a symplectic form by pushforward.

Before presenting the main result, we will show that the vector bundle $P^\sharp$ is a specific realization of the pullback bundle of $\ee{T}^*M$ by the projection $\pi\colon P \longrightarrow M$.

\begin{lemma}
    Let $\pi\colon P \longrightarrow M$ be an $E$-principal $G$-bundle over an $E$-manifold $(M, E_M)$. In the notation of diagrams \eqref{eq:AtiyahSplit} and \eqref{eq:DualAtiyahSplit}, the vector bundle $P^\sharp$ fits the pullback bundle diagram
    \begin{equation} \label{eq:PSharpPllb}
        \begin{tikzcd}[sep=large]
        	{P^\sharp} & P \\
        	{\ee{T}^* M} & M
        	\arrow["{\pi^\sharp}"', from=1-1, to=2-1]
        	\arrow["{\tau_{\ee{T}^*M}}"', from=2-1, to=2-2]
        	\arrow["{\pi}", from=1-2, to=2-2]
        	\arrow["{\tau_{\ee{T}^*P}\tilde{\pi}^\dagger}", from=1-1, to=1-2]
        \end{tikzcd}
    \end{equation}
    Here, the map $\pi^\sharp\colon P^\sharp \longrightarrow \ee{T}^*M$ is defined by its action on elements $X \in \mathfrak{g} \times \mathrm{H}P$ and $\beta \in \mathfrak{g}^* \times P^\sharp$ as
    \begin{equation} \label{eq:pllbDef}
        \langle \beta, X \rangle = \langle \kappa \tilde{\pi}^\dagger \beta, \diff \pi h X \rangle. \qedhere
    \end{equation}
\end{lemma}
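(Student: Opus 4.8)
The plan is to exhibit $P^\sharp$ as the pullback bundle $\pi^*(\ee{T}^*M)$ by verifying the standard fibrewise characterisation: a vector bundle over $P$ equipped with a bundle map to $\ee{T}^*M$ covering $\pi$ is the pullback precisely when that map restricts to a linear isomorphism on every fibre. First I would extract from the $E$-splitting lemma \ref{prop:SplitLemma} the fact that the horizontal lift $h$ identifies $\mathrm{H}P$ with a sub-bundle of $\ee{T}P$ complementary to $\ker\diff\pi\cong P\times\mathfrak{g}$, and that $\diff\pi$ restricted to $\mathrm{H}P$ is a left inverse of $h$. Consequently, for each $p\in P$ the map $\diff\pi|_{\mathrm{H}_pP}\colon \mathrm{H}_pP\longrightarrow\ee{T}_{\pi(p)}M$ is a linear isomorphism, whose inverse is the fibrewise horizontal lift $h$.

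With this in hand, I would check that formula \eqref{eq:pllbDef} produces a well-defined bundle map $\pi^\sharp\colon P^\sharp\longrightarrow\ee{T}^*M$ covering $\pi$. Fixing $p\in P$ and $\beta\in P^\sharp_p=\mathrm{H}^*_pP$, the right-hand side of \eqref{eq:pllbDef} is linear in $X$, and because $\diff\pi\, h$ is the identity on $\ee{T}_{\pi(p)}M$ it depends on $X$ only through $\diff\pi\, hX$, which ranges over all of $\ee{T}_{\pi(p)}M$. Hence \eqref{eq:pllbDef} determines a unique covector $\pi^\sharp(\beta)\in\ee{T}^*_{\pi(p)}M$, and comparing base points gives $\tau_{\ee{T}^*M}(\pi^\sharp(\beta))=\pi(p)$, which is exactly the commutativity of the square \eqref{eq:PSharpPllb}.

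It then remains to see that $\pi^\sharp$ is a fibrewise isomorphism, and here the reading of \eqref{eq:pllbDef} is decisive: it states precisely that $\pi^\sharp_p\colon \mathrm{H}^*_pP\longrightarrow\ee{T}^*_{\pi(p)}M$ is the transpose of the isomorphism $\diff\pi|_{\mathrm{H}_pP}$, equivalently the dual of the fibrewise horizontal lift $h$. Being the dual of a linear isomorphism, $\pi^\sharp_p$ is itself an isomorphism. Therefore the induced map $(\tau_{\ee{T}^*P}\tilde\pi^\dagger,\ \pi^\sharp)\colon P^\sharp\longrightarrow P\times_M\ee{T}^*M$ is a fibrewise isomorphism over the identity of $P$, hence a bundle isomorphism, and the square \eqref{eq:PSharpPllb} is a pullback diagram.

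The only genuinely delicate point is the first step: the identification $\mathrm{H}P\cong\pi^*\ee{T}M$ and the inversion of $h$ by $\diff\pi$. One must keep in mind that this is an identification of abstract $E$-bundles furnished by the splitting of the $E$-Atiyah sequence \eqref{eq:EAtiyahSeq}, and \emph{not} an identification of sub-bundles of $\mathrm{T}P$ via the anchor; were it the latter, the generic non-injectivity of $\rho_E$ over the singular locus could obstruct the argument. Since all the maps in \eqref{eq:AtiyahSplit} and \eqref{eq:DualAtiyahSplit} are $G$-invariant $E$-bundle maps, the dualisation and the transpose computation proceed verbatim as in the smooth case of Weinstein and Montgomery, and no further work is required beyond tracking the $G$-invariance.
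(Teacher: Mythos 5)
Your proposal is correct and takes essentially the same route as the paper's proof: both rest on the splitting lemma's fibrewise identification of $\mathrm{H}P$ with $\pi^*\ee{T}M$ (with $h$ inverting $\diff\pi$ on horizontal vectors) and dualize it, so that $\pi^\sharp$ is the fibrewise dual of the horizontal lift and $P^\sharp \simeq \pi^*\ee{T}^*M$. The paper presents this in the opposite logical direction---taking the identification $P^\sharp \simeq \pi^*\ee{T}^*M$ as definitional and deriving \eqref{eq:pllbDef} from the pullback pairing together with $\sigma = \diff\pi\, h$---while you verify the fibrewise-isomorphism criterion for pullbacks directly, but the mathematical content is identical, and your caveat that the identification is one of abstract $E$-bundles rather than of subbundles of $\mathrm{T}P$ via the anchor is well taken.
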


\begin{proof}
    To begin the proof we realize that, by definition, $\pi^* \ee{T}^*M$ is the pullback bundle of $\ee{T}^*M$ by the submersion $\pi\colon P \longrightarrow M$. We take the identification $P^\sharp \simeq \pi^*\ee{T}^*M$ and construct the following commutative diagram, where we have also expressed the pullback property of $\pi^*\ee{T}M$:
    \begin{equation} \label{eq:PllbDiagrams}
        \begin{tikzcd}[sep=large]
        	{\ee{T}P} &&& {\ee{T}^*P} \\
        	{} & {\pi^* \ee{T} M} & P && {P^\sharp} & P \\
        	& {\ee{T}M} & M && {\ee{T}^*M} & M
        	\arrow[from=2-2, to=2-3]
        	\arrow["\pi", from=2-3, to=3-3]
        	\arrow["\sigma"', from=2-2, to=3-2]
        	\arrow[from=3-2, to=3-3]
        	\arrow["{\tilde{\pi}}"', shift right=1, from=1-1, to=2-2]
        	\arrow["{\diff \pi}"', curve={height=18pt}, from=1-1, to=3-2]
        	\arrow["h"', shift right=1, from=2-2, to=1-1]
        	\arrow["{\pi^\sharp}"', from=2-5, to=3-5]
        	\arrow["{\tau_{P^\sharp}}", from=2-5, to=2-6]
        	\arrow[from=3-5, to=3-6]
        	\arrow["\pi", from=2-6, to=3-6]
        	\arrow["{h^\dagger}"', shift right=1, from=1-4, to=2-5]
        	\arrow["{\tilde{\pi}^\dagger}"', shift right=1, from=2-5, to=1-4]
        	\arrow["{\tau_{\ee{T}^*P}}", curve={height=-18pt}, from=1-4, to=2-6]
        	\arrow["\kappa"', curve={height=18pt}, from=1-4, to=3-5]
        \end{tikzcd}
    \end{equation}
    
    The proof of the commutativity of these diagrams now follows from the same elementary idea: the maps in the splittings \eqref{eq:AtiyahSplit} and \eqref{eq:DualAtiyahSplit} are vector bundle morphisms. Firstly, as $\pi^\dagger$ is a vector bundle morphism it is straightforward to see that $\tau_{P^\sharp} = \tau_{\ee{T}^*P} \tilde{\pi}^\dagger$. We are defining the projection $\kappa\colon \ee{T}^*P \longrightarrow \ee{T}^*M$ by means of the dual splitting $h^\dagger$.
    
    To prove equation \eqref{eq:pllbDef} we use the fact that $\ee{T}M$ and $P^\sharp$ are pullback bundles and, as a consequence, $\langle \beta, X \rangle = \langle \pi^\sharp \beta, \sigma X \rangle$. Following diagrams \eqref{eq:PllbDiagrams} we see that $\sigma = \diff \pi h$, which proves the result. 
\end{proof}

\begin{remark} \label{rmk:ExtensionPllb}
    To prove equation \eqref{eq:pllbDef} we have used the inclusions $h, \tilde{\pi}^\dagger$ to obtain representatives of $X$ and $\beta$ in $\ee{T}P$ and $\ee{T}^*P$, respectively. We can show that this choice of representatives is not relevant for expression \eqref{eq:pllbDef}, that is, for any $Y \in \ee{T}P$ and $\eta \in \ee{T}^*P$ such that $\tilde{\pi} Y = X$ and $h^\dagger \eta = \beta$ we have $\langle \beta, X \rangle = \langle \kappa \eta, \diff \pi Y \rangle$.
    
    To prove this fact we only have to see, given equation \eqref{eq:pllbDef}, that $\langle \kappa \eta, \diff \pi Y \rangle = \langle \kappa \tilde{\pi}^\dagger \beta, \diff \pi h X \rangle$. A direct computation shows
    \begin{align*}
        \langle \kappa \eta, \diff \pi Y \rangle - \langle \kappa \tilde{\pi}^\dagger \beta, \diff \pi h X \rangle &= \langle \kappa \eta, \diff \pi Y \rangle - \langle \kappa \tilde{\pi}^\dagger \beta, \diff \pi Y \rangle + \langle \kappa \tilde{\pi}^\dagger \eta, \diff \pi Y \rangle - \langle \kappa \tilde{\pi}^\dagger \beta, \diff \pi h X \rangle \rangle \\
        &= \langle \kappa (\eta - \tilde{\pi}^\dagger \beta), \diff \pi Y \rangle - \langle \kappa \tilde{\pi}^\dagger \beta, \diff \pi(Y -  h X) \rangle.
    \end{align*}
    We use now the fact that $\tilde{\pi} h = \id_{\pi^* \ee{T}M}$ from the properties of the splitting lemma \ref{prop:SplitLemma}. From diagram \eqref{eq:PllbDiagrams} and using $\tilde{\pi} Y = X$, we have $\diff\pi(Y - h X) = \sigma \tilde{\pi} Y - \sigma \tilde{\pi} h X = \sigma X - \sigma X = 0$. Similarly, $h^\dagger \tilde{\pi}^\dagger = \id_{P^\sharp}$. Using condition $h^\dagger \eta = \beta$ we can show that $\kappa(\eta - \tilde{\pi}^\dagger \beta) = \pi^\sharp h^\dagger \eta - \pi^\sharp h^\dagger \tilde{\pi}^\dagger \beta = \pi^\sharp \beta - \pi^\sharp \beta = 0$. This concludes the proof.
\end{remark}

\begin{theorem} \label{thm:EWeinsteinUniversal}
    Consider a principal $G$-bundle $\pi\colon P \longrightarrow M$ over an $E$-manifold $M$ and a Hamiltonian $G$-space $Q$.
    \begin{enumerate}
        \item The product space $\ee{T}^*P \times Q$ is Hamiltonian with moment map $\mu_P + \mu_Q$.
        \item The hypothesis of the reduction theorem \ref{thm:ERed} are satisfied and, consequently, the space $(\ee{T}^*P \times Q)_0$ is an $E$-symplectic manifold. 
        \item The horizontal lift $h^\dagger$ is well-defined in classes of equivalence and defines a map $\alpha\colon (\ee{T}^*P \times Q)_0 \longrightarrow \ee{T}^*M$. \qedhere
    \end{enumerate}
\end{theorem}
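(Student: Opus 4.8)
The plan is to treat the three claims in order, building each on the symplectic machinery already assembled.

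For (1), I would start from the principal right action $r\colon P\times G\longrightarrow P$, or rather its associated left action $l_g = r_{g^{-1}}$ of remark \ref{rmk:SplitLemma}, which is a free, proper $E$-action on $P$. By Proposition \ref{prop:ECotLiftPresLiouForm} each $l_g$ lifts to an $E$-cotangent lift on $\ee{T}^*P$ preserving the canonical Liouville form $\lambda$, and hence the canonical symplectic form $\omega=\diff\lambda$. Differentiating one-parameter subgroups, every fundamental vector field $X^\sharp$ with $X\in\mathfrak{g}$ satisfies $\mathcal{L}_{X^\sharp}\lambda=0$; Cartan's formula then gives $\iota_{X^\sharp}\omega = \iota_{X^\sharp}\diff\lambda = -\diff\,\iota_{X^\sharp}\lambda$, so the lifted action is Hamiltonian with moment map $\langle\mu_P,X\rangle = \iota_{X^\sharp}\lambda$. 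Regarding $Q$ as a Hamiltonian $G$-space with moment map $\mu_Q$, Lemma \ref{lem:ProdHamAct} immediately yields that the diagonal action on $\ee{T}^*P\times Q$ with the product symplectic form is Hamiltonian with moment map $\mu_P+\mu_Q$.

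For (2), I would verify the hypotheses of Theorem \ref{thm:ERed} at the value $0$, where the coadjoint stabilizer is $G_0 = G$. Freeness and properness descend from $P$ to the cotangent lift and then to the diagonal product action in the usual way: a fixed point of the lift projects to a fixed point of $l_g$ on $P$, forcing $g=e$. The only point needing care is that $0$ is a regular value of $\mu_P+\mu_Q$. Here I would use that the fundamental vector fields of the action are vertical, hence genuine $E$-fields lying in the isomorphism locus of $\rho_E$, since the pullback structure $E_P$ contains $\ker\diff\pi$ by the structure lemma for $E$-principal bundles. Consequently the classical submersivity argument applies verbatim: varying a covector in the fibre direction of $\ee{T}^*P\longrightarrow P$ shows that $\diff\mu_P$ is already surjective wherever the action is free, so $\diff(\mu_P+\mu_Q)$ is surjective along the zero level set. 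Theorem \ref{thm:ERed} then gives that $(\ee{T}^*P\times Q)_0 = (\mu_P+\mu_Q)^{-1}(0)/G$ is an $E$-symplectic manifold.

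For (3), I would define the candidate map on $(\mu_P+\mu_Q)^{-1}(0)$ by $(\beta, q)\longmapsto \pi^\sharp\bigl(h^\dagger(\beta)\bigr)$, composing the horizontal projection $h^\dagger\colon \ee{T}^*P\longrightarrow P^\sharp = \pi^*\ee{T}^*M$ of diagram \eqref{eq:DualAtiyahSplit} with the bundle projection $\pi^\sharp\colon P^\sharp\longrightarrow\ee{T}^*M$ from the pullback square \eqref{eq:PSharpPllb}. The task is to check $G$-invariance so that the map descends to the quotient. Since the horizontal lift $h$ is $G$-invariant by remark \ref{rmk:SplitLemma}, its dual $h^\dagger$ intertwines the cotangent-lift action on $\ee{T}^*P$ with the induced action on $P^\sharp$; and $\pi^\sharp$ is $G$-invariant because $G$ acts along the fibres of $\pi$ while $\pi^\sharp$ lands in $\ee{T}^*M$, which carries no $G$-action. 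Hence $\pi^\sharp\circ h^\dagger$ is $G$-invariant, so the map on the product, which ignores the $Q$-factor, is invariant under the diagonal action and descends to a well-defined map $\alpha\colon (\ee{T}^*P\times Q)_0\longrightarrow\ee{T}^*M$. The main obstacle I anticipate is the regularity of $0$ in part (2): one must ensure the singular behaviour of the $E$-structure does not spoil the standard argument that cotangent-lifted moment maps are submersions. The resolution is precisely that the $G$-action is vertical, so all fundamental fields live where $\rho_E$ is an isomorphism and the smooth proof transfers unchanged; verifying this carefully is where the $E$-specific content of the theorem resides.
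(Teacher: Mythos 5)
Your proposal is correct and follows essentially the same route as the paper: part (1) via Lemma \ref{lem:ProdHamAct}, part (2) by checking the hypotheses of Theorem \ref{thm:ERed}, and part (3) by composing $h^\dagger$ with the projection onto the first factor and descending to the quotient by $G$-equivariance. You in fact supply details the paper leaves implicit (the Liouville-form argument that the cotangent lift is Hamiltonian, and the verticality/freeness argument for why $0$ is a regular value of $\mu_P+\mu_Q$), so the difference is one of explicitness, not of method.
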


\begin{proof}
    The first item is straightforward from lemma \ref{lem:ProdHamAct}. We simply take the natural $E$-manifold structure $(Q, \operatorname{Vec}(Q))$.
    
    To see that we can define the Marsden-Weinstein reduced space we have to check that we are in the conditions of theorem \ref{thm:ERed}. Given that $0 \in \mathfrak{g}^*$ is a regular value of $\mu_P$, it is also a regular value of $\mu_P + \mu_Q$. As $G$ acts transitively in $\ee{T}^*P$, it also acts transitively in $\ee{T}^*P \times Q$. This shows that the reduced space $\mu^{-1}(0)/G$ is well-defined and carries a natural $E$-symplectic structure induced from that of $\ee{T}^*P \times Q$.
    
    Regarding the surjection $\alpha\colon (\ee{T}^*P \times Q)_0 \longrightarrow \ee{T}^*M$, we consider the composition of the surjection $h^\dagger\colon \ee{T}^*P \longrightarrow P^\sharp$ induced from the connection and the natural projection $\tilde{p}_1\colon \ee{T}^*P \times Q \longrightarrow \ee{T}^*P$, giving the map $h^\dagger \tilde{p}_1 \colon \ee{T}^*P \times Q \longrightarrow P^\sharp$. We can consider the restriction of this map to the submanifold $j\colon \mu^{-1}(0) \longrightarrow \ee{T}^*P \times Q$, giving the map $h^\dagger \tilde{p}_1 j$. As $h^\dagger$, $\tilde{p}_1$, and $j$ intertwine the $G$-actions in their respective manifolds, we can take the map defined in classes $\alpha = [h^\dagger \tilde{p}_1 j]$, which completes the proof.
\end{proof}

\begin{remark}
    Notice that in the hypotheses of theorem \ref{thm:EWeinsteinUniversal} we could have taken $Q$ to be an $E$-Hamiltonian $G$-space with $E$-structure $E_Q$ compatible with the $G$-action. This choice implies that the associated bundle $B = P \times_G Q$ is no longer equipped with the pullback $E$-structure \ref{def:ProlFibreBun}, but rather with a compatible $E$-structure $\tilde{E}_B \subset E_B$. This procedure induces an $E$-structure in the typical fibre $Q$, which yields room for more levels of degeneracy in the construction.
\end{remark}


Weinstein proved the equivalence of this construction with the so-called Sternberg space, originally introduced in \cite{SternbergMinCoup}. An analogue statement, which we present here, holds in the context of $E$-manifolds.

\begin{theorem} \label{thm:WeinsIsom}
    Let $\pi\colon P \longrightarrow M$ be an $E$-principal $G$-bundle and consider a $G$-Hamiltonian space $Q$. There exists a diffeomorphism
    \begin{equation}
        \mu^{-1}(0) \simeq P^\sharp \times Q.
    \end{equation}
    As a consequence, there exists a symplectomorphism of spaces $(\ee{T}^* P \times Q)_0 \simeq P^\sharp \times_G Q$.
\end{theorem}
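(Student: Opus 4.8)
The plan is to realize the constraint manifold $\mu^{-1}(0)$ as a graph by means of the connection splitting of diagram \eqref{eq:DualAtiyahSplit}. The first thing I would record is that the moment map of the cotangent-lifted $G$-action on $\ee{T}^*P$ is exactly the composite of the retraction $\varphi^\dagger i^\dagger \colon \ee{T}^*P \longrightarrow P \times \mathfrak{g}^*$ with the projection $P \times \mathfrak{g}^* \longrightarrow \mathfrak{g}^*$; this holds because the fundamental vector fields of the action are precisely the sections $i\varphi(p, \xi)$, so that $\langle \mu_P(\alpha), \xi \rangle = \langle \alpha, i\varphi(p, \xi) \rangle$. In the identification $\ee{T}^*P \simeq (P \times \mathfrak{g}^*) \oplus P^\sharp$ furnished by $\Psi$, the map $\mu_P$ therefore reads off the $P \times \mathfrak{g}^*$-component, while the horizontal component $h^\dagger(\alpha) \in P^\sharp$ is unconstrained. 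By lemma \ref{lem:ProdHamAct} the product moment map is $\mu = \mu_P + \mu_Q$, so a pair $(\alpha, q)$ lies in $\mu^{-1}(0)$ if and only if $\varphi^\dagger i^\dagger(\alpha) = (p, -\mu_Q(q))$ with $p = \tau_{\ee{T}^*P}(\alpha)$; that is, the vertical part of $\alpha$ is slaved to $q$.

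Building on this, I would define $\Theta \colon \mu^{-1}(0) \longrightarrow P^\sharp \times Q$ by $\Theta(\alpha, q) = (h^\dagger(\alpha), q)$ and write down the inverse explicitly. Given $\beta \in P^\sharp$ over a point $p \in P$ together with $q \in Q$, set
\begin{equation*}
    \Theta^{-1}(\beta, q) = \big( \Psi( (p, -\mu_Q(q)), \beta ), q \big),
\end{equation*}
which makes sense because $(p, -\mu_Q(q)) \in P \times \mathfrak{g}^*$ and $\beta \in P^\sharp$ both lie in the fibre over the same $p$. Using $\Psi^{-1} = (\varphi^\dagger i^\dagger, h^\dagger)$ and the relation that $\mu_P \circ \Psi$ returns the $\mathfrak{g}^*$-entry of the first summand, a direct verification shows $\Theta$ and $\Theta^{-1}$ are mutually inverse and that $\Theta^{-1}$ lands in $\mu^{-1}(0)$; both maps are smooth since $\mu_Q$ and all the maps in the splitting are smooth. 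This establishes the diffeomorphism $\mu^{-1}(0) \simeq P^\sharp \times Q$.

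For the second statement I would equip $P^\sharp \times Q$ with the diagonal $G$-action (the action on $P^\sharp = \mathrm{H}^*P$ induced from $P$ together with the given action on $Q$) and check that $\Theta$ is $G$-equivariant. Remark \ref{rmk:SplitLemma} guarantees that the horizontal lift, and hence $h^\dagger$, intertwines the lifted action on $\ee{T}^*P$ with the action on $P^\sharp$, and the second component of $\Theta$ is tautologically equivariant; the coadjoint equivariance of $\mu_Q$ is exactly what keeps $\mu^{-1}(0)$ invariant so that the slaving relation is preserved. Since $G$ acts freely and properly on $P$, it acts freely and properly on both $\mu^{-1}(0)$ and $P^\sharp \times Q$, so $\Theta$ descends to a diffeomorphism of quotients
\begin{equation*}
    (\ee{T}^*P \times Q)_0 = \mu^{-1}(0)/G \;\simeq\; (P^\sharp \times Q)/G = P^\sharp \times_G Q.
\end{equation*}
Transporting the reduced $E$-symplectic form produced by theorem \ref{thm:EWeinsteinUniversal} along this descended map turns it into the desired symplectomorphism.

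The main obstacle I anticipate is bookkeeping rather than conceptual difficulty: one has to track base points carefully so that the two summands fed into $\Psi$ always sit over the same point of $P$, guaranteeing that $\Theta^{-1}$ is well-defined, and one must confirm that the coadjoint equivariance of $\mu_Q$ is precisely compatible with the $G$-equivariance of $h^\dagger$, so that $\Theta$ genuinely descends to the quotient. Finally, to know that the transported form is compatible with the $E$-structures one checks that $\Theta$ is an $E$-map; as in the proof of theorem \ref{thm:EShiftTrick}, this reduces to the observation that $\Psi$, $h^\dagger$ and the inclusion $\mu^{-1}(0) \hookrightarrow \ee{T}^*P \times Q$ are all $E$-maps.
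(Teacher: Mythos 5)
Your proposal is correct and follows essentially the same route as the paper: both identify $\ee{T}^*P$ with $(P \times \mathfrak{g}^*) \oplus P^\sharp$ via the connection splitting $\Psi$, observe that the moment map of the lifted action reads off the $\mathfrak{g}^*$-component (up to the sign convention for $\mu_P$), and realize $\mu^{-1}(0)$ as the graph where the vertical part is slaved to $\mu_Q(q)$ while $P^\sharp$ and $Q$ remain free. Your write-up is in fact more complete than the paper's, which stops at the graph description and leaves the equivariance of $h^\dagger$, the descent to quotients, and the transport of the reduced form implicit.
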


\begin{proof} 
    
    Notice that, from diagram \eqref{eq:DualAtiyahSplit}, that the horizontal lift induces a map $h^{\dagger}\colon \ee{T}^*P \longrightarrow P^\sharp$. In order to obtain the isomorphism with Sternberg's space we notice that, because we have chosen the trivialization $\ker \diff \pi \simeq P \times \mathfrak{g}^*$ by the infinitesimal action of $G$ on $\ee{T}P$, the moment map of the action is simply $\mu_P = - p_2$ (see \cite{WeinsteinUniversal}). As a consequence, the moment map $\mu\colon \mathfrak{g}^* \times P^\sharp \times Q \longrightarrow \mathfrak{g}^*$ is given by $\mu = -p_1 + \mu_Q$. With this expression, we can readily show that $\mu^{-1}(0) \simeq P^\sharp \times Q$. Of course, if $\mu(p, X, q) = -X + \mu_Q(q) = 0$ we have that $p \in P^\sharp$ is arbitrary and that $X = \mu_Q(q)$. As a consequence, $q \in Q$ is arbitrary and $X \in \mathfrak{g}^*$ is completely determined by $q$, so the claim is proved.
\end{proof}


The results proved here can be framed under the paradigm of ``cotangent bundle reduction'', as referred to in \cite{MarsdenCotangent}. The authors study the reduction in the cotangent bundle of a smooth manifold $M$ by a proper and free Lie group action $G$. As the quotient $\pi\colon M \longrightarrow M/G$ defines a principal bundle, this model is completely equivalent to our setting. The reduction of the cotangent action by a coadjoint orbit is can be re-conducted to the reduction at $0$ using the shifting trick presented in Theorem \ref{thm:EShiftTrick}. Specifically, the diffeomorphism $\mu^{-1}(\alpha)/G_\alpha \simeq \mu^{-1}(\mathcal{O}(\alpha))/G$ used in the proof together with the shifting trick is the main tool connecting our results and those found in \cite{MarsdenCotangent}. We briefly outline some generalizations in this direction to the theory of $E$-manifolds.

\begin{corollary}
    Let $(M, E_M)$ be an $E$-manifold and consider a principal $G$-bundle $\pi\colon P \longrightarrow M$ with the induced structure $E_P$. Then, the map
    \begin{equation}
        \begin{array}{rccc}
            \sigma\colon & P \times \mathfrak{g}^* & \longrightarrow & \ee{T}^* P/G \\
             & (p, \alpha) & \longmapsto & [\theta^\dagger \alpha]
        \end{array}
    \end{equation}
    intertwines the diagonal action and the cotangent lifted action of $G$ on $\ee{T}^* P$, and therefore it factors to a quotient map $[\sigma]\colon P \times_G \mathfrak{g}^* \longrightarrow \ee{T}^* P/G$ which is fibre-wise injective and $[\varphi^\dagger i^\dagger] [\sigma] = \id$.
\end{corollary}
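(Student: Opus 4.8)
The plan is to reduce everything to one algebraic identity coming from the $E$-splitting lemma and then to promote the maps of diagram \eqref{eq:DualAtiyahSplit} to the quotients by checking $G$-equivariance. First I would record, directly from Proposition \ref{prop:SplitLemma}, that the retraction relation $\theta\circ(i\varphi)=\id_{P\times\mathfrak{g}}$ dualizes to
\[
(\varphi^\dagger i^\dagger)\circ\theta^\dagger=\id_{P\times\mathfrak{g}^*}.
\]
Thus, before passing to any quotient, the map $\tilde\sigma(p,\alpha)=\theta^\dagger_p\alpha$ is a genuine section of $\varphi^\dagger i^\dagger$, and in particular each $\theta^\dagger_p$ is injective.

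The technical heart is the second step: showing that $\tilde\sigma$ intertwines the diagonal coadjoint action on $P\times\mathfrak{g}^*$ with the cotangent-lifted action on $\ee{T}^*P$. By Remark \ref{rmk:SplitLemma} the connection form is equivariant, $\theta_{l_g(p)}\circ\diff l_g=\operatorname{Ad}_g\circ\theta_p$, equivalently $\operatorname{Ad}_{g^{-1}}\theta_{l_g(p)}(w)=\theta_p(\diff l_{g^{-1}}w)$. Writing $\operatorname{Ad}^\vee_g=(\operatorname{Ad}_{g^{-1}})^*$ for the coadjoint action and pairing against an arbitrary $w\in\ee{T}_{l_g(p)}P$, a short transpose-and-equivariance computation gives
\[
\langle\theta^\dagger_{l_g(p)}(\operatorname{Ad}^\vee_g\alpha),\,w\rangle
=\langle\alpha,\,\operatorname{Ad}_{g^{-1}}\theta_{l_g(p)}(w)\rangle
=\langle\alpha,\,\theta_p(\diff l_{g^{-1}}w)\rangle
=\langle\widehat{l_g}(\theta^\dagger_p\alpha),\,w\rangle,
\]
so that $\tilde\sigma(l_g(p),\operatorname{Ad}^\vee_g\alpha)=\widehat{l_g}\,\tilde\sigma(p,\alpha)$, which is exactly the asserted intertwining. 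Since the $G$-action on $P$ is an $E$-action, each $l_g$ is an $E$-diffeomorphism and $\widehat{l_g}$ is the unique, well-defined $E$-cotangent lift furnished by Proposition \ref{prop:ECotLiftPresLiouForm}; this certifies that the computation lives in the $E$-category and that $\ee{T}^*P/G$ is the correct ambient quotient.

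With equivariance in hand the descent is formal. The intertwining shows that $\sigma=q\circ\tilde\sigma$, with $q\colon\ee{T}^*P\to\ee{T}^*P/G$ the projection, is constant on diagonal $G$-orbits, hence factors through a map $[\sigma]\colon P\times_G\mathfrak{g}^*\to\ee{T}^*P/G$. Because $i$ and $\varphi$ are themselves $G$-equivariant, $\varphi^\dagger i^\dagger$ is equivariant and descends to $[\varphi^\dagger i^\dagger]\colon\ee{T}^*P/G\to P\times_G\mathfrak{g}^*$. Composing the descended maps and invoking the dualized retraction identity of the first step gives
\[
[\varphi^\dagger i^\dagger]\,[\sigma]\,[(p,\alpha)]
=\bigl[\,(\varphi^\dagger i^\dagger)\theta^\dagger_p\alpha\,\bigr]
=[(p,\alpha)],
\]
i.e. $[\varphi^\dagger i^\dagger][\sigma]=\id$. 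Finally, fibre-wise injectivity is immediate: a map admitting a left inverse is injective, so its restriction to any fibre $\cong\mathfrak{g}^*$ is injective as well.

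The main obstacle I anticipate is the bookkeeping in the second paragraph: aligning the left-versus-right action conventions and the coadjoint twist $\operatorname{Ad}^\vee_g$ so that the cotangent lift $\widehat{l_g}$—which by construction involves $\diff l_{g^{-1}}$—matches the diagonal action on the nose, while simultaneously confirming that $\widehat{l_g}$ is a legitimate $E$-map rather than merely a smooth diffeomorphism. Once that single equivariance identity is pinned down, the remaining claims are purely formal transpose-and-descend arguments.
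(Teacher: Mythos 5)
Your proposal is correct and follows essentially the same route as the paper's proof: you invoke the equivariance of the connection form from Remark \ref{rmk:SplitLemma} to get that $\theta^\dagger$ intertwines the diagonal and cotangent-lifted actions, use the dualized splitting identity $(\varphi^\dagger i^\dagger)\theta^\dagger = \id$ for fibre-wise injectivity and the left-inverse property, and then descend to quotients. The only difference is that you spell out the transpose-and-equivariance computation and the descent argument in full detail, which the paper leaves implicit.
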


\begin{proof}
    As it was noted in remark \ref{rmk:SplitLemma}, the connection form $\theta$ associated to a connection intertwines the tangent lift in $\ee{T}P$ and the diagonal action in $P \times \mathfrak{g}$. A direct consequence is that $\theta^\dagger$ intertwines the diagonal action in $P \times \mathfrak{g}$ and the cotangent lift in $\ee{T}^* P$. Moreover, this map is a splitting of the dual Atiyah sequence \eqref{eq:DualAtiyahSplit} and, as a consequence, is fibre-wise injective and $\varphi^\dagger i^\dagger \theta^\dagger = \id$. The result follows now taking quotients and noticing that $[\sigma] = [\theta^\dagger]$.
\end{proof}

\begin{corollary}
    If $\mathcal{O} \subset \mathfrak{g}^*$ is a coadjoint orbit, then $\mu^{-1}(\mathcal{O})/G \simeq \ee{T}^*M \times_M (P \times_G \mathcal{O})$.
\end{corollary}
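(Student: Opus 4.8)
The plan is to obtain the statement by assembling three ingredients already in place: the orbit reduction identity from the proof of the shifting trick, the Sternberg-type isomorphism of Theorem \ref{thm:WeinsIsom}, and the pullback description of $P^\sharp$ from diagram \eqref{eq:PSharpPllb}. The only genuinely new computation is a final explicit identification of an associated bundle with a fibre product. First I would record that a coadjoint orbit $\mathcal{O} \subset \mathfrak{g}^*$ is canonically a $G$-Hamiltonian space: it carries its Kirillov--Kostant--Souriau symplectic form, $G$ acts by the coadjoint action, and the inclusion $i\colon \mathcal{O} \hookrightarrow \mathfrak{g}^*$ is an equivariant moment map.

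Writing $\mathcal{O} = \mathcal{O}(\alpha)$, the identity $\mu^{-1}(\alpha)/G_\alpha \simeq \mu^{-1}(\mathcal{O}(\alpha))/G$ established inside the proof of Theorem \ref{thm:EShiftTrick}, together with the shifting trick itself, gives an $E$-symplectomorphism between $\mu^{-1}(\mathcal{O})/G$ and the reduction at $0 \in \mathfrak{g}^*$ of the diagonal $G$-action on $\ee{T}^*P \times \mathcal{O}^-$, where $\mathcal{O}^-$ is the orbit through $-\alpha$. Concretely this uses the one-line computation $(\mu_P + i)^{-1}(0) = \{(\eta, \xi) \mid \xi = -\mu_P(\eta),\ \xi \in \mathcal{O}^-\} \cong \mu_P^{-1}(\mathcal{O})$ of the shifting trick. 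The hypotheses of the reduction theorem \ref{thm:ERed} hold here because the cotangent-lifted action on $\ee{T}^*P$ is free and proper (inherited from the principal action on $P$), and $0$ is a regular value; thus Theorem \ref{thm:EWeinsteinUniversal} guarantees the reduced space is $E$-symplectic.

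Next I would apply Theorem \ref{thm:WeinsIsom} to the $G$-Hamiltonian space $Q = \mathcal{O}^-$, which yields $(\ee{T}^*P \times \mathcal{O}^-)_0 \simeq P^\sharp \times_G \mathcal{O}^-$. Since the map $\xi \mapsto -\xi$ is a $G$-equivariant diffeomorphism $\mathcal{O}^- \to \mathcal{O}$ (equivariance follows from $\Ad^*_{g^{-1}}(-\xi) = -\Ad^*_{g^{-1}}\xi$), it induces an isomorphism of associated bundles $P^\sharp \times_G \mathcal{O}^- \simeq P^\sharp \times_G \mathcal{O}$, so it remains to identify $P^\sharp \times_G \mathcal{O}$ with $\ee{T}^*M \times_M (P \times_G \mathcal{O})$. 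For this I would use the lemma embodied in diagram \eqref{eq:PSharpPllb}, which realizes $P^\sharp$ as the pullback bundle $\pi^* \ee{T}^*M = \{(p, \beta) \in P \times \ee{T}^*M \mid \pi(p) = \tau_{\ee{T}^*M}(\beta)\}$, with $G$ acting by $(p, \beta) \cdot g = (p \cdot g, \beta)$ (the principal action on $P$, and trivially on $\ee{T}^*M$ since $\pi^\sharp$ is $G$-invariant). I would then define
\begin{equation*}
    \Xi\colon P^\sharp \times_G \mathcal{O} \longrightarrow \ee{T}^*M \times_M (P \times_G \mathcal{O}), \qquad [(p, \beta), \xi] \longmapsto (\beta, [p, \xi]),
\end{equation*}
and verify it is a well-defined bijection. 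Well-definedness holds because $[(p \cdot g, \beta), g^{-1} \cdot \xi] \mapsto (\beta, [p \cdot g, g^{-1} \cdot \xi]) = (\beta, [p, \xi])$; the fibre-product constraint $\tau_{\ee{T}^*M}(\beta) = \pi(p)$ is exactly the defining relation of $P^\sharp$ and coincides with the projection of $[p, \xi]$ to $M$; injectivity and surjectivity are immediate from the associated-bundle relations. Chaining these isomorphisms gives
\begin{equation*}
    \mu^{-1}(\mathcal{O})/G \simeq (\ee{T}^*P \times \mathcal{O}^-)_0 \simeq P^\sharp \times_G \mathcal{O}^- \simeq P^\sharp \times_G \mathcal{O} \simeq \ee{T}^*M \times_M (P \times_G \mathcal{O}).
\end{equation*}

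The main obstacle is not the bijection $\Xi$ but confirming that the composite is an $E$-symplectomorphism, namely that the reduced $E$-symplectic form on $\mu^{-1}(\mathcal{O})/G$ is transported to the structure the fibre product inherits. This requires tracking the reduced form through the shifting trick and through the Sternberg identification of Theorem \ref{thm:WeinsIsom}, and checking that $\Xi$ is compatible with the pullback $E$-structure on $P^\sharp \simeq \pi^*\ee{T}^*M$ and the induced $E$-structure on the fibre product; both follow because every map in the chain is an $E$-map by construction and the reduced form is uniquely characterized by \eqref{eq:RedForm}.
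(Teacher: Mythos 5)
Your proposal is correct, but it executes a different route from the one the paper actually writes out. The paper's proof is a one-line direct computation: in the trivialization $\ee{T}^*P \simeq P^\sharp \oplus (P \times \mathfrak{g}^*)$ coming from the connection (diagram \eqref{eq:DualAtiyahSplit}), the moment map of the cotangent lift is just projection onto the $\mathfrak{g}^*$ factor (up to sign), so $\mu^{-1}(\mathcal{O}) \simeq P^\sharp \oplus (P \times \mathcal{O})$ immediately, and the statement ``follows taking quotients.'' The paper does remark that the corollary is a consequence of Theorem \ref{thm:WeinsIsom} combined with the shifting trick (Theorem \ref{thm:EShiftTrick}) --- which is precisely the chain you assembled --- but it deliberately does not execute that chain, preferring the explicit description. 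Comparing the two: the paper's route is shorter and requires no verification of reduction hypotheses, but it is silent on two points that your write-up supplies. First, the sign: in the proof of Theorem \ref{thm:WeinsIsom} the moment map is $\mu_P = -p_2$, so strictly the preimage is $P^\sharp \oplus (P \times (-\mathcal{O}))$, and one needs your $G$-equivariant negation $\mathcal{O}^- \to \mathcal{O}$ to land on $P \times_G \mathcal{O}$; the paper glosses this. Second, the paper's closing phrase ``taking quotients'' is exactly your map $\Xi\colon P^\sharp \times_G \mathcal{O} \longrightarrow \ee{T}^*M \times_M (P \times_G \mathcal{O})$, whose well-definedness rests on the fact that $G$ acts on $P^\sharp \simeq \pi^*\ee{T}^*M$ only through the $P$ factor (diagram \eqref{eq:PSharpPllb}); making that map and its bijectivity explicit is a genuine improvement. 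One caveat on your final paragraph: the negation $\mathcal{O}^- \to \mathcal{O}$ reverses the Kirillov--Kostant--Souriau forms (it is an anti-symplectomorphism), so the composite should only be claimed as a diffeomorphism --- which is all the corollary asserts, and all the paper's proof establishes --- rather than as an $E$-symplectomorphism onto some pre-existing structure on the fibre product; the fibre product only carries the structure transported through your chain.
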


\begin{proof}
    The proof follows the same guidelines as the one of theorem \ref{thm:WeinsIsom} and, in fact, the result to be proved now is a consequence of \ref{thm:WeinsIsom} using the shifting trick (Theorem \ref{eq:ShiftTrick}). We will, however, give a more explicit description. As in the proof of \ref{thm:WeinsIsom}, the moment map of the cotangent lift in the trivialization $\ee{T}^*P \simeq P^\sharp \oplus (P \times \mathfrak{g}^*)$ is given by $\mu\colon (X, p, \alpha) \longmapsto \alpha$. As a consequence, we can readily see that $\mu^{-1}(\mathcal{O}) \simeq P^\sharp \oplus (P \times \mathcal{O})$. The result follows taking quotients.
\end{proof}

\subsection{The Montgomery space for Yang-Mills theories}

Classical Yang-Mills theories fall within the setting of gauge theories. In a Yang-Mills theory, the fundamental object is a principal $G$-bundle $\pi\colon P \longrightarrow M$ over a Riemannian manifold $(M, g)$ and the configuration space is the associated bundle $P \times_G \mathfrak{g}^*$~\cite{GSSymplecticTechniques}. As a consequence, the internal symmetries or charges of a particle are described by elements of the dual algebra $\mathfrak{g}^*$. In such theories, the gauge field $\nabla$ is a connection specified following the Yang-Mills equations,
\begin{equation} \label{eq:YangMillsEqs}
    \left\{ \begin{array}{r@{}c@{}l}
        \diff_\nabla F & {}={} & 0, \\
        \star \diff_\nabla \star F & {}={} & J,
    \end{array} \right.
\end{equation}
where $F$ is the curvature of the connection $\nabla$ and $J$ is a source term for the equations. The operator $\star\colon \Omega^\bullet(M) \longrightarrow \Omega^{\bullet-1}(M)$ is Hodge's star operator induced from the metric $g$.

Montgomery exhibited the natural phase space of a classical particle under a Yang-Mills field in~\cite{MontgomeryWongsEquations} as a symplectic leaf of a bigger Poisson space and generalized Weinstein's isomorphism (theorem \ref{thm:WeinsIsom}) to Poisson manifolds. The following theorem is an extension of these results to the setting of Yang-Mills theories on $E$-manifolds.

\begin{theorem} \label{thm:MinCoupMont}
    Consider an $E$-principal $G$-bundle $\pi\colon P \longrightarrow M$. An $E$-principal connection induces the commutative diagram \ref{eq:PoissMinCoup}. Moreover, the map $[\Psi]$, called the \emph{minimal coupling}, is a Poisson isomorphism.
    \begin{equation} \label{eq:PoissMinCoup}
        \begin{tikzcd}[sep=small]
        	{\ee{T}^*P} && {\ee{T}^*P/G} \\
        	&&& {\ee{T}^*M} \\
        	{\mathfrak{g}^* \times P^\sharp} && {\mathfrak{g}^* \times_G P^\sharp}
        	\arrow["{\pi_{\ee{T}^*P/G}}", from=1-1, to=1-3]
        	\arrow["{\pi_{\mathfrak{g}^* \times_G P^\sharp}}"', from=3-1, to=3-3]
        	\arrow["{[\pi_2]}"', from=3-3, to=2-4]
        	\arrow["{[h^\dagger]}", from=1-3, to=2-4]
        	\arrow["{[\Psi]}", from=3-3, to=1-3]
        	\arrow["\Psi", from=3-1, to=1-1]
        \end{tikzcd}
    \end{equation}
    As a result,
    \begin{itemize}
        \item The Hamiltonian equations of motion generated by the pullback of a Hamiltonian function $H \in \mathcal{C}^\infty(\ee{T}^*M)$ to $\ee{T}^*P/G$ or to $\mathfrak{g}^* \times_G P^\sharp$ are equivalent. These equations are called \emph{Wong's equations of motion}.
        \item We obtain induced dynamics in the associated bundle $P \times_G \mathfrak{g}$, which has base $M$ and fibre $\mathfrak{g}$. Elements in the fibres are called \emph{charges}\footnote{In the case $G = \operatorname{U}(1)$ the fibre is $\mathfrak{u}^*(1) \simeq \mathbf{R}$, which is the standard electric charge. For the strong nuclear force, $G = \operatorname{SU}(3)$ and elements in $\mathfrak{su}^*(3)$ are called \emph{color charges}.}. \qedhere
    \end{itemize}
\end{theorem}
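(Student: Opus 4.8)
The plan is to transport the canonical $E$-symplectic structure of $\ee{T}^*P$ to the Montgomery space through the dual splitting $\Psi=\Phi^\dagger$ of diagram \eqref{eq:DualAtiyahSplit}, checking $G$-equivariance at every stage so that the whole construction descends to the quotients in \eqref{eq:PoissMinCoup}. First I would equip the target with its Poisson structure: by the $E$-cotangent lift (Proposition \ref{prop:ECotLiftPresLiouForm}) the lifted $G$-action on $\ee{T}^*P$ preserves the Liouville form, hence the canonical $E$-symplectic form $\omega_P=\diff\lambda$, and since the action is free and proper the bracket of two $G$-invariant functions is again $G$-invariant, so $\ee{T}^*P/G$ carries a quotient $E$-Poisson structure. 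The $E$-splitting lemma (Proposition \ref{prop:SplitLemma}) then produces from the connection the $G$-equivariant bundle isomorphism $\Psi\colon \mathfrak g^*\times P^\sharp \longrightarrow \ee{T}^*P$, and, as recorded after \eqref{eq:DualAtiyahSplit}, I would put on $\mathfrak g^*\times P^\sharp$ the induced form $\Psi^*\omega_P$ together with the induced $G$-action; since $\Psi$ intertwines the two actions it descends to $[\Psi]\colon \mathfrak g^*\times_G P^\sharp \longrightarrow \ee{T}^*P/G$.

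Next I would verify that \eqref{eq:PoissMinCoup} commutes and that $[\Psi]$ is Poisson. The two maps into $\ee{T}^*M$ are $[h^\dagger]$, induced by the horizontal lift, and $[\pi_2]$, induced by projection onto the $P^\sharp$-factor followed by the identification $P^\sharp\simeq\pi^*\ee{T}^*M$ of diagram \eqref{eq:PSharpPllb}. Writing $\Psi$ on sections as $\theta^\dagger$ on the $\mathfrak g^*$-summand and $\tilde\pi^\dagger$ on the $P^\sharp$-summand, the splitting-lemma relations $h^\dagger\theta^\dagger=0$ and $h^\dagger\tilde\pi^\dagger=\id_{P^\sharp}$ (the latter recorded in Remark \ref{rmk:ExtensionPllb}) give $h^\dagger\Psi=\pi_2$, whence $[h^\dagger]\circ[\Psi]=[\pi_2]$; the remaining squares commute by construction of the quotient maps. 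For the Poisson statement, $\Psi^*\omega_P$ is by construction the $E$-symplectic form on the source, so $\Psi$ is tautologically a $G$-equivariant $E$-symplectomorphism; it therefore carries $G$-invariant functions to $G$-invariant functions and intertwines their Hamiltonian vector fields, and on passing to the quotient the two brackets agree, making $[\Psi]$ a Poisson isomorphism with inverse $[\Psi^{-1}]$.

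The two consequences follow formally. A Hamiltonian $H\in\mathcal C^\infty(\ee{T}^*M)$ pulls back along $[h^\dagger]$ and along $[\pi_2]$ to $G$-invariant Hamiltonians on $\ee{T}^*P/G$ and on $\mathfrak g^*\times_G P^\sharp$, which correspond under $[\Psi]$ by commutativity; a Poisson isomorphism sends Hamiltonian flows to Hamiltonian flows, so the two formulations yield the same trajectories, namely Wong's equations. The induced motion on the associated charge bundle $P\times_G\mathfrak g$ is then read off from the $\mathfrak g^*$-component of the flow, or, equivalently, from the decomposition of $\ee{T}^*P/G$ into symplectic leaves over coadjoint orbits furnished by the $E$-shifting trick (Theorem \ref{thm:EShiftTrick}).

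The main obstacle I anticipate is controlling the $E$-structure across the quotient. Because the anchor $\rho_E$ is only generically injective, $\omega_P$ degenerates along the critical locus and $\ee{T}^*P/G$ is genuinely $E$-Poisson rather than Poisson, so one must ensure that the $G$-quotient interacts cleanly with this degeneracy. As in the proof of the $E$-shifting trick, the remedy is that $\Psi$, $h^\dagger$, $\tilde\pi^\dagger$ and the projections are all $E$-maps preserving the pullback structures, so every identity above holds on the open dense locus where $\rho_E$ is an isomorphism and extends by continuity; should an explicit formula be wanted, the curvature coupling together with the $E$-degeneracy would surface through the structure functions $C_{jk}^i$ in the natural coordinates of Proposition \ref{prop:ENatCoord}.
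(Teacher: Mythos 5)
Your proposal is correct and follows essentially the same route as the paper's proof: transport the canonical $E$-symplectic structure of $\ee{T}^*P$ to $\mathfrak{g}^*\times P^\sharp$ via the connection-induced isomorphism $\Psi$, use the symplectic nature of the cotangent-lifted action and the $G$-equivariance of $\Psi$ to push everything down to the quotients, and conclude that $[\Psi]$ is Poisson because it is a descent of a composition of Poisson maps. The only cosmetic difference is that you phrase the quotient Poisson structures via invariance of brackets of $G$-invariant functions while the paper pushes forward the Poisson bivectors $\Pi$ along $\diff\pi_{\ee{T}^*P/G}$ and $\diff\pi_{\mathfrak{g}^*\times_G P^\sharp}$; your explicit check of the triangle $[h^\dagger]\circ[\Psi]=[\pi_2]$ via the splitting relations is a welcome detail the paper leaves implicit.
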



\begin{proof}
    Consider the induced Poisson structure $\Pi_{\ee{T}^*P} \in \ee{\operatorname{Vec}}^2(\ee{T}^* P)$ from the canonical symplectic form $\omega_P$. This structure induces a Poisson structure by pull-back $\Pi_{\mathfrak{g}^* \times P^\sharp} = \diff \Psi^{-1} \Pi_{\ee{T}^*P} \in \ee{\operatorname{Vec}}^2(\mathfrak{g}^*\times P^\sharp)$. As the action of $G$ in $\ee{T}^* P$ is symplectic (it is a cotangent lift), the quotient $\ee{T}^*P/G$ can be endowed a Poisson structure $\Pi_{\ee{T}^*P/G} = \diff \pi_{\ee{T}^*P/G} \Pi_{\ee{T}^*P}$. The induced $G$-action in $\mathfrak{g}^* \times P^\sharp$ induces a surjection $\pi_{\mathfrak{g}^* \times_G P^\sharp}$ which can be used to define a Poisson structure $\Pi_{\mathfrak{g}^* \times_G P^\sharp} = \diff \pi_{\mathfrak{g}^* \times_G P^\sharp} \Pi_{\mathfrak{g}^* \times P^\sharp}$. The fact that this structure exists also follows from the fact that the $G$-action in $\mathfrak{g}^* \times P^\sharp$ is symplectic. The map $\Psi$ intertwines the orbits of $G$ in $\ee{T}^*P$ and $\mathfrak{g}^* \times P^\sharp$ and, as a consequence, there exists a map $[\Psi]\colon \mathfrak{g}^* \times_G P^\sharp \longrightarrow \ee{T}^* P/G$ which is well-defined on orbits and makes diagram \eqref{eq:PoissMinCoup} commute. Given that the projections to the quotients $\pi_{\ee{T}^*P/G}$, $\pi_{\mathfrak{g}^* \times_G P^\sharp}$, and $\Psi$ are Poisson maps, the induced map $[\Psi]$ is a Poisson morphism as
    \begin{equation}
        \diff [\Psi] (\Pi_{\mathfrak{g}^* \times_G P^\sharp}) = \diff [\Psi] \diff \pi_{\mathfrak{g}^* \times_G P^\sharp} (\Pi_{\mathfrak{g}^* \times P^\sharp}) = \diff \pi_{\ee{T}^* P/G} \diff \Psi (\Pi_{\mathfrak{g}^* \times P^\sharp}) = \Pi_{\ee{T}^*P/G}. \qedhere
    \end{equation}
\end{proof}

The equivalence given by the minimal coupling procedure gives two different ways to understand the action of a Yang-Mills field. Under Weinstein's perspective, the field defines a way to pull back a Hamiltonian function to $\ee{T}^*P$. In this sense, the field ``modifies'' the Hamiltonian function but leaves the Poisson structure unchanged. On the other hand, from Sternberg's viewpoint the Hamiltonian function is ``unchanged'' by the pullback to the space $\mathfrak{g}^* \times P^\sharp$. Rather, the Yang-Mills field modifies the Poisson structure by a magnetic term. The following result describes the induced Poisson structure and is a direct generalization of the expression given in \cite{MontgomeryThesis}.

\begin{proposition}
    Let $\pi\colon P \longrightarrow M$ be an $E$-principal $G$-bundle. Denote by $\lambda_M, \lambda_P$ the canonical Liouville forms in $\ee{T}^*M$ and $\ee{T}^*P$, respectively, and by $\omega_M$ and $\omega_P$ their canonical symplectic forms.
    \begin{enumerate}
        \item We have the equality $(\theta^\dagger)^*\lambda_P = \langle \alpha, \theta^\sharp \rangle$. Here, $\theta^\sharp$ is the pullback connection $\theta^\sharp = \tau_{P \times \mathfrak{g}^*}^* \theta$ and $\langle \alpha, \theta^\sharp \rangle$ is defined as the one-form $\langle \alpha, \theta^\sharp \rangle \in \ee{\Omega}^1(\mathfrak{g}^* \times P^\sharp)$ whose action on elements is $(\alpha, X) \longmapsto \langle\alpha, \theta^\sharp(X) \rangle$.
        \item We have the equality $(\tilde{\pi}^\dagger)^* \lambda_P = (\pi^\sharp)^* \lambda_M$.
        \item The induced Liouville form in $(P \times \mathfrak{g}^*) \oplus P^\sharp$ by the isomorphism $\Psi$ is
        \begin{equation} \label{eq:IndLiouForm}
            \Psi^* \lambda_P = (\pi^\sharp)^* \lambda_M + \langle \alpha, \theta^\sharp \rangle.
        \end{equation}
        As a consequence, the induced symplectic form is expressed as
        \begin{equation} \label{eq:IndSympForm}
            \Psi^* \omega_P = (\pi^\sharp)^* \omega_M + \diff \langle \alpha, \theta^\sharp \rangle. \qedhere
        \end{equation}
    \end{enumerate}
\end{proposition}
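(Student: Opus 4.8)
The plan is to isolate one reproducing property of the $E$-Liouville form and then feed the three maps $\theta^\dagger$, $\tilde{\pi}^\dagger$ and $\Psi$ into it. Starting from the definition \eqref{eq:LiouvForm}, I would first record that for any $E$-map $f\colon N \to \ee{T}^*P$ with base projection $g = \tau_{\ee{T}^*P}\circ f\colon N \to P$, the pullback satisfies
\begin{equation*}
    \langle f^*\lambda_P, X\rangle = \langle f(n), \diff g(X)\rangle \qquad (X \in \ee{T}_n N).
\end{equation*}
This is immediate: along $\diff f(X)$ one has $\tau_{\ee{T}\ee{T}^*P}(\diff f(X)) = f(n)$ and $\diff\tau_{\ee{T}^*P}(\diff f(X)) = \diff(\tau_{\ee{T}^*P}\circ f)(X) = \diff g(X)$, so the displayed formula is just \eqref{eq:LiouvForm} read off $\diff f(X)$. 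The whole proposition then becomes an exercise in evaluating this identity for three particular $E$-maps.

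For item $(1)$ I would take $f = \theta^\dagger\colon P \times \mathfrak{g}^* \to \ee{T}^*P$, whose base projection is the bundle projection $\tau_{P\times\mathfrak{g}^*}$ onto $P$. Unwinding $\theta^\dagger$ as the dual of the connection form, $\langle\theta^\dagger(p,\alpha), v\rangle = \langle\alpha, \theta(v)\rangle$, the reproducing property gives $\langle(\theta^\dagger)^*\lambda_P, X\rangle = \langle\alpha, \theta(\diff\tau_{P\times\mathfrak{g}^*}X)\rangle = \langle\alpha, \theta^\sharp(X)\rangle$, which is exactly $(\theta^\dagger)^*\lambda_P = \langle\alpha,\theta^\sharp\rangle$ once $\theta^\sharp$ is recognised as $\tau_{P\times\mathfrak{g}^*}^*\theta$. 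For item $(2)$ I would take $f = \tilde{\pi}^\dagger$ on $P^\sharp \simeq \pi^*\ee{T}^*M$, again a bundle map over $P$, so its base projection is the map $p^\sharp = \tau_{\ee{T}^*P}\tilde{\pi}^\dagger$ of diagram \eqref{eq:PSharpPllb}, and the reproducing property turns $(\tilde{\pi}^\dagger)^*\lambda_P$ into $\langle\beta, \tilde{\pi}(\diff p^\sharp Y)\rangle$.

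The step needing the most care is matching this last pullback with $(\pi^\sharp)^*\lambda_M$; here the pullback-bundle bookkeeping of diagrams \eqref{eq:PSharpPllb} and \eqref{eq:PllbDiagrams} is essential rather than formal. Writing $\beta = (p,\xi)$ with $\xi \in \ee{T}^*_{\pi(p)}M$ and using $\tau_{\ee{T}^*M}\pi^\sharp = \pi\, p^\sharp$, the reproducing property renders the right-hand side as $\langle\xi, \diff\pi(\diff p^\sharp Y)\rangle$. Since $\diff\pi = \sigma\circ\tilde{\pi}$ and $\sigma$ is the covering map of the pullback bundle, both sides collapse to $\langle\xi, w\rangle$, where $w$ is the $\ee{T}M$-component of $\tilde{\pi}(\diff p^\sharp Y)$; alternatively one invokes remark \ref{rmk:ExtensionPllb} to see the pairing is independent of the chosen representatives. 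This gives $(\tilde{\pi}^\dagger)^*\lambda_P = (\pi^\sharp)^*\lambda_M$. Finally, for item $(3)$ I would use that the splitting isomorphism decomposes a covector as $\Psi(\alpha,\beta) = \theta^\dagger\alpha + \tilde{\pi}^\dagger\beta$, with both summands covering the common projection to $P$; since $\lambda_P$ depends linearly on the cotangent argument, its pullback under a fibrewise sum over a fixed base map is the sum of the two pullbacks, yielding $\Psi^*\lambda_P = (\pi^\sharp)^*\lambda_M + \langle\alpha,\theta^\sharp\rangle$. Applying $\diff$, which commutes with the pullback by the $E$-map property $\diff F^* = F^*\diff$, together with $\diff\lambda_M = \omega_M$ and $\diff\lambda_P = \omega_P$, then produces \eqref{eq:IndSympForm}.
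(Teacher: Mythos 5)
Your proposal is correct and follows essentially the same route as the paper's proof: each item is obtained by expanding the definition of the Liouville form along the relevant vector-bundle morphism, using the commutativity of the pullback diagram \eqref{eq:PSharpPllb} for item (2), the fibrewise decomposition $\Psi(\alpha,\beta) = \theta^\dagger\alpha + \tilde{\pi}^\dagger\beta$ for item (3), and the commutation of pullback with $\diff$ for the symplectic form. The only (cosmetic) difference is that you isolate the reproducing identity $\langle f^*\lambda_P, X\rangle = \langle f(n), \diff g(X)\rangle$ once as a lemma and feed the three maps into it, whereas the paper repeats that same computation inside each item.
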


\begin{proof}
    To prove the first item we show by a direct computation, taking $(p, \alpha) \in P \times \mathfrak{g}^*$ and $X \in \mathrm{T}_\alpha(P \times \mathfrak{g}^*)$, that
    \begin{equation*}
        (\theta^\dagger)^* (\lambda_P)_{(p, \alpha)} (X) = \big\langle (\lambda_P)_{\theta^\sharp(p, \alpha)}, \diff \theta^\dagger X \big\rangle = \langle \theta^\dagger (p, \alpha), \diff \tau_{\ee{T}^*P} \diff \theta^\dagger X \rangle = \langle \alpha, \theta \diff \tau_{\ee{T}^*P} \diff \theta^\dagger X \rangle
    \end{equation*}
    Now, we can use that $\theta^\dagger$ is a vector bundle morphism to get that $\tau_{\ee{T}^*P} \theta^\dagger = \tau_{P \times \mathfrak{g}^*}$. As a consequence, $\theta \diff \tau_{\ee{T}^*P} \diff \theta^\dagger = \tau_{P \times \mathfrak{g}^*}^* \theta$, concluding the proof.
    
    Regarding the second item, we compute both terms separately. Let us choose a point $\beta \in \ee{T}^*_p P$ and a tangent vector $X \in \ee{T}_\beta \ee{T}^*P$. A direct computation using the definition of Liouville form and the conclusion of remark \ref{rmk:ExtensionPllb} yields
    \begin{equation*}
        (\tilde{\pi}^\dagger)^* (\lambda_P)_\beta (X) = \big\langle (\lambda_P)_{\tilde{\pi}^\dagger \beta}, \diff \tilde{\pi}^\dagger X \big\rangle = \langle \tilde{\pi}^\dagger \beta, \diff \tau_{\ee{T}P} \diff \tilde{\pi}^\dagger X \rangle = \langle \pi^\sharp \beta, \diff \pi \diff \tau_{\ee{T}P} \diff \pi^\dagger X \rangle.
    \end{equation*}
    An analogous computation yields
    \begin{equation*}
        (\pi^\sharp)^* (\lambda_M)_\beta (X) = \big\langle (\lambda_M)_{\pi^\sharp \beta}, \diff \pi^\sharp X \big\rangle = \langle \pi^\sharp \beta, \diff \tau_M \diff \pi^\sharp X \rangle.
    \end{equation*}
    The commutativity of diagram \eqref{eq:PSharpPllb} shows that both expresssions agree.
    
    Finally, to prove the third item observe that from the splitting $(P \times \mathfrak{g}^*) \oplus P^\sharp$, any one-form $\alpha$ can be decomposed as $\alpha = (\pi_2^\dagger)^* \alpha + (\pi_1^\dagger)^* \alpha$. Consequently, $\Psi^* \lambda_P = (\Psi \pi_2^\dagger)^* \lambda_P + (\Psi \pi_1^\dagger)^* \lambda_P$ and, as $\Psi \pi_2^\dagger = \tilde{\pi}^\dagger$ and $\Psi \pi_1^\dagger = \theta^\dagger$ from diagram \eqref{eq:DualAtiyahSplit}, we conclude that $\Psi^* \lambda_P = (\pi^\sharp)^* \lambda_P + (\theta^\dagger)^* \lambda_P$. Equation \eqref{eq:IndLiouForm} follows from the first and second items.
    
    Regarding the expression of the induced symplectic form, we only have to use equation \eqref{eq:IndLiouForm} and the commutativity of the pullback of an $E$-map with the exterior differential, as was proved in remark \ref{rmk:ExtensionPllb}. As all the maps in diagram \eqref{eq:DualAtiyahSplit} are $E$-manifolds, we obtain
    \begin{equation*}
        \Psi^* \omega_P = \Psi^* \diff \lambda_P = \diff \Psi^* \lambda_P = \diff (\pi^\sharp)^* \lambda_M + \diff \langle \alpha, \theta^\sharp\rangle = (\pi^\sharp)^* \omega_M + \diff \langle \alpha, \theta^\sharp \rangle. \qedhere
    \end{equation*}
\end{proof}


We show now the minimal coupling procedure in some degenerate instances arising from physical considerations. We also use the induced Poisson structure to compute Wong's equations of motion.

\begin{example}[General minimal coupling for $b$-manifolds] \label{ex:BMinCoupGen}
    Consider a $b$-manifold $(M, Z)$ and a principal $G$-bundle $\pi\colon P \longrightarrow M$. We take now a local chart $(U, \varphi)$ adapted to $Z$ with coordinates $q_1, \ldots, q_n$. We construct natural induced coordinates $\vec{q}, \vec{v}$ in the $b$-tangent bundle $\bi{T}M$. Consider the local trivialization $V = \pi^{-1}(U) \simeq U \times G$ of $P$. Taking the identification $\mathrm{T}G \simeq G \times \mathfrak{g}$ by right-invariant vector fields, we have that $\bi{T}V \simeq \bi{T}U \times G \times \mathfrak{g}$, inducing coordinates $(\vec{q}, \vec{v}, g, \vec{Q})$. The tangent lift $\diff l_\bullet$ is expressed as $\diff l_h (\vec{q}, \vec{v}, g, \vec{Q}) = (\vec{q}, \vec{v}, l_h g, \operatorname{Ad}_g \vec{Q})$. As a consequence, a local expression of the $b$-cotangent bundle is $\bi{T}^* V \simeq \bi{T}^* U \times G \times \mathfrak{g}^*$ and we have local coordinates $(\vec{q}, \vec{p}, g, \vec{O})$. The cotangent lift in these coordinates is $\hat{l}_h(\vec{q}, \vec{p}, g, \vec{O}) = (\vec{q}, \vec{p}, l_h g, \operatorname{Ad}^*_g \vec{O})$. The reduced space is $\bi{T}^*V/G \simeq \bi{T}^*U \times \mathfrak{g}^*$.
    
    The set of coordinates $(U, \varphi)$ also induces a local trivialization $P^\sharp_U \simeq \bi{T}^* U \times G$. Therefore, the adjoint bundle is locally expressed as $P^\sharp_U \times_G \mathfrak{g}^* = \bi{T}^* U \times \mathfrak{g}^*$ and we have coordinates $(\vec{q}, \vec{p}, \vec{Q})$ as used in $\ee{T}V/G$.
    
    
    A connection can be locally specified, by the splitting lemma, as a linear map $h\colon \bi{T} U \longrightarrow \bi{T}V/G$ such that $\diff \pi h = \id_{\bi{T}U}$. In coordinates $(\vec{q}, \vec{v})$ of $\bi{T}U$ and coordinates $(\vec{q}, \vec{v}, \vec{Q})$ of $\bi{T}V/G$ the most general expression for such a map is $h(\vec{q}, \vec{v}) = (\vec{q}, \vec{v}, \vec{Q} + A\cdot \vec{v})$. Here, $A_{i}^j$ are smooth functions of $\vec{q}$ and $\vec{v}$. As a consequence, the splitting of the Atiyah sequence has local expression $[\Phi](\vec{q}, \vec{v}, \vec{Q}) = (\vec{q}, \vec{v}, \vec{Q} + A \cdot \vec{v})$. In particular, $\langle \partial_{v_i}^*, \diff [\Phi] \partial_{v_j} \rangle = \delta_{ij}$, $\langle \partial_{v_i}^*, \diff [\Phi] \partial_{Q_j} \rangle = A_{i}^j$, $\langle \partial_{Q_i}^*, \diff [\Phi] \partial_{v_j} \rangle = 0$, $\langle \partial_{Q_i}^*, \diff [\Phi] \partial_{Q_j} \rangle = \delta_{ij}$. As a consequence, the minimal coupling is given as $[\Psi](\vec{q}, \vec{p}, \vec{O}) = (\vec{q}, \vec{p} + A \cdot \vec{O}, \vec{O})$.
    
    The local expression of the canonical Poisson structure in $\bi{T}^*P/G$ is
    \begin{equation*}
        \Pi_{\ee{T}^*P/G} = q_1 \frac{\partial}{\partial p_1} \wedge \frac{\partial}{\partial q_1} + \sum_{i = 2}^n \frac{\partial}{\partial p_i} \wedge \frac{\partial}{\partial q_i} + \frac{1}{2} \sum_{i,j,k = 1}^n O_k c_{ij}^k \frac{\partial}{\partial O_i} \wedge \frac{\partial}{\partial O_j}.
    \end{equation*}
    From the expression of $\Psi$ in local coordinates and the fact that it is a Poisson map, the induced structure is
    \begin{align*}
        [\Psi]^* \Pi_{\ee{T}^*P/G} &= q_1 \frac{\partial}{\partial p_1} \wedge \frac{\partial}{\partial q_1} + \sum_{i = 2}^n \frac{\partial}{\partial p_i} \wedge \frac{\partial}{\partial q_i} + \frac{1}{2} \sum_{i,j,k = 1}^n O_k F_{ij}^k \frac{\partial}{\partial p_i} \wedge \frac{\partial}{\partial p_j} \\
        & {} \qquad - \frac{1}{2} \sum_{i,j,k,l = 1}^n O_l c_{jk}^l A_i^k \frac{\partial}{\partial p_i} \wedge \frac{\partial}{\partial O_j} + \frac{1}{2} \sum_{i,j,k  =1}^n O_k c^k_{ij} \frac{\partial}{\partial O_i} \wedge \frac{\partial}{\partial O_j}.
    \end{align*}
    This example extends the computations of Montgomery~\cite{MontgomeryThesis} to $b$-manifolds and generalizes the contents of \cite{BraddellKiesenhoferMiranda}.
    
    We can now explicitly compute Wong's equations in local coordinates $(\vec{q}, \vec{p}, \vec{O})$. The pullback of a Hamiltonian function in coordinates $(\vec{q}, \vec{p})$ remains unchanged. Therefore, it has no explicit dependence on $\vec{O}$. A direct computation contracting with the Poisson structure $[\Psi]^* \Pi_{\ee{T}^*P/G}$ shows that
    \begin{align*}
        \dot{q}_1 &= q_1 \frac{\partial H}{\partial p_1}, & \dot{p}_1 &= - q_1 \frac{\partial H}{\partial q_1} + \sum_{j, k, l = 1}^n O_k F_{j1}^k \frac{\partial H}{\partial p_j}, & \dot{O}_i &= \sum_{j, k, l = 1}^n O_l c_{ik}^l A_{j}^k \frac{\partial H}{\partial p_j}, \\
        \dot{q}_i &= \frac{\partial H}{\partial p_i}, & \dot{p}_i &= - \frac{\partial H}{\partial q_i} + \sum_{j, k, l = 1}^n O_k F_{ji}^k \frac{\partial H}{\partial p_j},
    \end{align*}
    with $i = 2, \ldots, n$. We can see the explicit contribution of the boundary in the evolution equations for $q_1$ and $p_1$. Specifically, if $q_1$ = 0, then the point always remains on the boundary of the system, although its conjugated momentum can change due to the interaction with the Yang-Mills field.
\end{example}

\begin{example}[Compactified black hole] \label{ex:penroseblackhole}
Following Sternberg's original motivation, we will compute the minimal coupling of electromagnetism in general relativity. As our singular model, we have chosen  the compactification due to Penrose of a Schwarzschild's solution to a non-rotating black hole. Below we describe its construction, the obtained coordinates and exhibit a new singular structure associated to it (which we already announced in Section 1).

Recall that Schwarzschild's metric can be written in spherical coordinates $(t, r, \theta, \varphi)$ as:
\begin{equation*}
    g = - \bigg( 1 - \frac{2M}{r} \bigg) \diff t^2 + \bigg( 1 - \frac{2M}{r} \bigg)^{-1} \diff r^2 + r^2 \diff \Omega^2,
\end{equation*}
with  $\diff \Omega^2 = \diff \varphi^2 + \sin^2 \varphi \diff \theta^2$. The coordinate $r$ is only valid in the range $2M < r < + \infty$. We will consider the metric $g_\perp$, where the term containing $\diff \Omega^2$ is dropped. Denote by $h(r) = 1 - \frac{2M}{r}$ and introduce now the change of coordinates $v = t + r$ and $w = t - r$ to obtain:
\begin{equation*}
    g = \frac{1}{4} \Big( \frac{1}{h} - h \Big) \diff v^2 - \frac{1}{2} \Big( \frac{1}{h} + h \Big) \diff v \diff w + \frac{1}{4} \Big( \frac{1}{h} - h \Big) \diff v^2.
\end{equation*}
Observe that the condition $r \geqslant 0$ is equivalent to $v \geqslant w + 4M$.

Recall that the compactification of the configuration space by defining $\alpha = \arctan v$ y $\beta = \arctan w$. The range of both coordinates is $- \pi/2 \leqslant \alpha \leqslant \pi/2$ and $-\pi/2 \leqslant \beta \leqslant \pi/2$. Equality can be attained as we are modelling the compactified space as a manifold with corners. The condition $v \geqslant w + 4M$ is equivalent to $\tan \alpha \geqslant \tan \beta + 4M$. The region spanned by the coordinates $\alpha, \beta$ will be denoted by $N$. In these new coordinates, the metric becomes singular and reads
\begin{equation*}
    g_\perp = g = \frac{1}{4} \Big( \frac{1}{h} - h \Big) \sec^4 \alpha \diff \alpha^2 - \frac{1}{2} \Big( \frac{1}{h} + h \Big) \sec^2 \alpha \sec^2 \beta \diff \alpha \diff \beta + \frac{1}{4} \Big( \frac{1}{h} - h \Big) \sec^4 \beta \diff \beta^2.
\end{equation*}


The secant function blows up quadratically at $\alpha, \beta = \pi/2$ and the $E$-structure  is given by the set of vector fields which vanish quadratically at the boundary, described locally by values neighbouring $\alpha = \pi/2$. We will now compute the musical isomorphism induced by $g$ (showing, as a consequence, that it is non-degenerate) and the induced kinetic energy.

Let us consider a sufficiently small open chart $U \subset M$ centered around $\alpha_0 = \pi / 2$ and $\beta_0 \neq \pi/2$ with coordinates $\alpha, \beta$ satisfying $\beta \neq \pi/2$ at $U$. In this setting, the fields vanishing quadratically at the boundary (an early example of $E$-fields) admit as local generators $\alpha^2 \frac{\partial}{\partial \alpha}$ and $\frac{\partial}{\partial \beta}$ and the dual basis is written as $\frac{1}{\alpha^2} \diff \alpha, \diff \beta$. The metric can be written as:
\begin{equation*}
    g_\perp = \frac{1}{4} \Big( \frac{1}{h} - h \Big) \csc^4 \alpha \diff \alpha^2 - \frac{1}{2} \Big( \frac{1}{h} + h \Big) \csc^2 \alpha \sec^2 \beta \diff \alpha \diff \beta + \frac{1}{4} \Big( \frac{1}{h} - h \Big) \sec^4 \beta \diff \beta^2.
\end{equation*}
In the set of local generators of $E$-fields, the matrix of the metric and the inverse matrix is given by
\begin{align*}
    M_g &= \frac{1}{4} \begin{pmatrix}
        (h^{-1} - h) \alpha^4 \csc^4 \alpha & -(h^{-1} + h) \alpha^2 \csc^2 \alpha \sec^2 \beta \\ - (h^{-1} + h) \alpha^2 \csc^2 \alpha \sec^2 \beta & (h^{-1} - h) \sec^4 \beta
    \end{pmatrix} \\
    M_g^{-1} &= - \frac{4 \sin^4 \alpha \cos^4 \beta}{\alpha^4} \begin{pmatrix}
        (h^{-1} - h) \sec^4 \beta & (h^{-1} + h) \alpha^2 \csc^2 \alpha \sec^2 \beta \\ (h^{-1} + h) \alpha^2 \csc^2 \alpha \sec^2 \beta & (h^{-1} - h) \alpha^4 \csc^4 \alpha
    \end{pmatrix}.
\end{align*}
As $\lim_{\alpha \rightarrow 0} h = 1$ and $\alpha \csc \alpha$ can be extended to a smooth function around $\alpha = 0$, the metric is smooth and so is the inverse matrix. The induced kinetic, which is computed in natural coordinates $p_\alpha$ and $p_\beta$ in the $E$-cotangent bundle associated to the sections $\frac{1}{\alpha^2} \diff \alpha, \diff \beta$ as
\begin{align*}
    (g^\sharp)^*K(\alpha, \beta, p_\alpha, p_{\beta}) &= - 4 \frac{\sin^4 \alpha \cos^4 \beta}{\alpha^4} \Big( (h^{-1} - h) \sec^4 \beta p_\alpha^2 + 2 (h^{-1} + h) \alpha^2 \csc^2 \alpha \sec^2 \beta p_\alpha p_\beta \\
    & \qquad\qquad\qquad\qquad\qquad\qquad\qquad\qquad\qquad\qquad\qquad + (h^{-1} - h) \alpha^4 \csc^4 \alpha p_\beta^2 \Big),
\end{align*}
is also smooth.


After these computations, we can compute Wong's equations for an electromagnetic field in these coordinates. As $N$ is simply connected, every fibre bundle is trivial and, therefore, the principal $\mathrm{U}(1)$-bundle which describes the theory is $N \times \mathrm{U}(1)$. As $\mathrm{U}(1)$ is one-dimensional, the connection is completely specified by a pair of functions $A_1(\alpha, \beta)$ and $A_2(\alpha, \beta)$. Moreover, the structure constants $c_{ij}^k$ vanish. The minimal coupling of this system with $A_1(\alpha, \beta), A_2(\alpha, \beta)$ gives Wong's equations as in example \ref{ex:BMinCoupGen}.
\end{example}



In the physics terminology, the sections of the associated bundle $P \times_G \mathfrak{g}^*$ are referred to as \emph{matter fields}. Understanding a section of a bundle as a kind of topological generalization of a function, the choice of a matter field amounts to specifying a given element of $\mathfrak{g}^*$ for each space-time event $m \in M$. As elements of $\mathfrak{g}^*$ as the generalization of electric charges, a matter field corresponds to a density of charges on the space-time $M$. Our framework allows us to rigorously define the notion of such matter fields in manifolds with boundary, manifolds with corners, and manifolds with additional singularities, as well as studying their evolution under Wong's equations of motion.

In the same spirit, we can see the effect of Marsden-Weinstein's reduction by direct observation of Wong's equations. The charges of the model do not have conjugated momenta, contrasting with the classical position variables in the base manifold $M$. The quotient by gauge transformations identifies the momenta for internal charges and completely determines the evolution of the charge by coupling it with the canonical momenta.

\subsection*{Data availability statement}
Data sharing not applicable to this article as no datasets were generated or analysed during the current study.

\bibliographystyle{alpha}
\bibliography{bibliography}

\nocite{*}

\end{document}